\newcommand{\E}{{\bf E}}
\newcommand{\ignore}[1]{}
\begin{document}
\title{On Detecting Some Defective Items in\\ Group Testing}
%
%
\author{Nader H. Bshouty\orcidID{0009-0007-7356-7824} \and \\
Catherine A. Haddad-Zaknoon\orcidID{0009-0008-1503-594X}}
\authorrunning{N. H. Bshouty and C. A. Haddad-Zaknoon}
%
\institute{Technion - Israel Institute of Technology, Haifa, Israel
\email{\{bshouty,catherine\}@cs.technion.ac.il}}
\maketitle              
\begin{abstract}

Group testing is an approach aimed at identifying up to $d$ defective items among a total of $n$ elements. This is accomplished by examining subsets to determine if at least one defective item is present. In our study, we focus on the problem of identifying a subset of $\ell\leq d$ defective items. We develop upper and lower bounds on the number of tests required to detect $\ell$ defective items in both the adaptive and non-adaptive settings while considering scenarios where no prior knowledge of $d$ is available, and situations where an estimate of $d$ or at least some non-trivial upper bound on $d$ is available. 

\ignore{When no prior knowledge on $d$ is available, we prove a lower bound of $
\Omega(\frac{\ell \log^2n}{\log \ell +\log\log n})$ tests in the randomized non-adaptive settings and an upper bound of $O(\ell \log^2 n)$ for the same settings. Moreover, we show that any non-adaptive deterministic algorithm must ask $\Theta(n)$ tests. For adaptive algorithms, we prove a tight bound of $\Theta(\ell\log{(n/\ell)})$ for the deterministic case, and $\Theta(\ell\log{(n/d)})$ for the randomized settings.}

When no prior knowledge on $d$ is available, we prove a lower bound of $
\Omega(\frac{\ell \log^2n}{\log \ell +\log\log n})$ tests in the randomized non-adaptive settings and an upper bound of $O(\ell \log^2 n)$ for the same settings. Furthermore, we demonstrate that any non-adaptive deterministic algorithm must ask $\Theta(n)$ tests, signifying a fundamental limitation in this scenario. For adaptive algorithms, we establish tight bounds in different scenarios. In the deterministic case, we prove a tight bound of $\Theta(\ell\log{(n/\ell)})$. Moreover,  in the randomized settings, we derive a tight bound of $\Theta(\ell\log{(n/d)})$.

When $d$, or at least some non-trivial estimate of $d$, is known, we prove a tight bound of $\Theta(d\log (n/d))$ for the deterministic non-adaptive settings, and $\Theta(\ell\log(n/d))$ for the randomized non-adaptive settings. In the adaptive case, we present an upper bound of $O(\ell \log (n/\ell))$ for the deterministic settings, and a lower bound of $\Omega(\ell\log(n/d)+\log n)$. Additionally, we establish a tight bound of $\Theta(\ell \log(n/d))$ for the randomized adaptive settings.

\keywords{Group testing  \and Pooling design \and Finding defectives partially}
\end{abstract}

\section{Introduction}
Group testing is a technique for identifying a subset of items known as \emph{defective items set} within a large amount of items using small number of tests called \emph{group tests}. A group test is a subset of items, where the test result is \emph{positive} if the subset contains at least one defective item and \emph{negative} otherwise. Formally, let $X=\{1,2,\ldots,n\}$ be a set of items, and  $I \subseteq X$ is the set of defectives. A group test is set $Q \subseteq X$. The answer of the test $Q$ with respect to the defective set $I$ is $1$ if $Q \cap I \not=\emptyset$, and $0$ otherwise.
Throughout the paper, we denote the number of defective items by $d$ and the number of items by $n:=|X|$.

Group testing was formerly purposed by Robert Dorfman \cite{D43}, for economizing mass blood testing during WWII. Since it was initially proposed, group testing methods have been utilized in a variety of applications including DNA library screening, product quality control and neural group testing for accelerating deep learning \cite{DH00,DH06,SG59,KS64,JW85,Li62,HL02,CM05,LiJZ21}. Among its recent applications, group testing has been advocated for accelerating mass testing for COVID-19 PCR-based tests around the world ~\cite{EHWB20,CSPL20,BENAMI2020,GC20,MCR20,SHGOY20,YK20,CH22}.

Several settings for group testing has been developed over the years. 
The distinction between \emph{ adaptive} and \emph{non-adaptive} algorithms is widely considered. In {\it adaptive} algorithms, the tests can depend on the answers to the previous ones. In the non-adaptive algorithms, they are independent of the previous one and; therefore, one can make all the tests in one parallel step. An $r-$round algorithm is an intermediate approach. We say that an adaptive algorithm is an \emph{$r$-round } if it runs in $r$ stages where each stage is non-adaptive. That is, the queries may depend on the answers of the queries in previous stages, but are independent of the answers of the current stage queries. 

Unlike conventional group testing, we consider the problem of finding only a subset of size $\ell\leq d$ from the $d$ defective items. In~\cite{ACL12}, the authors solve this problem for $\ell=1$ in the adaptive deterministic settings. They prove a tight bound of $\log{n}$ tests. For general $\ell$, they prove an upper bound of $O(\ell\log{n})$. Both results are derived under the assumption that $d$ is known exactly to the algorithm. When no prior knowledge on $d$ is available, Katona,~\cite{KATONA}, proves that for the deterministic non-adaptive settings, finding a single defective item (i.e. $\ell=1$) requires $n$ tests. For $\ell=1$ in the adaptive deterministic settings, however, Katona proves a tight bound of $\Theta(\log n)$ tests, and for deterministic $2-$round settings (and $\ell=1$) a tight bound of $\Theta\left (\sqrt{n}\right )$ is given. Gerbner and Vizer, ~\cite{GV16}, generalized Katona's result for the deterministic $r-$round settings for all $\ell\geq 1$. They give a lower bound of $\Omega\left (r(\ell n)^{1/r}-r\ell\right )$ and an upper bound of $O\left (r{(\ell^{r-1}n)^{1/r}}\right)$.

In this paper, we  study the test complexity of this problem in adaptive, non-adaptive, randomized and deterministic settings. We establish lower and upper bounds for the scenario where no prior knowledge of the value of $d$ is available. Moreover, we study the same problem when there is either an estimation or at least an upper bound on the value of $d$. In the literature, some results assume that $d$ is known {\it exactly}, or some upper bound on the number of defective items is known in advance to the algorithm. In practice, only an estimate of $d$ is known. In this paper, when we say that $d$ is known in advance to the algorithm, we assume that some estimate $D$ that satisfies $d/4\le D\le 4d$ is known to the algorithm. We will also assume that $\ell\le d/4$. Otherwise, use the algorithm that detects all the defective items. Our results are summarized in the following subsection.
\subsection{Detecting $\ell$ Defective Items from $d$ Defective Items}

The results are in the Table in Figure~\ref{Table2}. The results marked with $\star$ are the most challenging results of this paper.
\begin{figure}[h]
\centering
{\includegraphics[trim={0 1cm 0 0}, width=0.9\textwidth]{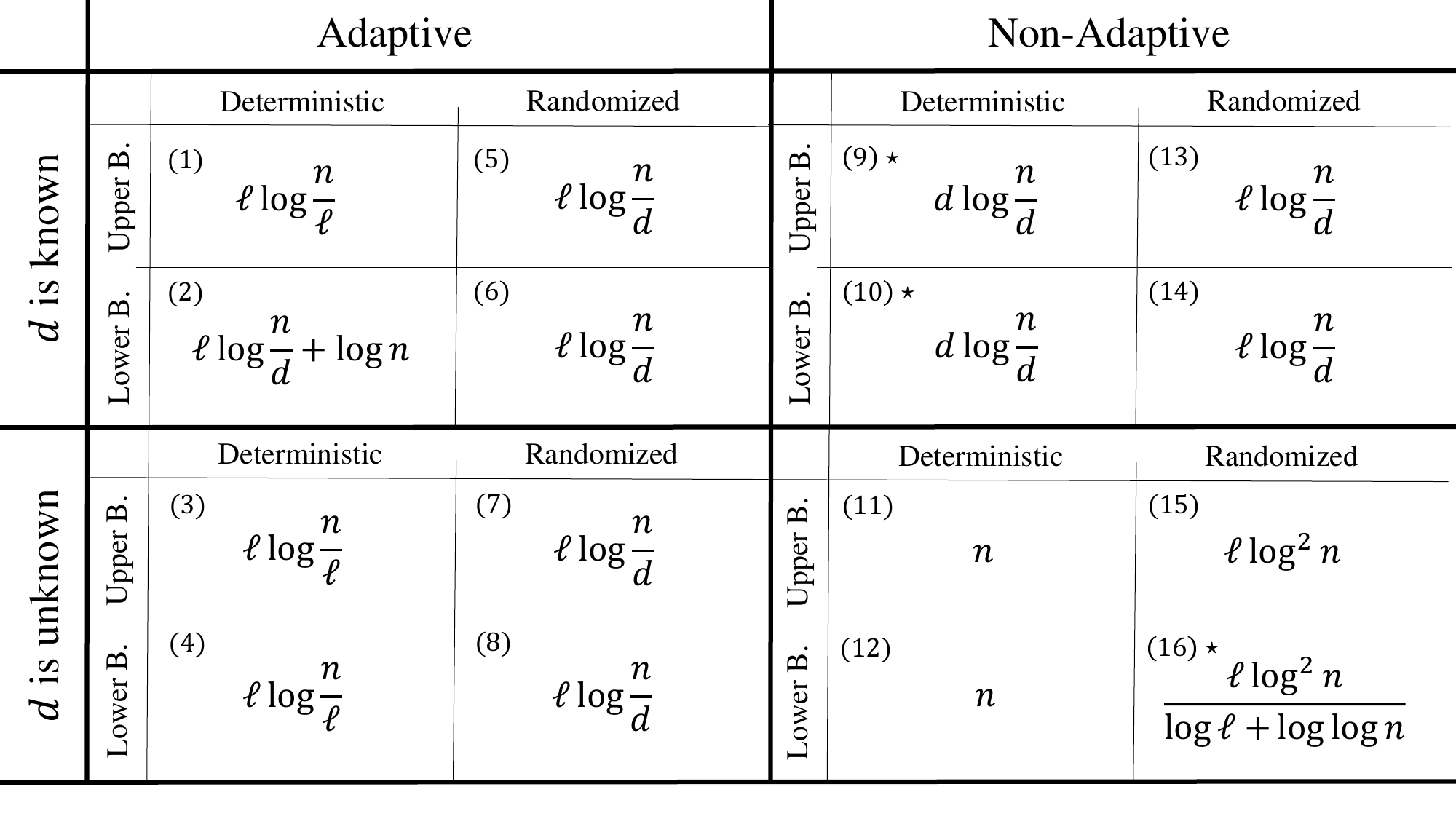}}
\caption{Results of detecting $\ell$ from $d$ defective items. In (1) and (2), the bounds are asymptotically tight when there is a constant $c<1$, such that $d\le n^{c}$. The results marked with $\star$ are the most difficult and challenging results of this paper.}\label{Table2}
\end{figure}
\begin{itemize}
    \item In (1) and (2), (in the table in Figure~\ref{Table2}) the algorithm is deterministic adaptive, and $d$ is known in advance to the algorithm. The upper bound is $\ell\log(n/\ell)+O(\ell)$. The algorithm  splits the items into equal sizes $\ell$ disjoint sets and uses binary search to detect $\ell$ defective items in the sets. 

    The lower bound is $\max(\ell\log(n/d),\log n)$. In what follows, when we say input, we mean $I\subset [n]$, $|I|=d$ (the defective items)\footnote{A lower bound for the number of tests when the algorithm knows exactly $d$, is also a lower bound when the algorithm knows some estimate of $d$ or does know $d$.} and, when we say output, we mean a subset $L\subset I$ of size $\ell$. A set of $\ell$ items can be an output of at most ${n-\ell\choose d}$ inputs. This gives a lower bound for the number of outputs of the algorithm, which, in turn, (its $\log$) gives a lower bound $\ell\log(n/d)-\ell$ for the test complexity.  For the lower bound $\log n$, we show that if the number of possible outputs of the algorithm is less than $n-d$, then one can construct a size $d$ input $I$ that contains no output. Therefore, the test complexity is at least $\log(n-d)$. 

    \item In (3) and (4), the algorithm is deterministic adaptive, and $d$ is unknown to the algorithm. The upper bound is $\ell\log(n/\ell)+O(\ell)$. The algorithm in (1) also works when $d$ is unknown to the algorithm. The lower bound follows from (2) when we choose $d=4\ell$.
    \item In (5) and (6), the algorithm is randomized adaptive, and $d$ is known in advance to the algorithm.  The upper bound is $\ell\log(n/d)+O(\ell)$. The algorithm uniformly at random chooses each element in $X$ with probability $O(\ell/d)$ and puts the items in $X'$.  We show that, with high probability, $X'$ contains $\ell$ defective items and $|X'|=O(n\ell/d)$. Then, the algorithm deterministically detects $\ell$ defective items in $X'$ using the algorithm of (3). This gives the result.
    
     For the lower bound, $\ell\log(n/d)-1$, we use the same argument as in the proof of (2) in Figure~\ref{Table2} with Yao's principle. 
    \item In (7) and (8), the algorithm is randomized adaptive, and $d$ is unknown to the algorithm. The upper bound is $\ell\log(n/\ell)+O(\ell+\log\log(\min(n/d,d)))=O(\ell\log(n/\ell))$. We first give a new algorithm that estimates $d$ that uses $\log\log(\min(n/d,d))$ tests and then use the algorithm in (5). The lower bound follows from (4).
    \item In (9) and (10), the algorithm is deterministic non-adaptive, and $d$ is known in advance to the algorithm.

    For the upper bound, we first define the $(2d,d+\ell)$-restricted weight one $t\times n$-matrix. This is a $t\times n$ 0-1-matrix such that any set of $2d$ columns contains at least $d+\ell$ distinct weight one vectors. Using this matrix, we show how to detect $\ell$ defective items with $t$ tests. Then we show that there is such a matrix with  $t=O(d\log (n/d))$ rows. 

    We then give a tight lower bound. We show that such construction is not avoidable. From any non-adaptive algorithm, one can construct a $(2d,1)$-restricted weight one $t\times n$-matrix. We then show that such a matrix must have at least $t=\Omega(d\log (n/d))$ rows. 
    \item In (11) and (12), the algorithm is deterministic non-adaptive, and $d$ is unknown to the algorithm.

    We show that any such algorithm must test all the items individually. 
    \item In (13) and (14), the algorithm is randomized non-adaptive, and $d$ is known in advance to the algorithm. 

    The upper bound is $O(\ell\log(n/d))$. The algorithm runs $t=O(\ell)$ parallel iterations. At each iteration, it uniformly at random chooses each element in $X=[n]$ with probability $O(1/D)$ and puts it in $X'$. With constant probability, $X'$ contains one defective item. Then it uses the algorithm that tests if $X'$ contains exactly one defective item and detects it. 
    
    The lower bound follows from (6). 
    \item In (15) and (16), the algorithm is randomized non-adaptive, and $d$ is unknown to the algorithm.

    The upper bound is $O(\ell\log^2 n)$. The algorithm runs the non-adaptive algorithm that gives an estimation $d/2\le D\le 2d$ of $d$  \cite{Bshouty19,DamaschkeM12,FalahatgarJOPS16}, and, in parallel, it runs $\log n$ randomized non-adaptive algorithms that find $\ell$ defective items  assuming $D=2^i$ for all $i=1,2,\ldots,\log n$ (the algorithm in (13)). 

    The lower bound is $\tilde \Omega(\ell\log^2n)$. The idea of the proof is the following. Suppose a randomized non-adaptive algorithm exists that makes $\ell\log^2n/(c\log R)$ tests where $R=\ell\log n$ and $c$ is a large constant. We partition the internal $[0,n]$ of all the 
    possible sizes of the tests $|Q|$ into $r=\Theta(\log n/\log R)$ disjoint sets $N_i=\{m|n/R^{8i+8}<m\le n/R^{8i}\}$, $i\in [r]$. We then show that, with high probability, there is an interval $N_j$ where the algorithm makes at most $(\ell/c)\log n$ tests $Q$ that satisfy $|Q|\in N_j$. We then show that if we choose uniformly at random a set of defective items $I$ of size $d=(\ell\log n)^{8j+4}$, then with high probability, all the tests $Q$ of size $|Q|<n/(\ell\log n)^{8j+8}$ give answers $0$, and all the tests $Q$ of size $|Q|>n/(\ell\log n)^{8j}$ give answers $1$. So, the only useful tests are those in $N_j$, which, by the lower bound in (14) (and some manipulation), are insufficient for detecting $\ell$ defective items. 
\end{itemize}

\subsection{Known Results for Detecting all the Defective Items}
The following results are known for detecting all the $d$ defective items. See the Table in Figure~\ref{Table1}.

\begin{figure}[h]
\centering
{\includegraphics[trim={0 1cm 0 0}, width=0.9\textwidth]{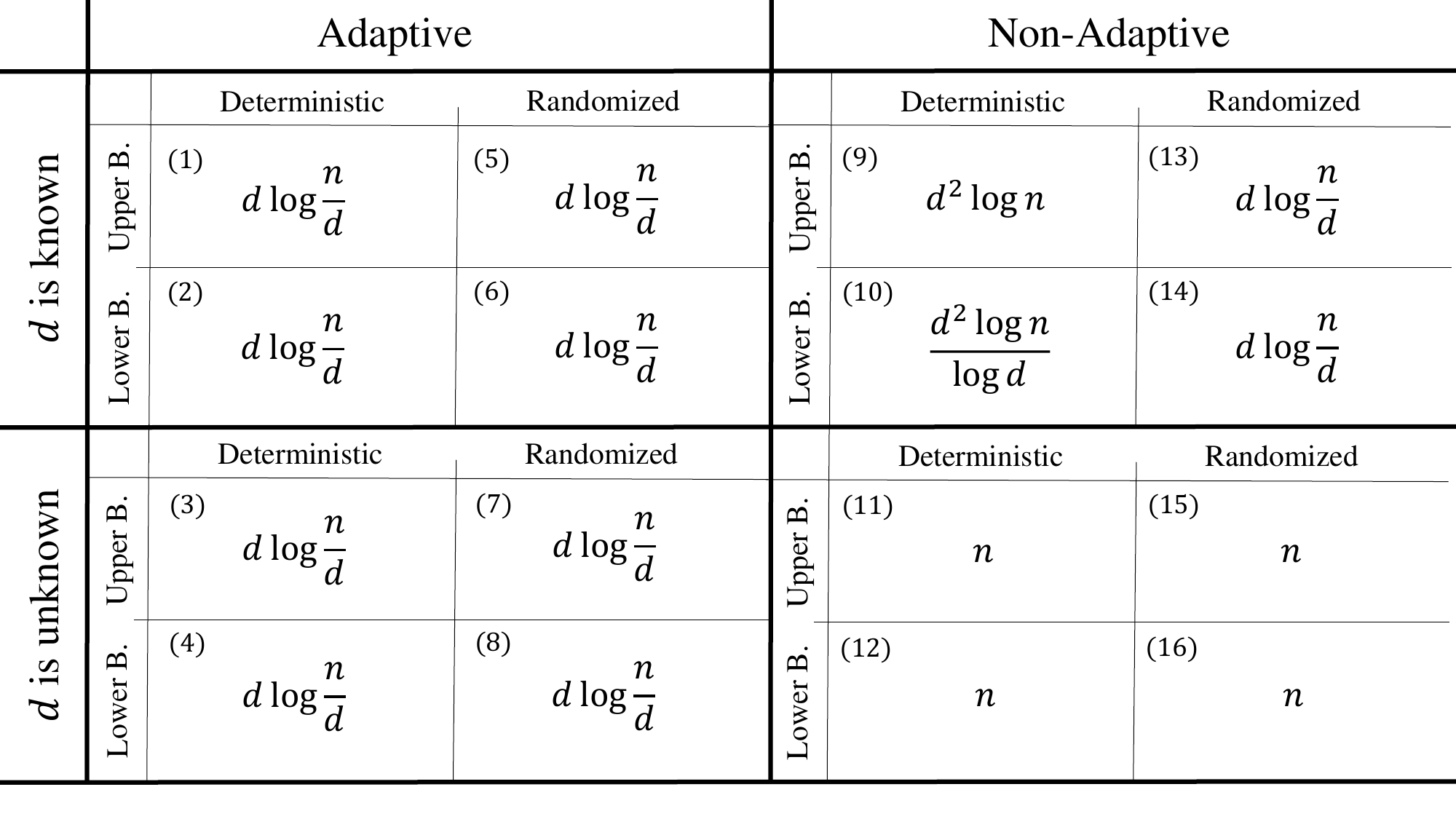}}
\caption{Results for the test complexity of detecting the $d$ defective items. The lower bounds are in the $\Omega$-symbol and the upper bound are in the $O$-symbol}\label{Table1}
\end{figure}
\begin{itemize}
\item In (1) and (2) (in the table in Figure~\ref{Table1}), the algorithm is deterministic adaptive, and $d$ is known in advance to the algorithm. 
The best lower bound is the information-theoretic lower bound $\log {n\choose d}\ge d\log(n/d)+\Omega(d)$. Hwang in~\cite{Hwang72} gives a generalized binary splitting algorithm that makes $\log {n\choose d}+d-1=d\log(n/d)+O(d)$ tests. 
\item In (3) and (4), the algorithm is deterministic adaptive, and $d$ is unknown to the algorithm.  The upper bound $d\log(n/d)+O(d)$ follows from~\cite{Bar-NoyHKK94,ChengDX14,DuH93,DuP94,DuXSC94,SchlaghoffT05,WuCD22} and the best constant currently known in $O(d)$ is $5-\log 5\approx 2.678$~\cite{WuCD22}. The lower bound follows from (2). In \cite{BshoutyBHHKS18}, Bshouty et al. show that estimating the number of defective items within a constant factor requires at least $\Omega(d\log(n/d))$ tests.
\item 
In (5) and (6), the algorithm is randomized adaptive, and $d$ is known in advance. The upper bound follows from (1). The lower bound follows from Yao's principle with the information-theoretic lower bound. 
\item In (7) and (8), the algorithm is randomized adaptive, and $d$ is unknown to the algorithm. The upper bound $d\log(n/d)+O(d)$ follows from (3). The lower bound follows from (6). 
\item In (9) and (10), the algorithm is deterministic non-adaptive, and $d$ is known in advance to the algorithm. The lower bound  $\Omega(d^2\log n/\log d)$ is proved in~\cite{ChenH07a,DyachkovR82,Furedi96a,Ruszinko94}. A polynomial time algorithm that constructs  an algorithm that makes $O(d^2\log n)$ tests was first given by Porat and Rothschild~\cite{PoratR11}.
\item In (11) and (12), the algorithm is deterministic non-adaptive and $d$ is unknown to the algorithm. In \cite{Bshouty19}, Bshouty shows that estimating the number of defective items within a constant factor requires at least $\Omega(n)$ tests. The upper bound is the trivial bound of testing all the items individually.
\item In (13) and (14), the algorithm is randomized non-adaptive, and $d$ is known in advance to the algorithm. The lower bound follows from (6). The upper bound is $O(d\log (n/d))$. The constant in the O-symbol was studied in~\cite{BaldingBTK96,BshoutyDKS17,BshoutyHH20,DuH06} and referenced within. The best constant known in the $O$-symbol is $\log e\approx 1.443$~\cite{BshoutyHH20}.

\item In (15) and (16), the algorithm is randomized non-adaptive, and $d$ is unknown to the algorithm. The lower bound $\Omega(n)$ follows from Yao's principle and the fact that, for a random uniform $i\in[n]$, to detect the defective items $[n]\backslash \{i\}$, we need at least $\Omega(n)$ tests. The upper bound is the trivial bound of testing all the items individually.
\end{itemize}

\subsection{Applications}

In many cases, the detection of a specific number of defective items, $\ell$,  is of utmost importance due to system limitations or operational requirements. For instance, in scenarios like blood tests or medical facilities with limited resources such as ventilators, doctors, beds, or medicine supply, it becomes crucial to employ algorithms that can precisely identify $\ell$ defectives instead of detecting all potential cases. This targeted approach offers significant advantages in terms of efficiency, as the time required to detect only $\ell$ defective items is generally much shorter than the time needed to identify all defects. By focusing on any subset of $\ell$ defectives, the algorithms proposed in this paper offer more efficient procedures.

In the following, we present some real-world applications that demonstrate the practical use of the problem of finding only a subset of size $\ell\leq d$ of the defective set items.
\subsubsection{Identifying a subset of samples that exhibit a PCR-detectable syndrome}

 \emph{Polymerase Chain Reaction} or \emph{PCR}  testing is a widely used laboratory technique in molecular biology. This technique is used to amplify specific segments of DNA or RNA in a sample, and therefore, allowing for detection, quantification and analyses of these specific genetic sequences~\cite{YK20,DuH06,CH22}. PCR tests can be designed to identify various organisms, including pathogens such as viruses or bacteria (e.g. COVID-19), by targeting their unique DNA or RNA signatures.
 Although PCR tests are associated with high costs and time consumption, they are extensively utilized in a wide range of fields, including medical diagnostics, research laboratories, forensic analysis, and other applications that demand accurate and sensitive detection of genetic material. This popularity is primarily attributed to their exceptional accuracy. To enhance the efficiency and cost-effectiveness of PCR testing, group testing methodologies can be applied to PCR testing. Applying group testing to PCR involves combining multiple samples into a single test sample. The combined sample, also called the \emph{group test}, is then examined. If the sample screening indicates an infectious sample, this implies that at least one of the original samples is infected. Conversely, if none of the samples in the combined sample exhibit signs of infection, then none of the individual samples are infected. Typically, PCR tests are conducted by specialized machines capable of simultaneously performing approximately $96$ tests. Each test-run  can span over several hours. Therefore, when applying group testing to accelerate PCR process, it is recommended to employ non-adaptive methodologies. 
 
Assume that a scientific experiment need to be conducted over a group of study participants to examine the efficiency of a new drug developed for medicating some disease related to bacterial or virus infection. Suppose that a PCR  test is required to check whether a participant is affected by the disease or not. Moreover, assume that the number of the  participants that volunteered for the experiment is $n$ and the incidence rate of the infection among them is known in advance, denote that by $p$. Therefore, an approximation of the number of infected participants can be derived from $n$ and $p$, denote that value by $d$. In situations where logistic constraints necessitate selecting a limited number of infected individuals, specifically $\ell\leq d$, to participate in an experiment, a non-adaptive group testing algorithm for identifying $\ell$ defectives (virus carriers) from $n$ samples when $d$ is known can be employed.

\subsubsection{Abnormal event detection in surveillance camera videos}
Efficiently detecting abnormal behavior in surveillance camera videos plays a vital role in combating  crimes. These videos are comprised of a sequence of continuous images, often referred to as \emph{frames}.  The task of identifying suspicious behavior within a video is equivalent to searching for abnormal behavior within a collection of frames. Training \emph{deep  neural networks} (shortly, DNN) for automating  suspicious image recognition is currently a widely adopted approach for the task ~\cite{XIE17,wk19,KUP22}. By utilizing the trained DNN, it becomes possible to classify a new image and determine whether it exhibits suspicious characteristics or not. However, once the training process is complete, there are further challenges to address, specially when dealing with substantial amount of images that need to be classified via the trained network. In this context, \emph{inference} is the process of utilizing the trained model to make predictions on new data that was not part of the training phase.  Due to the complexity of the DNN, inference time of images  can cost hundreds of seconds of GPU time for a single image. Long inference time poses challenges in scenarios where real-time or near-real-time processing is required, prompting the need for optimizing and accelerating the inference process. 

The detection of abnormal behavior in surveillance camera videos is often characterized by an imbalanced distribution of frames portraying abnormal behavior, also called \emph{abnormal frames},  in relation to the total number of frames within the video.
Denote the total number of frames in a video by $n$ and the number of abnormal frames by $d$. To identify suspicious behavior in a video, the goal is to find at least one abnormal frame among the $d$ frames. In most cases, we cannot assume any  non-trivial upper bound or estimation of any kind for $d$. Therefore, applying non-adaptive group testing algorithms for finding $\ell<d$ defectives when $d$ is unknown best suits this task.

It is unclear, however, how group testing can be applied to instances like images. Liang and Zou,~\cite{LiJZ21},  proposed three different methods for pooling image instances: 1) merging samples in the pixel space, 2) merging samples in the feature space, and 3) merging samples hierarchically and recursively at different levels of the network. For each grouping method, they provide network enhancements    that ensure that the group testing paradigm continues to hold. This means that a  positive prediction is inferred on a group if and only if it contains at least one positive image (abnormal frame). 
 \section{Definitions and Preliminary Results}
In this section, we give some definitions and preliminary results that we will need in the rest of the paper.

Let $X=[n]:=\{1,2,\ldots,n\}$ be a set of items that contains defective items $I \subseteq X$. In Group testing, we test a subset $Q \subseteq X$ of items, and the answer to the test is $T_I(Q):=1$ if $Q \cap I \not=\emptyset$, and $T_I(Q)=0$ otherwise. 

We will use the following version of Chernoff's bound.
\begin{lemma}\label{Chernoff}{\bf Chernoff's Bound}. Let $X_1,\ldots, X_m$ be independent random variables taking values in $\{0, 1\}$. Let $X=\sum_{i=1}^mX_i$ denotes their sum, and $\mu = \E[X]$ denotes the sum's expected value. Then
\begin{equation}\label{Chernoff1}
\Pr[X>(1+\lambda)\mu]\le \left(\frac{e^{\lambda}}{(1+\lambda)^{(1+\lambda)}}\right)^{\mu}\le e^{-\frac{\lambda^2\mu}{2+\lambda}}\le 
\begin{cases}
e^{-\frac{\lambda^2\mu} {3}} & \mbox{if \ }  0< \lambda\le 1 
\\
e^{-\frac{\lambda \mu}{3}} & \mbox{if \ } \lambda>1.
\end{cases}
\end{equation}
In particular,
\begin{eqnarray}\label{Chernoff2}\Pr[X>\Lambda]\le \left(\frac{e\mu}{\Lambda}\right)^{\Lambda}.\end{eqnarray}

For $0\le \lambda\le 1$ we have
\begin{eqnarray}\label{Chernoff3}
\Pr[X<(1-\lambda)\mu]\le \left(\frac{e^{-\lambda}}{(1-\lambda)^{(1-\lambda)}}\right)^{\mu}\le e^{-\frac{\lambda^2\mu}{2}}.
\end{eqnarray}
\end{lemma}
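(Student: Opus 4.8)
The plan is to use the standard exponential-moment (Chernoff) method. First I would fix a parameter $t>0$ and apply Markov's inequality to the nonnegative random variable $e^{tX}$: since $X>(1+\lambda)\mu$ is equivalent to $e^{tX}>e^{t(1+\lambda)\mu}$, this gives $\Pr[X>(1+\lambda)\mu]\le e^{-t(1+\lambda)\mu}\,\E[e^{tX}]$. By independence, $\E[e^{tX}]=\prod_{i=1}^m\E[e^{tX_i}]$, and writing $p_i:=\Pr[X_i=1]$ we have $\E[e^{tX_i}]=1+p_i(e^t-1)\le e^{p_i(e^t-1)}$ from the inequality $1+x\le e^x$. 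Multiplying over $i$ and using $\sum_i p_i=\mu$ yields $\E[e^{tX}]\le e^{\mu(e^t-1)}$, hence
\[
\Pr[X>(1+\lambda)\mu]\le \exp\!\big(\mu\,(e^t-1-t(1+\lambda))\big).
\]

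Next I would optimize the exponent $g(t)=e^t-1-t(1+\lambda)$ over $t$. Since $\lambda>0$, the minimizer $t=\ln(1+\lambda)$ is positive, and substituting gives $g=\lambda-(1+\lambda)\ln(1+\lambda)$, which is exactly the claimed bound $\Pr[X>(1+\lambda)\mu]\le \big(e^{\lambda}/(1+\lambda)^{(1+\lambda)}\big)^{\mu}$. For $(\ref{Chernoff2})$ I would put $\Lambda=(1+\lambda)\mu$ so that $1+\lambda=\Lambda/\mu$, and crudely bound $e^{\lambda}\le e^{1+\lambda}$, turning the previous estimate into $\big(e/(1+\lambda)\big)^{(1+\lambda)\mu}=(e\mu/\Lambda)^{\Lambda}$. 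The lower-tail bound $(\ref{Chernoff3})$ is symmetric: applying Markov to $e^{-sX}$ for $s>0$ together with the same product estimate gives $\Pr[X<(1-\lambda)\mu]\le \exp(\mu(e^{-s}-1+s(1-\lambda)))$; here the optimizer $s=-\ln(1-\lambda)$ is positive precisely because $0\le\lambda<1$, and it produces $\big(e^{-\lambda}/(1-\lambda)^{(1-\lambda)}\big)^{\mu}$.

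What remains is purely one-variable calculus: converting the two sharp bounds into the stated friendlier forms. For the upper tail I would check that $\lambda-(1+\lambda)\ln(1+\lambda)\le -\lambda^2/(2+\lambda)$ for all $\lambda>0$ (comparing the two sides at $\lambda=0$ and then their derivatives), and then note $\lambda^2/(2+\lambda)\ge \lambda^2/3$ when $0<\lambda\le 1$ and $\lambda^2/(2+\lambda)\ge \lambda/3$ when $\lambda>1$, which yields the three displayed cases. For the lower tail I would use $\ln(1-x)\le -x$ to show that the derivative of $\lambda\mapsto -\lambda-(1-\lambda)\ln(1-\lambda)+\lambda^2/2$ is nonpositive on $[0,1)$, so the expression itself is $\le 0$ there, giving $e^{-\lambda^2\mu/2}$. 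None of this is genuinely hard; the only points that need a little care are verifying that the optimizers $t=\ln(1+\lambda)$ and $s=-\ln(1-\lambda)$ lie in the admissible range $(0,\infty)$ — which is exactly where the hypotheses $\lambda>0$ and $\lambda<1$ are used — and the handful of elementary monotonicity checks just described.
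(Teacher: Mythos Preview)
Your argument is the standard exponential-moment derivation of Chernoff's bound and is correct in all parts, including the calculus checks for the simplified exponents. The paper itself does not prove this lemma at all: it is stated in the preliminaries as a known tool (``We will use the following version of Chernoff's bound'') and used as a black box throughout, so there is no proof in the paper to compare against.
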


The following lemma follows from \cite{BshoutyBHHKS18,FalahatgarJOPS16}. 
\begin{lemma}\label{Estd0}
Let $\epsilon<1$ be any positive constant. There is a polynomial time adaptive algorithm that makes $O(\log\log d+\log(1/\delta))$ expected number of tests and with probability at least $1-\delta$ outputs $D$ such that $(1-\epsilon)d\le D\le (1+\epsilon)d$.
\end{lemma}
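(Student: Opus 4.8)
The plan is to build the estimator in two phases: a coarse doubling search to locate the order of magnitude of $d$, followed by a refinement that sharpens the estimate to within a $(1\pm\epsilon)$ factor. First I would fix a subsampling primitive: for a parameter $p\in(0,1]$, include each item of $X$ independently with probability $p$ to form a random pool $Q_p$ and perform the single test $T_I(Q_p)$. The probability that this test is negative is $(1-p)^d\approx e^{-pd}$, so a negative answer is strong evidence that $pd$ is small and a positive answer that $pd$ is not too small. To control variance I would actually take a small batch of $\Theta(\log(1/\delta'))$ independent such pools at the same $p$ and look at the fraction that come back positive; by a Chernoff bound (Lemma~\ref{Chernoff}) this fraction concentrates around $1-e^{-pd}$, so with failure probability $\delta'$ we learn whether $pd$ lies below some small constant or above some larger constant.

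For the coarse phase I would run this test with $p = 2^{-j}$ for $j = 0, 1, 2, \dots$ (equivalently probe $p$ at a geometric sequence), stopping at the first $j$ where the batch of pools at $p=2^{-j}$ comes back mostly negative while the batch at $p=2^{-(j-1)}$ came back substantially positive. This identifies an index $j^\star$ with $2^{j^\star} = \Theta(d)$, i.e. a constant-factor estimate of $d$. The number of values of $j$ we must try before stopping is $O(\log d)$ in the worst ordering, but — and this is the key to getting $\log\log d$ rather than $\log d$ — instead of incrementing $j$ by one I would do a doubling (one-sided exponential) search on $j$ itself: probe $j = 1, 2, 4, 8, \dots, 2^k, \dots$ until the test flips, which happens after $O(\log\log d)$ probes, then binary-search within the last bracket $[2^{k-1}, 2^k]$ of length $O(\log d)$ using another $O(\log\log d)$ probes. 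Each probe is a batch of $O(\log(1/\delta'))$ tests; choosing $\delta' = \delta/\Theta(\log\log d)$ and union-bounding over all $O(\log\log d)$ probes keeps the total failure probability at most $\delta/2$, at a cost of $O(\log\log d \cdot (\log\log\log d + \log(1/\delta)))$ tests — which is $O(\log\log d + \log(1/\delta))$ after absorbing lower-order terms, matching the claim. This part of the argument is essentially the estimator of \cite{FalahatgarJOPS16,BshoutyBHHKS18}, which is why the lemma is credited to them.

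Given the constant-factor estimate $\tilde d$ from the coarse phase, the refinement phase sets $p = \Theta(1/\tilde d)$ so that $pd = \Theta(1)$ and then estimates the quantity $e^{-pd}$ directly: take $m = \Theta((1/\epsilon^2)\log(1/\delta))$ independent pools at this $p$, let $\hat q$ be the fraction that test negative, and output $D := -\ln(\hat q)/p$. Since $pd=\Theta(1)$, the true negative probability $q = (1-p)^d$ is bounded away from $0$ and $1$, so by Chernoff $|\hat q - q| \le \epsilon' q$ except with probability $\delta/2$; propagating this multiplicative error through the logarithm and using $q = (1-p)^d = e^{-pd(1+O(p))}$ shows that $D$ lies in $[(1-\epsilon)d,(1+\epsilon)d]$ once $\epsilon'$ is a small enough constant multiple of $\epsilon$ and $\tilde d$ is large. (For $d$ below a constant one simply tests the $O(1)$ items individually — or rather, tests singleton-like pools — so the small-$d$ case is trivial.) The refinement costs $O((1/\epsilon^2)\log(1/\delta)) = O(\log(1/\delta))$ tests since $\epsilon$ is a fixed constant, so the grand total is $O(\log\log d + \log(1/\delta))$ expected tests, and a union bound over the two phases gives overall success probability $1-\delta$.

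The main obstacle I anticipate is the bookkeeping in the coarse phase: making the one-sided doubling search on the \emph{exponent} $j$ genuinely cost $O(\log\log d)$ probes while ensuring the stopping condition is robust to the $\pm\epsilon'$ slack in each batch's empirical positive-fraction — in particular choosing the two thresholds (the "mostly negative" and "substantially positive" cutoffs) with enough of a gap that a single clean value of $j^\star$ is identified with high probability, and arguing that the expected (not just high-probability) number of tests is $O(\log\log d + \log(1/\delta))$, which requires bounding the contribution of the low-probability event that the search overshoots. Everything else is a routine Chernoff-plus-union-bound calculation.
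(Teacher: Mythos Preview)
The paper does not give its own proof of this lemma; it simply attributes it to \cite{BshoutyBHHKS18,FalahatgarJOPS16}. Appendix~\ref{A} proves the closely related Lemma~\ref{Estd2} (with $n/d$ in place of $d$), and that proof is the natural point of comparison. Your two–phase outline (coarse localisation of $\log d$, then a Chernoff-based refinement costing $O((1/\epsilon^2)\log(1/\delta))$ tests) is exactly the right architecture and matches both the cited references and the appendix.

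There is, however, a genuine gap in your coarse phase. You batch $\Theta(\log(1/\delta'))$ independent tests at each of $O(\log\log d)$ levels, with $\delta'=\delta/\Theta(\log\log d)$, and then write that the resulting
\[
O\bigl(\log\log d\cdot(\log\log\log d+\log(1/\delta))\bigr)
\]
``is $O(\log\log d+\log(1/\delta))$ after absorbing lower-order terms.'' It is not: a product does not collapse to a sum. For constant $\delta$ your bound is $\Theta(\log\log d\cdot\log\log\log d)$, and for constant $d$ it is $\Theta(\log\log d\cdot\log(1/\delta))$ via the $\delta'$ you chose --- either way strictly larger than the claim. (There is also a circularity: you set $\delta'$ in terms of $\log\log d$ before knowing $d$.)

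The fix, and the point where your reconstruction diverges from \cite{FalahatgarJOPS16,BshoutyBHHKS18} and from the paper's Appendix~\ref{A}, is to \emph{not batch} during the squaring search. One makes a \emph{single} random test at each level $\lambda\gets\lambda^2$. The doubly-exponential growth of $\lambda$ means that once the search has passed the correct scale, the probability of taking $t$ further steps decays like $2^{-c^{2^t}}$; summing this geometric-of-geometrics series shows the expected overshoot is $O(1)$ levels, giving $O(\log\log d)$ expected tests with \emph{no} dependence on $\delta$ at all. The high-probability guarantee is recovered only at the end, in the refinement phase, which costs $O(\log(1/\delta))$ tests. That is how the additive bound $O(\log\log d+\log(1/\delta))$ is obtained.
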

In Appendix~\ref{A}, we use a similar technique to prove:
\begin{lemma}\label{Estd}
Let $\epsilon<1$ be any positive constant. There is a polynomial time adaptive algorithm that makes $O(\log\log (\min(d,n/d))+\log(1/\delta))$ expected number of tests and with probability at least $1-\delta$ outputs $D$ such that $(1-\epsilon)d\le D\le (1+\epsilon)d$.
\end{lemma}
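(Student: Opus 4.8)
\medskip
\noindent\emph{Proof plan.}
The bound in Lemma~\ref{Estd} improves on that of Lemma~\ref{Estd0} only when $d$ is large, i.e.\ when $n/d<d$; in that range the complement $[n]\setminus I$ is ``heavy'' and hence easy to detect. The plan is to build a second estimator $B$ tailored to this regime, run it interleaved test by test with the estimator $A$ of Lemma~\ref{Estd0} (each with failure probability $\delta/2$), and halt as soon as one of the two halts, returning its output. An interleaved run stops within $2\min(T_A,T_B)$ tests, where $T_A,T_B$ are the running times of $A$ and $B$, and $\E[\min(T_A,T_B)]\le\min(\E[T_A],\E[T_B])$; hence the expected number of tests is $O(\min(\log\log d,\log\log(n/d))+\log(1/\delta))=O(\log\log\min(d,n/d)+\log(1/\delta))$. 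Since each of $A$ and $B$ outputs a correct estimate with probability at least $1-\delta/2$ for every value of $d$ (outside its intended regime it is merely slow, not wrong), a union bound gives overall correctness at least $1-\delta$. All that is new is the construction and analysis of $B$.

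For a uniformly random test $R$ of size $s$ we have $\Pr[T_I(R)=0]=\binom{n-d}{s}/\binom{n}{s}=\prod_{j=0}^{s-1}\frac{(n-d)-j}{n-j}$, which is decreasing in $s$, close to $1$ for $s\ll n/d$ and close to $0$ for $s\gg n/d$, so there is a threshold size $s^\star=\Theta(n/d)$. Estimator $B$ first makes $O(\log(1/\delta))$ tests of size $1$: if fewer than half return $0$ (morally $d\gtrsim n/2$, i.e.\ $s^\star$ near $1$) it estimates $\Pr[T_I(R_1)=0]=1-d/n$ to additive accuracy $\epsilon/2$ with $O(\epsilon^{-2}\log(1/\delta))$ tests and outputs $D=n(1-\hat p)$, which has relative error at most $\epsilon$ because $d>n/2$. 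Otherwise $B$ runs a galloping search over $\log_2 s$, trying $s=2^{2^0},2^{2^1},2^{2^2},\dots$ with $O(1)$ repetitions per trial to decide whether the majority answer is $0$ or $1$; after $O(\log\log(n/d))$ trials it brackets $\log_2 s^\star$ in an interval of length $O(\log(n/d))$, and a binary search over $\log_2 s$ inside this interval ($O(\log\log(n/d))$ further trials) returns an integer $s_0=\Theta(s^\star)=\Theta(n/d)$, so that $p(n-d):=\Pr[T_I(R_{s_0})=0]$ is bounded away from $0$ and $1$. Finally $B$ estimates $p(n-d)$ to relative accuracy $\Theta(\epsilon)$ with $O(\epsilon^{-2}\log(1/\delta))$ tests, solves the monotone equation $p(m)=\prod_{j=0}^{s_0-1}\frac{m-j}{n-j}=\hat p$ for $m$, and outputs $D=n-m$; since $\frac{d}{dm}\ln p(m)=\sum_{j<s_0}\frac1{m-j}=\Theta(s_0/(n-d))$, a relative error $\Theta(\epsilon)$ in $\hat p$ yields relative error at most $\epsilon$ in $D$. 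To keep the $\log(1/\delta)$ term additive rather than multiplied by the number of search phases, I would reuse the amplification scheme behind Lemma~\ref{Estd0} (geometrically decaying confidence across phases, or the median of $O(\log(1/\delta))$ independent runs of the $O(1)$-confidence search). All operations involved run in polynomial time.

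The main obstacle is the quantitative analysis of $B$: checking that the galloping search cannot overshoot in a way that corrupts the bracketing; that each comparison of $\Pr[T_I(R_s)=0]$ with $1/2$ is decided correctly from only $O(1)$ repetitions even in the threshold region where this probability is near $1/2$ (handled by enlarging the ``don't-care'' window and verifying that any $s_0$ returned there still satisfies $s_0=\Theta(s^\star)$); and that inverting $p(m)=\hat p$ is numerically stable enough for the $(1\pm\epsilon)$ guarantee, in particular controlling $\prod_{j<s_0}\frac{(n-d)-j}{n-j}$ against $(1-d/n)^{s_0}$ when $s_0$ approaches $n-d$.
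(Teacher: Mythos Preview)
Your proposal is correct and follows essentially the same approach as the paper: interleave the $O(\log\log d)$ estimator of Lemma~\ref{Estd0} with a new $O(\log\log(n/d))$ estimator built by galloping (squaring) search for the test-size threshold $s^\star=\Theta(n/d)$, then refine to a $(1\pm\epsilon)$-estimate with $O(\log(1/\delta))$ further tests. The paper's implementation differs only in cosmetic details---it uses Bernoulli tests with inclusion probability $1-2^{-\lambda/n}$ and a single test per squared $\lambda$ rather than fixed-size tests with $O(1)$ repetitions, and for the final refinement it invokes the estimator of Falahatgar et al.\ as a black box rather than inverting the hit probability directly---and it also notes an alternative to interleaving: spend $O(\log(1/\delta))$ tests up front to decide whether $d\gtrless\sqrt{n}$ and then run only the appropriate estimator.
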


In \cite{Bshouty19,DamaschkeM12,FalahatgarJOPS16}, the following is proved
\begin{lemma}\label{NonAFd}
There is a polynomial time non-adaptive randomized algorithm that makes $O(\log(1/\delta)\log n)$ tests and, with probability at least $1-\delta$, finds an integer $D$ that satisfies $d/2<D<2d$.
\end{lemma}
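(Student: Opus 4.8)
This statement is quoted from \cite{Bshouty19,DamaschkeM12,FalahatgarJOPS16}; the plan below reconstructs the standard ``geometric sampling'' argument. For each scale $i=1,\dots,\lceil\log(2n)\rceil$ and each repetition $j=1,\dots,m$ with $m=\Theta(\log(1/\delta))$, let $Q_{i,j}\subseteq[n]$ be a test in which every item is included independently with probability $2^{-i}$. All $m\lceil\log(2n)\rceil=O(\log(1/\delta)\log n)$ of these tests are fixed in advance, so the algorithm is non-adaptive and plainly polynomial time. Let $\hat g_i\in[0,1]$ be the fraction of the $m$ tests at scale $i$ whose answer is $0$; then $\hat g_i$ estimates $g_i:=\Pr[T_I(Q_{i,j})=0]=(1-2^{-i})^d$, which is strictly increasing in $i$, satisfies $g_1=2^{-d}\le 1/2$, and, since $d\le n$, satisfies $g_{\lceil\log(2n)\rceil}\ge(1-1/(2n))^{n}\ge e^{-1}>1/4$.

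First I would isolate an ``informative scale''. Since $\ln(1/g_i)=-d\ln(1-2^{-i})$ roughly doubles when $i$ decreases by one (so that $g_i\approx g_{i+1}^{2}$ once $2^{-i}$ is small), no consecutive pair $g_i,g_{i+1}$ can straddle a fixed band such as $[0.3,0.7]$ — one never has both $g_i<0.3$ and $g_{i+1}>0.7$ — and a short direct check (using that $d$ is an integer) handles the few small scales where this asymptotic is loose. Combined with $g_1\le 1/2$, $g_{\lceil\log(2n)\rceil}>1/4$, and monotonicity, this forces some index $i^\star$ with $g_{i^\star}\in[1/4,3/4]$. On such a scale, $2^{i^\star}\ln(1/g_{i^\star})$ equals $d$ times $-\ln(1-2^{-i^\star})/2^{-i^\star}\in[1,2\ln 2]$ (note $i^\star\ge 1$, since $g_0=0$), hence lies in $[d,(2\ln 2)d]$; so if $\hat g_{i^\star}$ is accurate to a small additive error, the nearest integer to $2^{i^\star}\ln(1/\hat g_{i^\star})$ lies strictly between $d/2$ and $2d$. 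The algorithm therefore outputs $D:=$ the nearest integer to $2^{\hat{\imath}}\ln(1/\hat g_{\hat{\imath}})$, where $\hat{\imath}$ is the smallest $i$ with $\hat g_i\ge 0.4$.

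The core of the proof is to show that $\hat{\imath}$ lands on such a scale with probability $\ge 1-\delta$ using $m=\Theta(\log(1/\delta))$ repetitions per scale \emph{only}, even though there are $\Theta(\log n)$ scales. Chernoff's bound (Lemma~\ref{Chernoff}) gives $\Pr[|\hat g_i-g_i|>0.02]\le e^{-\Omega(m)}$ at each scale, but a naive union bound over all scales would force an extra $\log\log n$ factor into $m$. Instead one exploits that $g_i$ moves \emph{doubly} exponentially fast away from the informative band: $k$ steps below it one has $g_i\le(1/2)^{2^{k}}$, so $\Pr[\hat g_i\ge 0.4]\le e^{-\Omega(2^{k}m)}$, and symmetrically $k$ steps above it $\Pr[\hat g_i<0.4]\le e^{-\Omega(2^{k}m)}$; hence the probability that $\hat{\imath}$ is picked too small or too large is at most a geometric-type sum $\sum_{k\ge 1}e^{-\Omega(2^{k}m)}=O(e^{-\Omega(m)})$, which is $\le\delta$ for a suitable $m=\Theta(\log(1/\delta))$. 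On the complementary event, $g_{\hat{\imath}}$ lies in $[1/4,3/4]$ (up to the threshold/error slack) and $|\hat g_{\hat{\imath}}-g_{\hat{\imath}}|\le 0.02$, so $D$ satisfies $d/2<D<2d$; the test count is $O(\log(1/\delta)\log n)$ by construction.

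The step I expect to be the main obstacle is precisely this concentration argument: one must use the doubly-exponential gap between successive $g_i$'s so that the union bound over $\Theta(\log n)$ scales costs only $O(e^{-\Omega(m)})$ rather than $O(\log n\cdot e^{-\Omega(m)})$, which is what keeps the number of tests at $O(\log(1/\delta)\log n)$ rather than $O((\log\log n+\log(1/\delta))\log n)$. A secondary, purely technical nuisance is calibrating the constants — the threshold $0.4$, the width of the informative band, the additive error $0.02$, and the rounding — tightly enough to reach the stated window $(d/2,2d)$ rather than a weaker $(d/c,cd)$ window, together with checking the handful of smallest values of $d$ by hand.
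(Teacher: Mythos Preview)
The paper does not give its own proof of this lemma: it is simply quoted from \cite{Bshouty19,DamaschkeM12,FalahatgarJOPS16}, so there is no in-paper argument to compare against. Your proposal is a faithful reconstruction of the standard geometric-sampling estimator used in those references, and the overall plan (scales $2^{-i}$, $\Theta(\log(1/\delta))$ repetitions per scale, locate the first scale where the empirical zero-fraction crosses a threshold, then invert) is correct and yields the stated $O(\log(1/\delta)\log n)$ test count.

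Two small corrections are worth noting, neither of which breaks the argument. First, your symmetry claim for the ``above'' side is too strong: $k$ steps above the informative band one has $1-g_i\approx C\cdot 2^{-k}$ (only singly exponentially small), so $\Pr[\hat g_i<0.4]\le (O(2^{-k}))^{\Omega(m)}=e^{-\Omega(km)}$ rather than $e^{-\Omega(2^{k}m)}$. The tail sum $\sum_{k\ge 1}e^{-\Omega(km)}$ is still $O(e^{-\Omega(m)})$, so your conclusion survives. Second, on the ``below'' side the decay is $g_i\le (3/4)^{2^{k}}$ rather than $(1/2)^{2^{k}}$ (since $-\ln g_{i^\star}$ can be as small as $\ln(4/3)$), again harmless. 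You correctly flag as the crux that the naive union bound over $\Theta(\log n)$ scales would cost an extra $\log\log n$; the way around it is exactly the summable tail you describe, and that is indeed the point that distinguishes $O(\log(1/\delta)\log n)$ from $O((\log(1/\delta)+\log\log n)\log n)$. The remaining calibration of thresholds and edge cases you identify is genuine but routine.
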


\section{Adaptive and Deterministic}
In this section, we study the test complexity of adaptive deterministic algorithms.

We first prove the following upper bound. This proves results (1) and (3) in Figure~\ref{Table2}. Here $d$ can be known or unknown to the algorithm. 
\begin{theorem}\label{ADKU}
Let $d\ge \ell$. There is a polynomial time adaptive deterministic algorithm that detects $\ell$ defective items and makes at most $\ell\log(n/\ell)+3\ell=O(\ell\log (n/\ell))$ tests.
\end{theorem}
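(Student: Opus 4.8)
The plan is to give an explicit divide-and-conquer algorithm that localizes $\ell$ defective items by isolating each one in a separate block and then running ordinary binary search inside that block. First I would partition $X=[n]$ into $\ell$ disjoint parts $X_1,\dots,X_\ell$, each of size as close as possible to $n/\ell$ (so $\lceil n/\ell\rceil$ or $\lfloor n/\ell\rfloor$). The key combinatorial observation is that, since there are $d\ge \ell$ defective items distributed among $\ell$ blocks, by pigeonhole at least\ldots wait, that is not quite automatic: the $d$ defectives could all land in one block. So instead the algorithm should be: test each block $X_i$ once; the set $S=\{i: T_I(X_i)=1\}$ of ``positive'' blocks is nonempty, and we only need $\ell$ defectives total, so we greedily extract defectives one block at a time. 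The cleaner version, which I would actually write, avoids even the $\ell$ preliminary tests by arguing directly on the recursion.

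The core subroutine is: given a set $Q$ known to contain at least one defective, find one defective in $Q$ using $\lceil\log |Q|\rceil$ tests by repeated halving (split $Q$ into two halves, test one half; if positive recurse on it, if negative recurse on the complementary half, which must then contain a defective). This is the classical result behind Katona's $\Theta(\log n)$ bound for $\ell=1$. To find $\ell$ defectives, I would run this on blocks in a way that keeps the block sizes balanced: maintain a collection of candidate sets whose sizes sum to $n$; as long as fewer than $\ell$ defectives have been found, pick a set, test it, and either discard it (if negative) or binary-search out one defective from it (if positive), then remove the found item and continue. The bookkeeping needs to be set up so the total cost is $\sum_{i=1}^\ell \lceil\log(\text{size of the }i\text{-th block})\rceil$ plus $O(\ell)$ overhead, and the size of each block is at most $\lceil n/\ell\rceil$.

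Concretely I would do the partition-first version: split into $\ell$ blocks of size $\le \lceil n/\ell\rceil \le n/\ell + 1$, and for $i=1,\dots,\ell$ process block $X_i$. If $T_I(X_i)=0$, move on (cost: $1$ test, $0$ defectives). If $T_I(X_i)=1$, binary-search one defective $x_i\in X_i$ using at most $\lceil\log|X_i|\rceil \le \log(n/\ell)+2$ tests (cost: $\le \log(n/\ell)+2$ tests, $1$ defective). The subtlety is whether this yields $\ell$ defectives: it does, because the number of positive blocks $|S|$ satisfies $|S|\ge 1$ but could be as small as $1$, so processing all $\ell$ blocks might give only one defective. The fix is that when a block is positive we should not find just one defective in it but keep extracting defectives from that block (removing each found defective and re-testing) until either it becomes negative or we have accumulated $\ell$ defectives in total; extracting $k$ defectives from a block of size $m$ costs at most $k\lceil\log m\rceil + k$ tests (a test to confirm residual defectiveness plus a binary search, each round). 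Summing, if we extract $k_i$ defectives from block $i$ with $\sum_i k_i=\ell$, the total is $\le \ell + \sum_i k_i(\log(n/\ell)+2) + \ell = \ell\log(n/\ell)+O(\ell)$, since the extra per-block overhead is over at most $\ell$ blocks that contribute.

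The main obstacle, and the point to be careful about in the writeup, is exactly this accounting: making sure that the ``wasted'' single tests on negative or exhausted blocks total only $O(\ell)$, not $O(n/\ell)$ or $O(\ell\log(n/\ell))$. The clean way is: there are exactly $\ell$ blocks, so at most $\ell$ tests are spent confirming a block is empty at the start; and each of the $\ell$ extracted defectives costs one ``is there still a defective here'' test plus one binary search of cost $\le\log(n/\ell)+2$; that gives $\le \ell\cdot 1 + \ell(\log(n/\ell)+2) + \ell\cdot 1 = \ell\log(n/\ell)+4\ell$. To hit the stated $3\ell$ one fuses the confirmation test with the first level of the binary search (the binary search's first test already reveals whether the half, hence the block, is still defective), shaving one $\ell$ and recovering $\ell\log(n/\ell)+3\ell$. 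Finally I would note polynomial running time is immediate since each test and each halving step is $O(n)$ work and there are $O(\ell\log n)$ of them, and that the algorithm never uses the value of $d$ beyond the hypothesis $d\ge\ell$ guaranteeing $\ell$ defectives exist, which is why the same proof establishes both (1) and (3).
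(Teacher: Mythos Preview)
Your proposal is correct and follows essentially the same approach as the paper: partition $[n]$ into $\ell$ blocks of size roughly $n/\ell$, then repeatedly binary-search within the blocks until $\ell$ defectives are found. Your writeup is in fact more careful than the paper's terse proof---you explicitly address the case where all defectives land in a single block and track the ``re-test'' overhead precisely, whereas the paper simply says ``use the binary search algorithm for each $i$ to detect all the defective items in $X_i$ until we get $\ell$ defective items'' and states the cost as $\lceil\log(n/\ell)\rceil+1$ per binary search plus $\ell$ block tests.
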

\begin{proof}
We first split the items $X=[n]$ to $\ell$ disjoint sets $X_1,\ldots,X_\ell$ of (almost) equal sizes (each of size $\lfloor n/\ell\rfloor$ or $\lceil n/\ell\rceil$). Then we use the binary search algorithm (binary splitting algorithm) for each $i$ to detect all the defective items in $X_i$ until we get $\ell$ defective items.

Each binary search takes at most $\lceil\log (n/\ell)\rceil+1$ tests, and testing all $X_i$ takes at most $\ell$ tests. \qed
\end{proof}

We now prove the lower bound. This proves (2) in Figure~\ref{Table2}. We remind the reader that when we say that $d$ is known in advance to the algorithm, we mean that an estimate $D$ that satisfies $d/4\le D\le 4d$ is known to the algorithm. The following lower bound holds even if the algorithm knows $d$ exactly in advance. 
\begin{theorem} \label{ADKL}
Let $\ell\le d\le n/2$ and $d$ be known in advance to the algorithm. Any adaptive deterministic algorithm that detects $\ell$ defective items must make at least $\max(\ell\log(n/d), \log n-1)=\Omega(\ell\log(n/d)+\log n)$ tests.
\end{theorem}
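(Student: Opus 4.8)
The plan is to model any deterministic adaptive algorithm that makes $q$ tests in the worst case as a binary decision tree of depth $q$: each internal node is a test $Q$, its two children correspond to the answers $T_I(Q)\in\{0,1\}$, and each leaf is labelled with the set $L\subseteq[n]$, $|L|=\ell$, that the algorithm outputs there. Such a tree has at most $2^q$ leaves, hence the algorithm has at most $2^q$ distinct possible output sets. Since a lower bound that holds when the algorithm knows $d$ exactly is also a lower bound when it only knows an estimate of $d$ or knows nothing (the footnote), I restrict attention to inputs $I$ with $|I|=d$ exactly; correctness means that on every such $I$ the algorithm reaches a leaf whose label $L$ satisfies $L\subseteq I$. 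The two terms in the $\max$ come from two independent arguments applied to this family of at most $2^q$ output sets.

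For the bound $q\ge \ell\log(n/d)$, I use a covering/pigeonhole count. A fixed output set $L$ is consistent only with inputs $I\supseteq L$, and the number of size-$d$ sets $I$ with $L\subseteq I$ is $\binom{n-\ell}{d-\ell}$. As the (at most $2^q$) leaves must between them account for all $\binom{n}{d}$ inputs of size $d$, this forces
$2^q\ \ge\ \binom{n}{d}\Big/\binom{n-\ell}{d-\ell}\ =\ \prod_{i=0}^{\ell-1}\frac{n-i}{d-i}\ \ge\ (n/d)^{\ell},$
where the last step uses $\frac{n-i}{d-i}\ge \frac nd$ for all $i\ge 0$ (equivalent to $n\ge d$). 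Taking logarithms gives $q\ge \ell\log(n/d)$.

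For the bound $q\ge \log n-1$, I argue by contradiction: suppose the number of possible output sets is less than $n-d$. Choose one element from each possible output set; this yields a set $H$ with $|H|<n-d$ that meets every possible output. Extend $H$ to a set $\bar I$ of size exactly $n-d$ and put $I:=[n]\setminus\bar I$, so $|I|=d$. Every possible output $L$ intersects $\bar I$, hence $L\not\subseteq I$, so the algorithm has no correct answer on the legitimate input $I$ — a contradiction. Therefore $2^q\ge n-d\ge n/2$, using $d\le n/2$, i.e.\ $q\ge \log n-1$. Combining the two bounds, $q\ge \max(\ell\log(n/d),\log n-1)\ge \tfrac12\big(\ell\log(n/d)+\log n-1\big)=\Omega(\ell\log(n/d)+\log n)$.

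Both arguments are elementary; the only points needing care are bookkeeping rather than real difficulty. First, one must phrase the ``$\le 2^q$ outputs'' claim correctly for an adaptive algorithm whose number of tests can vary with the input — the clean route is to let $q$ be the worst-case test count and bound the number of leaves of the induced decision tree by $2^q$. Second, one should verify the elementary inequality $\binom{n}{d}/\binom{n-\ell}{d-\ell}\ge (n/d)^{\ell}$ used above, and check that the constructed set $I$ in the second argument is genuinely a valid input (size exactly $d$) on which the algorithm is required to be correct. No other obstacle is expected.
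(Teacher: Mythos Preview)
Your proof is correct and follows essentially the same approach as the paper's: the paper also counts the distinct possible output sets $L_1,\ldots,L_t$, uses the pigeonhole bound $t\ge \binom{n}{d}/\binom{n-\ell}{d-\ell}\ge (n/d)^\ell$ for the first term, and for the second term picks one element $x_i\in L_i$ from each output set to build a size-$d$ input avoiding all of them. Your decision-tree framing just makes explicit the step ``test complexity $\ge \log t$'' that the paper states in one line.
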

\begin{proof}
Let $A$ be an adaptive deterministic algorithm that detects $\ell$ defective items. Let $L_1,\ldots,L_t$ be all the possible $\ell$-subsets of $X$ that $A$ outputs. Since the algorithm is deterministic, the test complexity of $A$ is at least $\log t$. Since $L_i\subseteq I$ (the set of $d$ defective items), each $L_i$ can be an output of at most ${n-\ell\choose d-\ell}$ sets $I$. Since the number of possible sets of defective items $I$ is ${n\choose d}$, we have
$$t\ge \frac{{n\choose d}}{{n-\ell\choose d-\ell}}\ge \frac{n(n-1)\cdots(n-\ell+1)}{d(d-1)\cdots (d-\ell+1)}\ge \left(\frac{n}{d}\right)^\ell.$$ Therefore the test complexity of $A$ is at least $\log t\ge \ell \log(n/d).$

We now show that $t> n-d$. Now suppose, to the contrary, that $t\le n-d$. Choose any $x_i\in L_i$ and consider any $S\subseteq X\backslash \{x_i|i\in [t]\}$ of size $d$. For the set of defective items $I=S$, the algorithm outputs some $L_i$, $i\in [t]$. Since $L_i\not\subseteq S$, we get a contradiction. Therefore, $t> n-d$ and $\log t> \log(n-d)\ge \log(n/2)=\log n-1$.\qed
\end{proof}
Note that the upper bound $O(\ell\log (n/\ell))$ in Theorem~\ref{ADKU} asymptotically matches the lower bound $\Omega(\ell\log(n/d))$ in Theorem~\ref{ADKL} when $d=n^{o(1)}$. 

The following Theorem proves result~(4) in Figure~\ref{Table1}.
\begin{theorem}
Let $\ell\le d\le n/2$ and $d$ be unknown to the algorithm. Any adaptive deterministic algorithm that detects $\ell$ defective items must make at least $\ell\log(n/\ell)$ tests.
\end{theorem}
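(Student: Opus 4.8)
The plan is to reduce to the case $d = 4\ell$, which is the "hardest" small value of $d$ still consistent with our standing assumption $\ell \le d/4$, and then invoke the machinery of Theorem~\ref{ADKL}. Concretely, suppose $A$ is an adaptive deterministic algorithm that detects $\ell$ defective items without knowing $d$. First I would run $A$ against the family of inputs $I$ with $|I| = 4\ell$ (assuming $4\ell \le n/2$; the other regime is trivial since then $n/\ell = O(1)$ and the claimed bound is vacuous up to constants). On this family $A$ must still be correct, so the argument in the proof of Theorem~\ref{ADKL} applies verbatim: letting $L_1,\dots,L_t$ be the possible $\ell$-subsets output by $A$ on inputs of size $d = 4\ell$, determinism gives test complexity at least $\log t$, and the counting bound $t \ge \binom{n}{d}/\binom{n-\ell}{d-\ell} \ge (n/d)^\ell = (n/(4\ell))^\ell$ yields test complexity at least $\ell \log(n/(4\ell)) = \ell\log(n/\ell) - 2\ell$.

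This falls short of the stated $\ell\log(n/\ell)$ by an additive $\Theta(\ell)$ term, so the real work is closing that gap — this is the step I expect to be the main obstacle. The point is that when $d$ is unknown, the adversary has more freedom, and one should be able to squeeze out the extra $\Theta(\ell)$ by not fixing $d$ in advance. I would try the following sharper counting argument: let $L_1,\dots,L_t$ now be the possible outputs of $A$ over \emph{all} inputs $I$ (of every size $\ge \ell$). Consider inputs of size exactly $\ell$, i.e. $I = L$ itself for some $\ell$-set $L$; for such an input the only legal output is $L$. Hence every $\ell$-subset of $[n]$ must appear among $L_1,\dots,L_t$, forcing $t \ge \binom{n}{\ell} \ge (n/\ell)^\ell$, and therefore the test complexity is at least $\log t \ge \ell\log(n/\ell)$.

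So the clean proof is actually the second one, and it does not even need the Chernoff or estimation lemmas: the crux is simply that an algorithm ignorant of $d$ must be prepared to see $I$ of size exactly $\ell$, in which case it is forced to output that precise set, so its output set must be $\binom{[n]}{\ell}$ in full. I would write it up as: assume $A$ detects $\ell$ defectives with no knowledge of $d$; for each $\ell$-subset $L \subseteq [n]$, feeding $A$ the defective set $I = L$ forces $A$ to output $L$ (the unique $\ell$-subset of $L$); since $A$ is deterministic its number of distinct computation paths, hence distinct outputs, is at least the number of leaves of its decision tree, which is at least $\binom{n}{\ell}$; thus the depth (worst-case number of tests) is at least $\log\binom{n}{\ell} \ge \ell\log(n/\ell)$. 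The one subtlety to double-check is the inequality $\binom{n}{\ell} \ge (n/\ell)^\ell$, which is standard, and that "$d$ unknown" genuinely permits the size-$\ell$ inputs — which it does, since the problem statement only requires $\ell \le d$ and here $d = \ell$ is allowed when $d$ is not promised to be larger.
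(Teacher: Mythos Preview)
Your first approach --- fixing $d=4\ell$ and invoking Theorem~\ref{ADKL} --- is precisely the paper's proof, verbatim. The paper does not address the $-2\ell$ shortfall you noticed; it simply writes ``we let $d=4\ell$; then by the first bound in Theorem~\ref{ADKL}, the result follows,'' evidently treating the bound asymptotically (consistent with the table, whose entries are all in the $\Omega/O$ sense). From the paper's standpoint the proof is complete at that point.

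Your second argument is a genuinely different and sharper route: by taking $d=\ell$, the only legal output on input $I$ is $I$ itself, forcing all of $\binom{[n]}{\ell}$ to appear among the outputs and hence depth at least $\log\binom{n}{\ell}\ge \ell\log(n/\ell)$ with the exact constant. This is cleaner --- no ratio of binomial coefficients --- and strictly stronger than what the paper proves. The only caveat worth flagging is that the paper's introduction imposes a standing convention $\ell\le d/4$; if one reads that as also constraining the adversary, then size-$\ell$ inputs are unavailable and the $d=4\ell$ argument is the best possible. Under the theorem's literal hypothesis $\ell\le d$, however, your second argument is correct and is the better proof.
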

\begin{proof}
Since the algorithm works for any $d$, we let $d=4\ell$. Then by the first bound in Theorem~\ref{ADKL}, the result follows. \qed
\end{proof}
\section{Adaptive and Randomized}
In this section, we study the test complexity of adaptive randomized algorithms. 

The following theorem proves the upper bound when $d$ is known in advance to the algorithm. This proves result (5) in Figure~\ref{Table2}.
\begin{theorem}\label{ARKU}
Let $\ell\le d/2$. Suppose some integer $D$ is known in advance to the algorithm where $d/4\le D\le 4d$. There is a polynomial time adaptive randomized algorithm that makes $\ell\log(n/d)+\ell\log\log(1/\delta)+O(\ell)$ tests and, with probability at least $1-\delta$, detects~$\ell$ defective items.
\end{theorem}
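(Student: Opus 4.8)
The plan is to reduce to the deterministic adaptive case of Theorem~\ref{ADKU} (result (3)) by randomly subsampling the universe so that the defective set is thinned down to roughly $\ell$ elements, while the universe shrinks by the same factor. First I would set a sampling probability $p = c\ell/D$ for a suitable constant $c$ (say $c=8$), and form $X'\subseteq X$ by including each item of $X$ independently with probability $p$. Since $d/4\le D\le 4d$, we have $pd = \Theta(\ell)$; by Chernoff's bound (Lemma~\ref{Chernoff}) the number of defective items landing in $X'$, call it $d' = |X'\cap I|$, satisfies $d'\ge \ell$ with probability at least, say, $1-\delta/3$, provided $c$ is large enough (the expected value $pd$ is a constant multiple of $\ell$, so one needs the constant comfortably above $1$; this is where the hypothesis $\ell\le d/2$ and the two-sided bound on $D$ get used). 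Simultaneously, again by Chernoff, $|X'| \le 2pn = O(n\ell/d)$ with probability at least $1-\delta/3$.

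The second step is to run the deterministic adaptive algorithm of Theorem~\ref{ADKU} on the instance $X'$ (with $d$ unknown to it — that is exactly the setting of result (3)), stopping once it has detected $\ell$ defective items in $X'$. Conditioned on the good event $d'\ge \ell$, it outputs $\ell$ genuine defective items, which are defective in the original instance as well. By Theorem~\ref{ADKU} this costs at most $\ell\log(|X'|/\ell)+3\ell$ tests, which on the good event is at most $\ell\log\bigl(O(n\ell/d)/\ell\bigr)+3\ell = \ell\log(n/d)+O(\ell)$ tests. Combining with the failure probabilities, running this whole procedure once succeeds with constant probability and uses $\ell\log(n/d)+O(\ell)$ tests.

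To drive the failure probability down to $\delta$, I would repeat the one-shot procedure independently $O(\log(1/\delta))$ times. But naively this multiplies the test count by $\log(1/\delta)$, which is too much; the trick is that the expensive part, the binary searches, need not all be run to completion. Instead, one runs the $\ell$ independent binary searches (one per block $X_i$ of $X'$) and only needs $\ell$ of the at-most-$\ell$ searches to terminate with a defective; alternatively — and this is the cleaner route matching the claimed bound — one amplifies only the cheap subsampling/counting steps. Concretely: repeat the subsampling $r=O(\log\log(1/\delta))$ times in parallel-ish fashion to first certify (using $O(1)$ tests each, testing whether $|X'\cap I|\ge \ell$ is plausible) a single good $X'$, and only then spend the $\ell\log(n/d)+O(\ell)$ binary-search tests once on that certified $X'$. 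A careful accounting gives the extra $\ell\log\log(1/\delta)$ term in the statement: each of the $\ell$ blocks individually fails to contain a defective with small constant probability, so boosting each block's success probability to $1-\delta/\ell$ costs $O(\log(\ell/\delta)) = O(\log\log(1/\delta))$ extra per block when $\delta$ is, e.g., inverse-polynomial, hence $\ell\log\log(1/\delta)$ overall, on top of the $\ell\log(n/d)+O(\ell)$ for the searches themselves.

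The main obstacle I anticipate is precisely this last bookkeeping: getting the amplification to cost only an additive $\ell\log\log(1/\delta)$ rather than a multiplicative $\log(1/\delta)$ factor. The key idea is that one should amplify the $\ell$ \emph{per-block} events separately — within each of the $\ell$ blocks of size $\approx |X'|/\ell$, resample that block $O(\log\log(1/\delta))$ times until it is non-empty of defectives — rather than amplifying the global event, since a union bound over $\ell$ blocks each failing with probability $\delta/\ell$ only needs $\log(\ell/\delta)$-fold repetition per block. Everything else (the Chernoff estimates on $d'$ and on $|X'|$, and the invocation of Theorem~\ref{ADKU}) is routine.
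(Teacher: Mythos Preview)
Your first two steps---subsample each item with probability $p=\Theta(\ell/D)$ and then run the deterministic algorithm of Theorem~\ref{ADKU} on the subsampled set $X'$---match the paper's construction and are correct.

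The gap is in your amplification step. You write that ``boosting each block's success probability to $1-\delta/\ell$ costs $O(\log(\ell/\delta)) = O(\log\log(1/\delta))$ extra per block''; but $\log(\ell/\delta)=\log\ell+\log(1/\delta)$, which is \emph{not} $O(\log\log(1/\delta))$, even under your caveat that $\delta$ is inverse-polynomial. Any repetition-based amplification of a constant-success procedure will cost a multiplicative $\Theta(\log(1/\delta))$ factor, not the additive $\ell\log\log(1/\delta)$ claimed in the theorem.

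The trick you are missing is that the dependence on $\delta$ should be folded into the sampling probability, not achieved by repetition. The paper sets $c=32\log(2/\delta)$ (not a constant like your $c=8$) and samples with probability $c\ell/D$. Then $\E[|X'\cap I|]=\Theta(\ell\log(1/\delta))$, so a single application of Chernoff already gives $\Pr[|X'\cap I|<\ell]\le\delta/2$, and likewise $|X'|\le O(\ell n\log(1/\delta)/d)$ with probability $\ge 1-\delta/2$. Now the \emph{single} invocation of Theorem~\ref{ADKU} costs
\[
\ell\log\frac{|X'|}{\ell}+O(\ell)=\ell\log\frac{n}{d}+\ell\log\log\frac{1}{\delta}+O(\ell),
\]
which is exactly the stated bound. (When $D<c\ell$ one cannot sample with probability $c\ell/D>1$; in that regime the paper simply runs Theorem~\ref{ADKU} on all of $X$, and since $d\ge D/4\ge \ell/O(\log(1/\delta))$ the same bound holds.) The $\ell\log\log(1/\delta)$ term is not the cost of any amplification loop; it is the $\log c$ overhead in $\log(|X'|/\ell)$ coming from the inflated sampling rate.
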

\begin{proof} Let $c=32\log(2/\delta)$. If $D<c\ell$, we can use the deterministic algorithm in Theorem~\ref{ADKU}. The test complexity is $\ell\log(n/\ell)+2\ell=\ell\log(cn/D)+2\ell=\ell\log(n/d)+\ell\log\log(1/\delta)+O(\ell)$.

If $D>c\ell$, then the algorithm uniformly at random chooses each element in $X$ with probability $c\ell/D<1$ and puts the items in $X'$. If $|X'|\le 3c\ell n/D$, then deterministically detects $\ell$ defective items in $X'$ using Theorem~\ref{ADKU}.

Let $Y_i$ be an indicator random variable that is $1$ if the $i$th defective item is in $X'$ and $0$ otherwise. Then $\E[Y_i]=c\ell/D$. 
The number of defective items in $X'$ is $Y=Y_1+\cdots+Y_d$ and $\mu:=\E[Y]=cd\ell/D\ge c\ell/4$. 
By Chernoff's bound, we have $\Pr[Y<\ell]\le e^{-(1-4/c)^2c\ell/8}<e^{-c\ell/32}\le\delta/2$. Also, $\E[|X'|]=c\ell n/D$, and by Chernoff's bound, $\Pr[|X'|>3c\ell n/D]\le (e/3)^{3c\ell n/D}\le \delta/2.$
Therefore, with probability at least $1-\delta$, the number of defective items in $X'$ is at least $\ell$ and $|X'|\le 3c\ell n/D$. Therefore, with probability at least $1-\delta$, the algorithm detects $\ell$ defective items. 

Since $|X'|\le 3c\ell n/D\le 12c\ell n/d$, by Theorem~\ref{ADKU}, the test complexity is at most 
$\ell\log(|X'|/\ell)+2\ell=\ell\log(n/d)+\ell\log\log(1/\delta)+O(\ell).$\qed
\end{proof}

We now prove the lower bound when $d$ is known in advance to the algorithm. This proves results (6) and (8) in Figure~\ref{Table2}.
\begin{theorem} \label{ARKL}
Let $\ell\le d\le n/2$ and $d$ be known in advance to the algorithm. Any adaptive randomized algorithm that, with probability at least $2/3$, detects $\ell$ defective items must make at least $\ell\log(n/d)-1$ tests.
\end{theorem}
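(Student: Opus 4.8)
The plan is to apply Yao's minimax principle with the uniform distribution $\mathcal D$ over all $\binom{n}{d}$ subsets $I\subseteq X$ of size $d$, and to reuse verbatim the counting argument from the proof of Theorem~\ref{ADKL}. First I would reduce to the deterministic case: if a randomized algorithm $R$ detects $\ell$ defective items with probability at least $2/3$ on \emph{every} input and makes at most $q$ tests in the worst case, then $R$ is a probability distribution over deterministic algorithms, each making at most $q$ tests, and its success probability averaged over $I\sim\mathcal D$ is still at least $2/3$. Fixing the internal coins by an averaging argument, there is a single deterministic algorithm $A$, making at most $q$ tests, that outputs a correct $\ell$-subset $L\subseteq I$ for at least a $2/3$ fraction of the $\binom{n}{d}$ possible inputs.

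Next I bound the number of inputs on which such an $A$ can succeed. Since $A$ is deterministic and makes at most $q$ tests, its decision tree has at most $2^q$ leaves, so $A$ has at most $t\le 2^q$ distinct possible outputs $L_1,\dots,L_t$. Exactly as in Theorem~\ref{ADKL}, each $L_i$ can be a valid output only for sets $I$ with $I\supseteq L_i$, and there are at most $\binom{n-\ell}{d-\ell}$ such sets. Hence $A$ succeeds on at most $2^q\binom{n-\ell}{d-\ell}$ inputs, which forces $2^q\binom{n-\ell}{d-\ell}\ge \tfrac23\binom{n}{d}$, i.e.
$$2^q\ \ge\ \frac23\cdot\frac{\binom{n}{d}}{\binom{n-\ell}{d-\ell}}\ =\ \frac23\prod_{i=0}^{\ell-1}\frac{n-i}{d-i}\ \ge\ \frac23\left(\frac{n}{d}\right)^{\ell},$$
using that each factor satisfies $\frac{n-i}{d-i}\ge \frac{n}{d}$ when $n\ge d$. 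Taking logarithms gives $q\ge \ell\log(n/d)-\log(3/2)\ge \ell\log(n/d)-1$, since $\log(3/2)<1$.

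The argument is essentially routine once Theorem~\ref{ADKL} is in hand; the only steps that need a little care are the passage from ``randomized, correct with probability $\ge 2/3$ on every input'' to ``a fixed deterministic algorithm is correct on a $\ge 2/3$ fraction of a suitably chosen input distribution'' (which is precisely Yao's principle applied to $\mathcal D$), and checking that the additive loss $\log(3/2)$ stays below $1$ so it can be absorbed into the stated $-1$. I would also remark that this is a bound on the worst-case number of tests; if one instead wants to bound the \emph{expected} number of tests, a standard Markov-truncation argument (halt $R$ after $O(q)$ tests, losing only a constant in the success probability, then feed the resulting worst-case algorithm into the computation above) yields the same bound up to the absorbed constant.
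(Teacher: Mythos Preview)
Your proposal is correct and is essentially identical to the paper's own proof: both apply Yao's principle with the uniform distribution over $d$-subsets to extract a deterministic algorithm succeeding on a $2/3$ fraction of inputs, then reuse the counting argument of Theorem~\ref{ADKL} to obtain $2^q\ge \tfrac{2}{3}(n/d)^\ell$ and hence $q\ge \ell\log(n/d)-1$. The only extra content you add is the side remark on Markov truncation for expected test complexity, which the paper omits.
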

\begin{proof}
We use Yao's principle in the standard way. Let $A(s,I)$ be any adaptive randomized algorithm that, with probability at least $2/3$, detects $\ell$ defective items. Here $s$ is the random seeds, and $I$ is the set of defective items. Let $X(I,s)$ be an indicator random variable that is equal $1$ if $A(s,I)$ returns a subset $L\subset I$ of size $\ell$ and $0$ otherwise. Then for every $I$, $\E_s[X(s,I)]\ge 2/3$. Therefore, $\E_s[\E_I[X(s,I)]]=\E_I[\E_s[X(s,I)]]\ge 2/3$, where the distribution in $\E_I$ is the uniform distribution. Thus, there is a seed~$s_0$ such that $\E_I[X(s_0,I)]\ge 2/3$. That is, for at least $2{n\choose d}/3$ sets of defective items $I$, the {\it deterministic} algorithm $A(s_0,I)$ returns $L\subseteq I$ of size $\ell$. Now, similar to the proof of Theorem~\ref{ADKL}, the algorithm $A(s_0,I)$ makes at least
$$\log \frac{\frac{2}{3}{n\choose d}}{{n-\ell\choose d-\ell}}\ge \ell\log(n/d)-1.$$\qed
\end{proof}

In particular,
\begin{theorem} \label{ARUL}
Let $\ell\le d\le n/2$ and $d$ is unknown to the algorithm. Any adaptive randomized algorithm that, with probability at least $2/3$, detects $\ell$ defective items must make at least $\ell\log(n/d)-1$ tests.
\end{theorem}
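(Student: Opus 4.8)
The plan is to obtain this statement as an immediate corollary of Theorem~\ref{ARKL}, exploiting the standing observation that not knowing $d$ is never easier than knowing it. An algorithm that does not know $d$ must, by definition, succeed with probability at least $2/3$ on \emph{every} defective set, hence in particular on every defective set of size exactly $d$, for each fixed $d$ with $\ell\le d\le n/2$; and its behaviour never depends on $d$ in the first place. Consequently nothing in the proof of Theorem~\ref{ARKL} used the algorithm's knowledge of $d$ (this is precisely the content of the footnote in Section~1), so I would simply re-run that argument with an arbitrary fixed $d$ in the stated range.

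Spelling it out: fix $d$ with $\ell\le d\le n/2$ and let $A(s,I)$ denote the given randomized algorithm, $s$ its random seed. For every $I$ with $|I|=d$ we have $\E_s[X(s,I)]\ge 2/3$, where $X(s,I)$ indicates that $A(s,I)$ outputs some $L\subseteq I$ with $|L|=\ell$. Averaging over a uniformly random $d$-subset $I$ gives $\E_I\E_s[X(s,I)]\ge 2/3$, so there is a seed $s_0$ with $\E_I[X(s_0,I)]\ge 2/3$; thus the \emph{deterministic} algorithm $A(s_0,\cdot)$ outputs a correct $\ell$-subset for at least $\frac23{n\choose d}$ of the $d$-subsets. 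Since each fixed $\ell$-set $L$ is a valid output for at most ${n-\ell\choose d-\ell}$ such sets $I$, the decision tree of $A(s_0,\cdot)$ has at least $\frac23{n\choose d}/{n-\ell\choose d-\ell}$ distinct leaves, hence worst-case depth at least
$$\log\frac{\frac23{n\choose d}}{{n-\ell\choose d-\ell}}\ \ge\ \ell\log(n/d)-1,$$
using the same binomial estimate ${n\choose d}/{n-\ell\choose d-\ell}=\prod_{i=0}^{\ell-1}\frac{n-i}{d-i}\ge (n/d)^\ell$ as in the proof of Theorem~\ref{ADKL}, together with $\log(3/2)<1$. This many tests are made on some input, so the test complexity of $A$ is at least $\ell\log(n/d)-1$.

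There is essentially no obstacle here: the only point requiring a word of care is that the conclusion is a whole family of bounds, one for each admissible $d$, so one must check that the hypothesis is strong enough to feed the argument separately for every such $d$ — which it is, precisely because ``detects $\ell$ defective items'' is required to hold for all defective sets (those of size $d$ among them) without the algorithm ever being told $d$.
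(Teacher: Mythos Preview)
Your proposal is correct and matches the paper's approach exactly: the paper presents Theorem~\ref{ARUL} with the words ``In particular,'' immediately following Theorem~\ref{ARKL}, treating it as a direct corollary via the observation (made explicit in the footnote) that a lower bound for algorithms knowing $d$ is also a lower bound for algorithms not knowing it. Your spelled-out version simply reproduces the Yao-principle argument of Theorem~\ref{ARKL}, which is fine and entirely in line with the paper.
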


We now prove the upper bound when $d$ is unknown to the algorithm. This proves result~(7) in Figure~\ref{Table2}.
\begin{theorem} Let $\ell\le d/2$ and $d$ be unknown to the algorithm. There is a polynomial time adaptive randomized algorithm that detects $\ell$ defective items and makes $\ell\log (n/d)+\ell\log\log(1/\delta)+O(\ell+\log \log (\min(n/d,d))+\log(1/\delta))$ tests. 
\end{theorem}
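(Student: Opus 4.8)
The plan is to combine two ingredients we already have: the estimation algorithm of Lemma~\ref{Estd} and the known-$d$ algorithm of Theorem~\ref{ARKU}. First I would run the adaptive randomized estimation procedure of Lemma~\ref{Estd} with failure probability parameter set to, say, $\delta/2$ and constant $\epsilon=1/2$, obtaining with probability at least $1-\delta/2$ an integer $D$ with $d/2\le D\le 3d/2$; this costs $O(\log\log(\min(n/d,d))+\log(1/\delta))$ expected tests. Note that such a $D$ certainly satisfies $d/4\le D\le 4d$, so it is a valid input for Theorem~\ref{ARKU}. Then I would feed this $D$ into the algorithm of Theorem~\ref{ARKU}, run with failure probability $\delta/2$, which detects $\ell$ defective items with probability at least $1-\delta/2$ using $\ell\log(n/D)+\ell\log\log(1/\delta)+O(\ell)=\ell\log(n/d)+\ell\log\log(1/\delta)+O(\ell)$ tests (here $\log(n/D)=\log(n/d)+O(1)$ because $D=\Theta(d)$, and the additive $O(1)$ per coordinate is absorbed into the $O(\ell)$).

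By a union bound, both stages succeed with probability at least $1-\delta$, so the combined algorithm detects $\ell$ defective items with probability at least $1-\delta$. Adding the two test counts gives a total of $\ell\log(n/d)+\ell\log\log(1/\delta)+O(\ell+\log\log(\min(n/d,d))+\log(1/\delta))$ tests, matching the claimed bound. The algorithm is polynomial time since both sub-routines are. One small point to address: Theorem~\ref{ARKU} requires $\ell\le D/2$, whereas we are only given $\ell\le d/2$; since $D\ge d/2$ this does not immediately follow, but conditioned on the (good) event that $D\ge d/2$ we may still have $D$ as small as $d/2$, giving only $\ell\le d/2\le D$. To be safe I would either (i) rescale the estimation accuracy so that $D\ge (3/4)d$, whence $\ell\le d/2\le (2/3)D<D/2$ fails — so instead (ii) simply observe that if $\ell>D/2$ we fall into the regime $D<2\ell\le c\ell$ handled by the first (deterministic) branch inside the proof of Theorem~\ref{ARKU}, which has no such restriction and still gives the stated complexity. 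Cleanest is to invoke Theorem~\ref{ADKU} directly in that corner case.

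The main (and really only) obstacle is bookkeeping: making sure the failure probabilities compose correctly via the union bound, that ``expected number of tests'' in Lemma~\ref{Estd} is handled (one can either state the result for expected tests, or truncate the estimator after $O(\log\log(\min(n/d,d))+\log(1/\delta))$ tests using Markov's inequality and fold the extra failure probability into $\delta$), and that the various $\log(n/D)$ versus $\log(n/d)$ discrepancies are all $O(1)$ per defective and hence absorbed. There is no new idea needed beyond the reduction ``estimate $d$, then run the known-$d$ algorithm,'' exactly as flagged in the itemized overview for result~(7).

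\begin{proof}
Run the estimation algorithm of Lemma~\ref{Estd} with $\epsilon=1/2$ and failure probability $\delta/2$; with probability at least $1-\delta/2$ it outputs $D$ with $d/2\le D\le 3d/2$, using $O(\log\log(\min(n/d,d))+\log(1/\delta))$ expected tests. Then apply the algorithm of Theorem~\ref{ARKU} with this $D$ (which satisfies $d/4\le D\le 4d$) and failure probability $\delta/2$: with probability at least $1-\delta/2$ it detects $\ell$ defective items using $\ell\log(n/D)+\ell\log\log(1/\delta)+O(\ell)$ tests. If the returned $D$ happens to satisfy $\ell>D/2$, instead detect $\ell$ defective items directly via Theorem~\ref{ADKU}, which also uses $\ell\log(n/\ell)+O(\ell)=\ell\log(n/d)+O(\ell)$ tests in this regime since then $D=\Theta(\ell)=\Theta(d)$. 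By the union bound, both stages succeed with probability at least $1-\delta$. Since $D=\Theta(d)$ we have $\log(n/D)=\log(n/d)+O(1)$, so the total number of tests is
$$\ell\log(n/d)+\ell\log\log(1/\delta)+O(\ell+\log\log(\min(n/d,d))+\log(1/\delta)).$$
Both sub-routines run in polynomial time, hence so does the combined algorithm. \qed
\end{proof}
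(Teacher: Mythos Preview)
Your proposal is correct and follows essentially the same approach as the paper: estimate $d$ via Lemma~\ref{Estd} with failure probability $\delta/2$, then invoke Theorem~\ref{ARKU}. One minor point: your worry about the hypothesis ``$\ell\le D/2$'' is unnecessary, since Theorem~\ref{ARKU} actually requires only $\ell\le d/2$ (with $d/4\le D\le 4d$), which is already assumed here; so the corner-case branch you added is harmless but superfluous.
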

\begin{proof}
We first estimate $d$ within a factor of $2$ and probability at least $1-\delta/2$. By Lemma~\ref{Estd}, this can be done in $2\log\log (n/d)+O(\log(1/\delta))$. Then, by Theorem~\ref{ARKU}, the result follows.\qed
\end{proof}

\section{Non-Adaptive and Deterministic}
In this section, we study the test complexity of non-adaptive deterministic algorithms. 

For the upper bound, we need the following definition.
\begin{definition}
A $(r,s)$-{\it restricted weight one $t\times n$-matrix} $M$ is a $t\times n$ 0-1-matrix such that any $r$ columns in $M$ contains at least $s$ distinct weight one vectors. 

That is, for every $r$ distinct columns $j_1,j_2,\ldots,j_r\in [n]$ in $M$, there are $s$ rows $i_1,i_2,\ldots,i_s\in [t]$ such that $\{(M_{i_k,j_1},M_{i_k,j_2},\ldots,M_{i_k,j_r})\}_{k=1,\ldots,s}$ are $s$ distinct vectors of weight one. 
\end{definition}
The following is obvious.
\begin{lemma}\label{scale} Let $\ell<s$.
If $M$ is $(r,s)$-restricted weight one $t\times n$-matrix, then it is $(r-\ell,s-\ell)$ and $(r,s-\ell)$-restricted weight one $t\times n$-matrix.
\end{lemma}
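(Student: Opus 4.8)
The plan is to verify both claimed implications directly from the definition of a restricted weight one matrix, since nothing more than a monotonicity/subset argument is needed. Suppose $M$ is a $(r,s)$-restricted weight one $t\times n$-matrix and fix $\ell<s$.

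For the $(r,s-\ell)$ case: take any $r$ distinct columns $j_1,\dots,j_r$. By hypothesis there exist $s$ rows yielding $s$ distinct weight-one vectors among those columns. Any sub-collection of these of size $s-\ell$ still consists of $s-\ell$ distinct weight-one vectors, so the same $r$ columns contain at least $s-\ell$ distinct weight-one vectors. Since the columns were arbitrary, $M$ is $(r,s-\ell)$-restricted weight one.

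For the $(r-\ell,s-\ell)$ case: take any $r-\ell$ distinct columns $j_1,\dots,j_{r-\ell}$ (note $r-\ell\ge s-\ell\ge 1$, so this is nonvacuous, and also $r-\ell\le n$). Extend this set arbitrarily to $r$ distinct columns $j_1,\dots,j_{r-\ell},j_{r-\ell+1},\dots,j_r$ (possible as long as $r\le n$, which we may assume since otherwise the $(r,s)$-property is vacuous and there is nothing to prove, or one restricts attention to the meaningful range). By the $(r,s)$-property there are $s$ rows $i_1,\dots,i_s$ giving $s$ distinct weight-one vectors $v_k=(M_{i_k,j_1},\dots,M_{i_k,j_r})$. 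Each $v_k$ has exactly one $1$; at most $\ell$ of the $v_k$ can have that $1$ located in one of the $\ell$ added coordinates $j_{r-\ell+1},\dots,j_r$ (since those contribute at most $\ell$ distinct weight-one vectors total — indeed at most one per added column), so at least $s-\ell$ of the $v_k$ have their unique $1$ among the first $r-\ell$ coordinates. Restricting those $\ge s-\ell$ vectors to the first $r-\ell$ coordinates keeps them weight-one and keeps them pairwise distinct (distinctness of weight-one vectors is determined by the position of the $1$, which is unchanged). Hence the columns $j_1,\dots,j_{r-\ell}$ contain at least $s-\ell$ distinct weight-one vectors, and since they were arbitrary, $M$ is $(r-\ell,s-\ell)$-restricted weight one.

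The only mild subtlety — and the one place to be careful — is the counting step in the second part: arguing that adding $\ell$ columns can "absorb" at most $\ell$ of the distinct weight-one patterns. This is immediate once one observes that a weight-one vector on the added coordinates is determined by which of those $\ell$ coordinates carries the $1$, so there are at most $\ell$ such patterns, hence at most $\ell$ of the $s$ distinct vectors fall into that case. Everything else is bookkeeping, so I would keep the write-up to a couple of sentences, as the lemma is flagged "obvious."
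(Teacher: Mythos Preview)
Your argument is correct and is precisely the natural verification from the definition; the paper itself gives no proof and simply declares the lemma ``obvious.'' The one minor caveat you raise about $r\le n$ is not addressed in the paper either, and in the paper's applications (where $r=2d$ and $d\le n/2$) it is automatically satisfied, so your treatment is entirely adequate.
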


We prove the following simple properties of such a matrix.
\begin{lemma}\label{a234} Let $n>d>\ell>0$.
Let $M$ be a $(2d,d+\ell)$-restricted weight one $t\times n$-matrix. Let $Q^{(i)}=\{j|M_{i,j}=1\}$ for $i\in [t]$.
\begin{enumerate}
\item\label{a2341} For every two sets $A\subset B\subseteq [n]$ where $|A|=d$ and $|B|=2d$, there is $Q^{(i)}$ such that $Q^{(i)}\cap A=\emptyset$ and $Q^{(i)}\cap B\not=\emptyset$. 
\item For every $C\subseteq E\subseteq [n]$ where $|C|=d$ and $|E|\le 2d$ there are $\ell$ sets $Q^{(i_1)},\ldots, Q^{(i_\ell)}$ such that $|Q^{(i_j)}\cap E|=|Q^{(i_j)}\cap C|=1$ and for every $j_1\not=j_2$, $Q^{(i_{j_1})}\cap C\not=Q^{(i_{j_2})}\cap C$.
\end{enumerate}
\end{lemma}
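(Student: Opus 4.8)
The two statements are about a $(2d, d+\ell)$-restricted weight one matrix, and both should follow from unpacking the definition together with Lemma~\ref{scale}. For part~\ref{a2341}, the plan is to extend $A$ to a set $B'$ of exactly $2d$ columns with $A \subseteq B' \subseteq [n]$ (this is possible since $|B| = 2d \ge |A|$ and in fact I can just take $B' = B$, as $|B|=2d$ already). Applying the restricted-weight-one property to the $2d$ columns of $B'$ yields $d+\ell$ rows carrying distinct weight-one vectors on those columns; since $|A| = d < d+\ell$, by pigeonhole at least $\ell \ge 1$ of these weight-one vectors must have their unique $1$ in a coordinate of $B' \setminus A$. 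Such a row $i$ gives $Q^{(i)} \cap A = \emptyset$ (the row restricted to $B'$ is weight one with its $1$ outside $A$, so it is all-zero on $A$) while $Q^{(i)} \cap B' \ne \emptyset$ (the $1$ is in $B' \setminus A \subseteq B'$), which is what we need with $B' = B$.

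For part~2, the plan is to first enlarge $E$ to a set $E'$ with $C \subseteq E' \subseteq [n]$ and $|E'| = 2d$ (possible since $|E| \le 2d \le n$; pad with arbitrary extra columns). Apply the $(2d, d+\ell)$-restricted property to $E'$: there are $d+\ell$ rows $i_1, \dots, i_{d+\ell}$ such that the restrictions of these rows to the columns of $E'$ are $d+\ell$ \emph{distinct} weight-one vectors. Each such row $i_k$ has $|Q^{(i_k)} \cap E'| = 1$. Now partition these $d+\ell$ rows according to whether the unique $1$ falls inside $C$ or inside $E' \setminus C$: since $|C| = d$, at most $d$ of these (pairwise distinct) weight-one vectors can have their $1$ inside $C$, so at least $\ell$ of them, say $i_{j_1}, \dots, i_{j_\ell}$, have their unique $1$ inside $C$ — wait, that's the wrong direction. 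Let me reconsider: we want $|Q^{(i_j)} \cap C| = 1$, so we want the $1$ to land \emph{in} $C$. The weight-one vectors whose $1$ lands outside $C$ number at most $|E' \setminus C| = d$; hence at least $(d+\ell) - d = \ell$ of the $d+\ell$ distinct weight-one vectors have their $1$ inside $C$. Pick those $\ell$ rows. For each, $Q^{(i_j)} \cap C$ is a single element; and since the vectors restricted to $E'$ are distinct and each is weight one with its $1$ in $C$, the singletons $Q^{(i_j)} \cap C$ are pairwise distinct. Finally $|Q^{(i_j)} \cap E| = 1$ follows from $C \subseteq E \subseteq E'$ and $1 = |Q^{(i_j)} \cap C| \le |Q^{(i_j)} \cap E| \le |Q^{(i_j)} \cap E'| = 1$.

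The only mild subtlety — hardly an obstacle — is making sure the padding is legitimate: for part~1 we need $|B| = 2d \le n$, which holds since $n > d$ gives $n \ge d+1$ but we actually need $n \ge 2d$; this is not hypothesized, but if $n < 2d$ the matrix being $(2d,\cdot)$-restricted is vacuous/ill-defined, so the standing assumption is implicitly $n \ge 2d$, or one reads "$2d$ columns" as "$\min(2d,n)$ columns" and the argument adapts verbatim. Similarly for part~2 we need $2d \le n$ to enlarge $E$. I would simply note $n \ge 2d$ is assumed (consistent with the earlier convention $\ell \le d/4$ and $D \le 4d$). The core of both parts is the same one-line pigeonhole: among $d+\ell$ distinct weight-one vectors on a $2d$-column window, at most $d$ can avoid a fixed $d$-subset, leaving at least $\ell$ that hit it — and "hitting with weight one" is exactly the $|Q^{(i)} \cap \cdot| = 1$ condition, with distinctness of the vectors giving distinctness of the hit elements.
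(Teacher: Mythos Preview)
Your proposal is correct and follows essentially the same approach as the paper's proof: for part~1 you apply the $(2d,d+\ell)$ property directly to $B$ and use pigeonhole (at most $d$ of the $d+\ell$ distinct weight-one vectors can have their $1$ inside $A$), and for part~2 you pad $E$ to size $2d$ and argue that at most $|E'\setminus C|=d$ of the $d+\ell$ weight-one vectors miss $C$, leaving at least $\ell$ with their unique $1$ in $C$. The paper does exactly this, with the same padding trick and the same counting; your mention of Lemma~\ref{scale} is unnecessary (and you correctly never invoke it), and your side remark about $n\ge 2d$ is a fair observation that the paper leaves implicit.
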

\begin{proof}
Consider the columns of $M$ with the indices of $A$ and $B$. There are $d+\ell$ distinct weight one vectors in the columns with indices $B$. Since $d+\ell>d$ and $|A|=d$, one of those vectors is zero in all the indices of $A$. Therefore, $M$ contains a row $i$ that is zero in the indices of $A$ and of weight one on the indices of $B$.
Thus, $Q^{(i)}$ satisfies $Q^{(i)}\cap A=\emptyset$ 
and $Q^{(i)}\cap B\not=\emptyset$. This proves~\ref{a2341}.

Assume that $|E|=2d$. Otherwise, add $2d-|E|$ new items to $E$. Consider the columns of $M$ with the indices of $E$ and $C\subseteq E$. There are $d+\ell$ distinct weight one vectors in the columns of $M$ with indices of $E$. Since $C\subset E$ and $|E\backslash C|=d$, at least $\ell$ of those vectors are zero in the indices of $E\backslash C$ and weight one in the indices of $C$. Let $i_1,\ldots, i_\ell$ be the rows that correspond to those vectors. Then $|Q^{(i_j)}\cap E|=|Q^{(i_j)}\cap C|=1$, and for every $j_1\not=j_2$, $Q^{(i_{j_1})}\cap C\not=Q^{(i_{j_2})}\cap C$.\qed
\end{proof}

We now prove
\begin{lemma}\label{constR} Let $s\le cr$ for some constant $1/2<c<1$ and
$$t=O\left(\frac{r\log(n/r)+\log(1/\delta)}{\log(1/c)}\right).$$ consider a $t\times n$ 0-1-matrix $M$ where $M_{i,j}=1$ with probability $1/r$.
Then, with probability at least $1-\delta$, $M$ is a $(r,s)$-restricted weight one $t\times n$-matrix.

In particular, there is a $(r,s)$-restricted weight one $t\times n$-matrix with
$$t=O\left(\frac{r\log(n/r)}{\log(1/c)}\right).$$
\end{lemma}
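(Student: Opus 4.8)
The plan is to use the probabilistic method: show that a union bound, over all $\binom{n}{r}$ choices of $r$ columns, of the probability that the random matrix fails on that particular set of columns, is at most $\delta$. So fix a set $S=\{j_1,\dots,j_r\}$ of $r$ columns. Call a row $i$ \emph{$S$-good} if $\sum_{k=1}^{r} M_{i,j_k}=1$, and when row $i$ is $S$-good say it \emph{covers} the coordinate $k\in[r]$ for which $M_{i,j_k}=1$. By the definition of a restricted weight one matrix, $M$ is ``bad for $S$'' exactly when fewer than $s$ coordinates of $[r]$ are covered by some row. I will bound $\Pr[M\text{ bad for }S]$ and then multiply by $\binom{n}{r}\le(en/r)^r$.

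For the per-$S$ bound I will argue in two stages. First, $\Pr[\text{row }i\text{ is }S\text{-good}]=(1-1/r)^{r-1}\ge 1/e$ (since $(r-1)\ln(1-1/r)\ge-1$), so the number $G$ of $S$-good rows satisfies $\E[G]\ge t/e$, and by Chernoff's bound (Lemma~\ref{Chernoff}, inequality~(\ref{Chernoff3})) we get $\Pr[G<t/(2e)]\le e^{-t/(8e)}$. Second, I condition on $G=g$ with $g\ge t/(2e)$: by symmetry of the $r$ columns of $S$, the coordinates covered by the $g$ good rows are i.i.d.\ uniform on $[r]$, so ``fewer than $s$ coordinates covered'' is the event that $g$ i.i.d.\ uniform draws from $[r]$ miss at least $r-s+1$ values. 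Union-bounding over the $\binom{r}{r-s+1}\le 2^{r}$ candidate ``missed'' sets $B$ with $|B|=r-s+1$, each contributes $(1-|B|/r)^{g}=((s-1)/r)^{g}<c^{g}$, using $s\le cr$; hence $\Pr[\text{fewer than }s\text{ covered}\mid G=g]\le 2^{r}c^{g}\le 2^{r}c^{\,t/(2e)}$.

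Combining the two stages, $\Pr[M\text{ bad for }S]\le e^{-t/(8e)}+2^{r}c^{\,t/(2e)}$, and the union bound over the at most $(en/r)^{r}$ sets $S$ gives $\Pr[M\text{ not }(r,s)\text{-restricted}]\le(en/r)^{r}e^{-t/(8e)}+(2en/r)^{r}c^{\,t/(2e)}$. Asking each summand to be at most $\delta/2$ and solving for $t$ — the second summand is the dominant requirement and supplies the $\log(1/c)$ in the denominator, while $\log(1/c)<1$ lets it also cover the first summand — shows that $t=C\cdot\big(r\log(n/r)+\log(1/\delta)\big)/\log(1/c)$ is enough for a suitable absolute constant $C$. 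The ``in particular'' clause then follows by taking $\delta$ equal to a constant, e.g.\ $1/2$: a random $M$ with $t=O\big(r\log(n/r)/\log(1/c)\big)$ rows then has the property with positive probability, so such a matrix exists.

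I expect the main subtlety to be the second stage of the per-$S$ argument: one must observe that, conditioned only on the \emph{number} of $S$-good rows, the covered coordinates are i.i.d.\ uniform, and — more importantly for the stated bound — one must carry out the combinatorial union bound over the $r-s+1$ \emph{missed} coordinates rather than over single coordinates. It is this choice that produces the probability $c^{g}$ and hence the $1/\log(1/c)$ dependence; a per-coordinate union bound would only yield a $1/(1-c)$ factor (equivalent for constant $c\in(1/2,1)$ but weaker as $c\to 1$). The Chernoff estimate, the inequality $(1-1/r)^{r-1}\ge 1/e$, and the bookkeeping for the conditioning are routine.
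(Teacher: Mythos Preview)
Your proposal is correct and follows essentially the same approach as the paper: a union bound over the $\binom{n}{r}$ column sets, with a two-stage per-set argument that first uses Chernoff to guarantee $\Theta(t)$ weight-one rows and then union-bounds over the $\binom{r}{s-1}=\binom{r}{r-s+1}$ possible ``covered''/``missed'' coordinate sets to obtain the $2^{r}c^{\Theta(t)}$ term. The only (inconsequential) difference is in constants---you use the correct bound $(1-1/r)^{r-1}\ge 1/e$, whereas the paper states $\ge 1/2$, which fails for $r\ge 3$; this affects only absolute constants and not the argument.
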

\begin{proof}
 Consider any $r$ columns $J=\{j_1,\ldots,j_{r}\}$ in $M$. Let $A_J$ be the event that columns $J$ in $M$ do not contain at least $s$ distinct weight one vectors. For every $i\in [t]$, the probability that $(M_{i,j_1},\ldots,M_{i,j_r})$ is of weight $1$ is ${r\choose 1}(1/r)(1-1/r)^{r-1}\ge 1/2$. In every such row, the entry that is equal to $1$ is distributed uniformly at random over $J$.  Let $m_J$ be the number of such rows. The probability that columns $J$ in $M$ do not contain at least $s$ distinct weight one vectors is at most
$$\Pr[A_J|m_J=m]\le {r\choose s-1}\left(\frac{s-1}{r}\right)^{m}\le 2^rc^{m}.$$ 
Since $\E[m_J]\ge t/2$, by Chernoff's bound, 
$$\Pr\left[m_J<\frac{t}{4}\right]\le 2^{-t/16}.$$
Therefore, the probability that $M$ is not $(r,s)$-restricted weight one $t\times n$-matrix is at most
\begin{eqnarray*}
\Pr[(\exists J\subset [n], |J|=r) A_J]&\le& {n\choose r} \Pr[A_J]\\
&\le& {n\choose r} (\Pr[A_J|m_J\ge t/4]+\Pr[m_J<t/4])\\
&\le& {n\choose r}\left(2^rc^{t/4}+2^{-t/16}\right)\\
&\le& {n\choose r}2^{r+1}c^{t/16}\le \delta
\end{eqnarray*}\qed
\end{proof}

We now show how to use the $(r,s)$-restricted weight one matrix for testing.
\begin{lemma}\label{AlgDN} Let $D$ be an integer. If there is a $t\times n$-matrix such that for every $D/4\le d'\le 4D$, $M$ is $(2d',d'+\ell)$-restricted weight one matrix, then there is a non-adaptive deterministic algorithm that, when $d/4\le D\le 4d$ is known in advance to the algorithm, detects $\ell$ defective items and makes $t$ tests.  
\end{lemma}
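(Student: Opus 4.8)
The plan is to use the rows of $M$ directly as the $t$ tests and to decode the answers using the two combinatorial guarantees of Lemma~\ref{a234}. Put $Q^{(i)}=\{j\mid M_{i,j}=1\}$ for $i\in[t]$ and ask all $t$ tests. The condition $d/4\le D\le 4d$ is equivalent to $D/4\le d\le 4D$, so the hypothesis applied with $d'=d$ guarantees that $M$ is a $(2d,d+\ell)$-restricted weight one matrix; hence both parts of Lemma~\ref{a234} are available for the (unknown) true value $d$. The key point for legitimacy is that the decoder described below never refers to $d$, only to the fixed matrix $M$ chosen from $D$, so it is a valid non-adaptive algorithm even though $d$ is known only up to the factor-$4$ estimate.

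First I would isolate a small candidate set. Let $N=\bigcup\{Q^{(i)}\mid T_I(Q^{(i)})=0\}$ be the union of all negative tests and set $E=[n]\setminus N$; this is computable from the answers alone. Clearly $I\subseteq E$, since a defective item can never lie in a negative test. I claim $|E|<2d$. Otherwise $|E\setminus I|\ge d$, and taking $A=I$ and $B=I\cup S$ for some $S\subseteq E\setminus I$ with $|S|=d$ (so $|B|=2d$ and $A\subsetneq B$), Lemma~\ref{a234}(1) yields a test $Q^{(i)}$ with $Q^{(i)}\cap I=\emptyset$ and $Q^{(i)}\cap B\neq\emptyset$; the first condition makes $Q^{(i)}$ negative, so $Q^{(i)}\subseteq N$, while it also forces $Q^{(i)}\cap B=Q^{(i)}\cap S\neq\emptyset$, hence $Q^{(i)}\cap E\neq\emptyset$, contradicting $Q^{(i)}\subseteq N=[n]\setminus E$. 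So $d\le|E|<2d$.

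Next I would read off $\ell$ defectives from $E$. The algorithm collects every test $Q^{(i)}$ that is positive and satisfies $|Q^{(i)}\cap E|=1$, and for each such test records the unique element of $Q^{(i)}\cap E$. Any such element is genuinely defective: positivity gives $\emptyset\neq Q^{(i)}\cap I\subseteq Q^{(i)}\cap E$ (using $I\subseteq E$), and $|Q^{(i)}\cap E|=1$ then forces $Q^{(i)}\cap I=Q^{(i)}\cap E$, a single defective. To see that at least $\ell$ distinct defectives are produced, apply Lemma~\ref{a234}(2) with $C=I$ and this $E$ (admissible since $|C|=d$, $|E|\le 2d$, $C\subseteq E$): it provides $\ell$ tests $Q^{(i_1)},\dots,Q^{(i_\ell)}$ with $|Q^{(i_j)}\cap E|=|Q^{(i_j)}\cap I|=1$ and pairwise distinct singletons $Q^{(i_j)}\cap I$. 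Each $Q^{(i_j)}$ is positive and meets $E$ in exactly one point, so it is one of the tests the algorithm uses, and the $\ell$ recovered elements $Q^{(i_j)}\cap I$ are distinct and defective. Output any $\ell$ of the collected items; the total number of tests is exactly $t$, and the procedure is clearly polynomial time once $M$ is given.

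The argument is largely bookkeeping once Lemma~\ref{a234} is in hand; the one place needing care is the definition of the candidate set $E$ and the bound $|E|<2d$, since everything downstream — in particular the applicability of Lemma~\ref{a234}(2) — rests on it, and one must check that $E$ is a function of the observed answers only and is oblivious to the exact value of $d$.
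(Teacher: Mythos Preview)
Your proof is correct and essentially identical to the paper's: both use the rows of $M$ as tests, form the candidate set by deleting all items covered by negative tests (your $E$ is the paper's $X'$), bound its size below $2d$ via Lemma~\ref{a234}(1), and then extract $\ell$ distinct defectives from positive tests hitting $E$ in a single point via Lemma~\ref{a234}(2). If anything, your write-up is slightly more careful in two places: you explicitly construct $B=I\cup S$ to ensure $A\subset B$ before invoking Lemma~\ref{a234}(1), and you spell out why every collected item is genuinely defective.
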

\begin{proof} Since $d/4\le D\le 4d$, we have $D/4\le d\le 4D$, and therefore, the matrix $M$ is $(2d,d+\ell)$-restricted weight one $t\times n$-matrix. We give the algorithm. The tests of the algorithm are $Q^{(i)}=\{j|M_{i,j}=1\}$, $i\in[t]$. The following is the algorithm:
\begin{enumerate}
    \item Let $Answer_i=T_I(Q^{(i)})$.
    \item Let $X=[n]$; $Y=\emptyset$.
    \item For $i=1$ to $t$
    \item \hspace{.3in} If $Answer_i=0$ then $X\gets X\backslash Q^{(i)}$.
    \item For $i=1$ to $t$
    \item \hspace{.3in} If $Answer_i=1$ and $|Q^{(i)}\cap X|=1$ then $Y\gets Y\cup (Q^{(i)}\cap X)$.
    \item Output $Y$.
\end{enumerate}
Let $X'=X$ after executing steps 3-4. We first show that $|X'|<2d$ and $I\subset X'$, i.e., $X'$ contains all the defective items. First, if $Answer_i=T_I(Q^{(i)})=0$, then $Q^{(i)}\cap I=\emptyset$, and therefore, $I\subset X\backslash Q^{(i)}$. Thus, by step~4, $I\subseteq X'$. 

By step 4, it follows that if $T_I(Q^{(i)})=0$, then $X'\cap Q^{(i)}=\emptyset$.
Now, assume to the contrary that $|X'|\ge 2d$. Consider any subset $X''\subset X'$ of size $|X''|=2d$. Since $|I|= d$, by Lemma~\ref{a234}, there is $Q^{(j)}$ such that $Q^{(j)}\cap I=\emptyset$ and $Q^{(j)}\cap X''\not=\emptyset$. Therefore, $T_I(Q^{(j)})=0$ and $X'\cap Q^{(j)}\not=\emptyset$. A contradiction. 

Now $I\subseteq X'$, $|I|=d$ and $|X'|\le 2d$. By Lemma~\ref{a234}, there are $\ell$ sets $Q^{(i_1)},\ldots,Q^{(i_\ell)}$ such that $|Q^{(i_j)}\cap I|=|Q^{(i_j)}\cap X'|=1$, and for every $j_1\not=j_2$, $Q^{(i_{j_1})}\cap I\not=Q^{(i_{j_2})}\cap I$. Therefore, step~6 detects at least $\ell$ defective items.\qed
\end{proof}

We are now ready to prove the upper bound. This proves~(9) in Figure~\ref{Table2}.
\begin{theorem} Let $\ell\le D/8$. There is a non-adaptive deterministic algorithm that, when $d/4\le D\le 4d$ is known in advance to the algorithm, detects $\ell$ defective items and makes $O(d\log(n/d))$ tests.  
\end{theorem}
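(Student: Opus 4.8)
The plan is to reduce the statement to the three lemmas just established. By Lemma~\ref{AlgDN} it suffices to exhibit a single $t\times n$ matrix $M$ with $t=O(d\log(n/d))$ that is simultaneously $(2d',d'+\ell)$-restricted weight one for \emph{every} $d'$ in the range $D/4\le d'\le 4D$; feeding such an $M$ to the algorithm of Lemma~\ref{AlgDN} then detects $\ell$ defective items with $t$ tests whenever $d/4\le D\le 4d$. The degenerate cases $\ell=0$ and ``$8D\ge n$'' are trivial (output $\emptyset$, resp.\ test all items individually), so I assume $\ell\ge 1$ and $8D<n$, which in particular makes $O(D\log(n/D))=O(d\log(n/d))$.

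First I would build one ``maximal'' restricted-weight-one matrix and then scale it down using Lemma~\ref{scale}. Put $r=8D$ and let $s$ be the least integer with $s\ge 31D/4+\ell$. Since $\ell\le D/8$ (and $D\ge 8$ because $\ell\ge 1$), a short computation gives $s\le (255/256)\,r$, so $s\le cr$ for the absolute constant $c=255/256\in(1/2,1)$. Lemma~\ref{constR} then yields a $(8D,s)$-restricted weight one $t\times n$ matrix $M$ with $t=O\!\left(\frac{r\log(n/r)}{\log(1/c)}\right)=O\!\left(D\log(n/D)\right)=O(d\log(n/d))$, the last equality using $D=\Theta(d)$.

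Next I would check that this single $M$ suffices for every $d'\in[D/4,4D]$. Fix such a $d'$ and set $k:=8D-2d'$, so that $0\le k<s$; the $(r-\ell,s-\ell)$ part of Lemma~\ref{scale} shows $M$ is $(2d',\,s-8D+2d')$-restricted weight one, and from $s\ge 31D/4+\ell$ together with $d'\ge D/4$ one verifies $s-8D+2d'\ge d'+\ell$. Applying Lemma~\ref{scale} once more, now the $(r,s-\ell)$ part, brings the second parameter down to exactly $d'+\ell$. Hence $M$ meets the hypothesis of Lemma~\ref{AlgDN}, and the algorithm furnished there is the required non-adaptive deterministic algorithm.

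I expect the only genuine obstacle to be the bookkeeping that lets one matrix cover the entire interval $[D/4,4D]$ of possible values of $d'$ while keeping the density parameter $c$ bounded away from $1$; this is precisely where the hypothesis $\ell\le D/8$ is used, since it forces the starting second parameter $s=\Theta(D)$ to lie below $r=8D$ by a constant factor, so that $1/\log(1/c)=O(1)$ and the test count does not inflate beyond $O(d\log(n/d))$. Everything else is a direct chaining of Lemmas~\ref{constR}, \ref{scale} and~\ref{AlgDN}.
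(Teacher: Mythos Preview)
Your proof is correct and follows essentially the same route as the paper: construct one $(8D,\,31D/4+\ell)$-restricted weight-one matrix via Lemma~\ref{constR} (using $\ell\le D/8$ to keep $s/r$ bounded strictly below~$1$), then apply Lemma~\ref{scale} twice to downgrade it to $(2d',d'+\ell)$ for every $d'\in[D/4,4D]$, and finish with Lemma~\ref{AlgDN}. The only cosmetic difference is that you invoke the two clauses of Lemma~\ref{scale} in the opposite order (you shrink $r$ first, then $s$; the paper does it the other way round), which is immaterial.
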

\begin{proof}
Since $d/4\le D\le 4d$, we have $D/4\le d\le 4D$. We construct a $\left(r,s\right)$-restricted weight one $t\times n$-matrix where $r=8D$ and $s=7\frac{3}{4}D+\ell$. Since $\ell\le D/8$, we have $s/r\le 0.985$, and by Lemma~\ref{constR}, there is a $\left(8D,7\frac{3}{4}D+\ell\right)$-restricted weight one $t\times n$-matrix with 
$$t=O\left(8D\log \frac{n}{8D}\right)=O\left(d\log \frac{n}{d}\right).$$
Let $D/4\le d'\le 4D$. By 
 Lemma~\ref{scale}, $M$ is also a $(8D,7\frac{3}{4}D+\ell-(d'-D/4)=8D-d'+\ell)$-restricted weight one $t\times n$-matrix and $(8D-(8D-2d'),(8D-d'+\ell)-(8D-2d'))=(2d',d'+\ell)$-restricted weight one $t\times n$-matrix. Then, by Lemma~\ref{AlgDN}, the result follows. \qed
\end{proof}

We now prove the lower bound. This proves~(10) in Figure~\ref{Table2}.
\begin{theorem}\label{NDKL}
Suppose some integer $D$ is known in advance to the algorithm where $d/4\le D\le 4d$. Any non-adaptive deterministic algorithm that detects one defective item must make at least $\Omega(d\log(n/d))$ tests.
\end{theorem}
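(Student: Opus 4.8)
The plan is to follow the two‑step route sketched in the introduction: first turn any correct non‑adaptive deterministic algorithm into a restricted weight one matrix, and then show that any such matrix needs $\Omega(d\log(n/d))$ rows. Note that the hypothesis ``$d/4\le D\le 4d$ is known'' only makes the algorithm's task easier, so it suffices to prove the bound for an algorithm that is merely required to output a defective item on every defective set of size exactly $d$.

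\emph{Step (i): from the algorithm to the matrix.} Let $A$ be such an algorithm, with tests $Q^{(1)},\dots,Q^{(t)}$, and let $M$ be the $t\times n$ matrix with $M_{i,j}=1$ iff $j\in Q^{(i)}$. I claim $M$ is a $(d+1,1)$‑restricted weight one matrix; this is all one needs, and the $(2d,1)$ form stated in the table follows along the same lines. Suppose not: there is $B\subseteq[n]$, $|B|=d+1$, such that every trace $Q^{(i)}\cap B$ is empty or has size $\ge 2$. For $x\in B$ set $I_x:=B\setminus\{x\}$, a valid defective set of size $d$. For any test $Q$, the trace $Q\cap(B\setminus\{x\})=(Q\cap B)\setminus\{x\}$ is nonempty iff $Q\cap B\neq\emptyset$ (deleting one element never empties a trace of size $\ne 1$, and $Q\cap B\ne\{x\}$ because $|Q\cap B|\ne 1$); hence $T_{I_x}(Q)=[\,Q\cap B\neq\emptyset\,]$, which does not depend on $x$. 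So $A$ receives the same answer vector on every input $I_x$ and returns one fixed item $z$; correctness forces $z\in I_x=B\setminus\{x\}$ for all $x\in B$, i.e.\ $z\in\bigcap_{x\in B}(B\setminus\{x\})=\emptyset$, a contradiction. Hence $M$ is $(d+1,1)$‑restricted weight one.

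\emph{Step (ii): lower bound on such matrices.} It remains to prove that every $(d+1,1)$‑restricted weight one $t\times n$ matrix satisfies $t=\Omega(d\log(n/d))$; together with Step (i) this gives the theorem. Such a matrix is exactly an $(n,d+1,1)$‑selector, and the required bound is the matching selector lower bound. I would prove it by a recursive argument that charges $\Omega(d)$ rows in each of $\Omega(\log(n/d))$ rounds: in each round, isolate on a suitably chosen half of the columns a block of $O(d)$ rows that may be discarded without destroying the restricted weight one property, then pass to that half; since at least one row survives once only $O(d)$ columns remain, one accumulates $\Omega(d\log(n/d))$ rows. The base case $d=1$ is just ``all columns distinct'', i.e.\ $t\ge\log n$.

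\emph{The main obstacle} is Step (ii). A union bound over the $\binom{n}{d+1}$ column subsets is hopeless here: a single row of weight $\approx n/d$ already supplies a weight one trace for a constant fraction (about $e^{-1}$) of all $(d+1)$‑subsets, so no row can be ``wasted'' in a union bound, and one must instead argue that the rows useful for different parts of the ground set cannot be reused — which is precisely what the iterative deletion is meant to formalize, and where the calculation has to be carried out with care. Step (i) and the final bookkeeping are routine by comparison.
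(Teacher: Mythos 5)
Your Step (i) is fine and is essentially the paper's first move (toggle one element of the defective set to force a weight-one row), with the nice simplification that sets of size exactly $d$ suffice. The problem is Step (ii), which you yourself identify as the main obstacle and then only gesture at. The statement you need --- every $(d+1,1)$-restricted weight one $t\times n$ matrix has $t=\Omega(d\log(n/d))$ --- is exactly the tight lower bound for $(d+1,1)$-selectors (equivalently, exact-size selective families), and it is a genuinely nontrivial theorem of Clementi--Monti--Silvestri type; nothing in your sketch establishes it. A union bound is hopeless, as you note, but the proposed fix is not a proof: the recursion $t(n)\ge \Omega(d)+t(n/2)$ requires exhibiting, for an \emph{arbitrary} isolating family, a half of the columns on which $\Omega(d)$ rows become discardable while the restricted family still isolates every $(d+1)$-subset of the half. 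Isolation is indeed preserved under restriction of the universe, but the obvious ways to discard rows (rows that restrict to the empty set, or to duplicates of other rows) can fail completely, e.g. when all rows are large or are generic sets of size about $n/d$, whose restrictions to any half remain nonempty and pairwise distinct. So the key charging step is missing, and it is not a routine calculation.

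The paper avoids this difficulty by extracting a \emph{stronger} property in its version of Step (i): since the algorithm (knowing $D$) must be correct on defective sets of every size in a range around $d$, its test matrix has a weight-one row inside every set of $w$ columns for all $w\in\{d,\dots,2d\}$, not just $w=d+1$. This immediately rules out GF$(2)$ dependencies among any $w\in[d,2d]$ columns; removing a maximal dependent set $J_0$ of fewer than $d$ columns, every $2d$ of the remaining $n-|J_0|\ge n-d$ columns are linearly independent, hence all $\binom{n-d}{d}$ syndromes of $d$-subsets are distinct and $t\ge\log\binom{n-d}{d}=\Omega(d\log(n/d))$. I suggest you either strengthen your Step (i) in the same way (use defective sets of sizes $d-1,\dots,2d$, all consistent with the promise on $D$ for a suitable $d=\Theta(D)$) and finish with this linear-algebra/syndrome-counting argument, or else supply a complete proof of the exact-size selector lower bound --- but as written, Step (ii) is a genuine gap.
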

\begin{proof}
Consider any non-adaptive deterministic algorithm ${\cal A}$ that detects one defective item. Let $M$ be a 0-1-matrix of size $t\times n$ that their rows are the 0-1-vectors that correspond to the tests of ${\cal A}$. That is, if $Q^{(i)}$ is the $i$th test of ${\cal A}$, then the $i$th row of $M$ is $(M_{i,1},\ldots,M_{i,n})$ when $M_{i,j}=1$ if $j\in Q^{(i)}$ and $M_{i,j}=0$ otherwise.
Let $M^{(i)}$ be the $i$th column of $M$. Let $I=\{i_1,i_2,\ldots,i_w\}\subseteq [n]$ be any set of size $w\in \{d,d+1,\ldots,2d\}$. If $I$ is the set of defective items, then $\vee_{i\in I}M^{(i)}$ (bitwise or) is the vector of the answers of the tests of ${\cal A}$. Suppose that when $I$ is the set of defective items, ${\cal A}$ outputs $i_j$. Consider the case when the set of defective items is $I'=I\backslash \{i_j\}$. 
Since the answer of ${\cal A}$ on $I'$ is different from the answer on $I$, and  ${\cal A}$ is deterministic, we must have $\vee_{i\in I}M^{(i)}\not=\vee_{i\in I'}M^{(i)}$. Therefore, columns $I$ must contain a vector of weight one in $M$. So far, we have proved that every $w\in\{d,d+1,\ldots,2d\}$ columns in $M$ contains a vector of weight one. 

This also implies that if $J\subset [n]$ and $|J|\in \{d,d+1,\ldots,2d\}$, then $\oplus_{j\in J}M^{(j)}\not=0$ (bitwise xor). This is because if $\oplus_{j\in J}M^{(j)}=0$, then the columns $J$ do not contain a vector of weight one. 

Now consider the maximum size subset $J_0\subset [n]$, $|J_0|<d$ such that $\oplus_{j\in J_0}M^{(j)}=0$. We claim that there is no set $J'\subset [n]\backslash J_0$, $|J'|\le d$, such that $\oplus_{j\in J'}M^{(j)}=0$. This is because if such $J'$ exists, then $\oplus_{j\in J_0\cup J'}M^{(j)}=0$. Then if $|J_0\cup J'|\in \{d,d+1,\ldots,2d\}$, we get a contradiction, and if $|J_0\cup J'|<d$, then $|J_0\cup J'|>|J_0|$ and $J_0$ is not maximum, and again we get a contradiction. Therefore, no set $J'\subset [n]\backslash J_0$, $|J'|\le d$ satisfies $\oplus_{j\in J'}M^{(j)}=0$. 

Consider the sub-matrix $M'$ composed of the $n-|J_0|\ge n-d$ columns $[n]\backslash J_0$ of $M$. The above property shows that every $2d$ columns in $M'$ are linearly independent over the field $GF(2)$. Then the result immediately follows from the bounds on the number of rows of the parity check matrix in coding theory~\cite{RonRothBook}. We give the proof for completeness. 

We now show that the xor of any $d$ columns in $M'$ is district from the xor of any other $d$ columns. If there are two sets of $d$ columns $J_1$ and $J_2$ that have the same xor, then $\oplus_{j\in J_1\Delta J_2}M^{(j)}=0$ and $|J_1\Delta J_2|\le 2d$. A contradiction. Therefore, by summing all the possible $d$ columns of $M'$, we get ${n\choose d}$ distinct vectors.  
Thus, the number of rows of $M'$ is at least
$$t\ge \log {n\choose d}=\Omega \left(d\log\frac{n}{d}\right).$$\qed
\end{proof}

We now prove the lower bound when $d$ is unknown to the algorithm. This proves result~(11) in Figure~\ref{Table2}. Result (12) follows from the algorithm that tests every item individually. 
\begin{theorem}\label{NDUL}
If $d$ is unknown, then any non-adaptive deterministic algorithm that detects one defective item must make at least $\Omega(n)$ tests.
\end{theorem}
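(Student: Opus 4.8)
The plan is to show that no test $Q$ with $|Q|\ge 2$ can ever be useful when $d$ is unknown, forcing the algorithm to test all $n$ items individually. The key observation: if $d$ can be as large as $n-1$, then the "all items defective except one" configurations are essentially indistinguishable from each other unless we have singleton tests. Formally, for each $i\in[n]$ let $I_i=[n]\setminus\{i\}$, a candidate defective set of size $d=n-1$. For a test $Q$ with $|Q|\ge 2$, we have $Q\cap I_i\ne\emptyset$ for every $i$ (since $Q$ has at least two elements, it cannot be wiped out by removing a single element), so $T_{I_i}(Q)=1$ regardless of $i$. Thus any test of size $\ge 2$ returns $1$ on all of the inputs $I_1,\dots,I_n$ and carries no information distinguishing them.

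First I would fix a non-adaptive deterministic algorithm $\mathcal A$ with tests $Q^{(1)},\dots,Q^{(t)}$ and suppose for contradiction that some test $Q^{(k)}$ satisfies $|Q^{(k)}|\neq 1$; the case $Q^{(k)}=\emptyset$ is trivially useless, so assume $|Q^{(k)}|\ge 2$. Actually the cleanest route: suppose $t<n$, so the algorithm has at most $n-1$ tests. I would argue that there must be an item $i_0\in[n]$ that is not a singleton test, i.e. $\{i_0\}\neq Q^{(j)}$ for all $j\in[t]$ — this follows by pigeonhole since there are only $t<n$ tests but $n$ items. Now consider the two inputs $I_{i_0}=[n]\setminus\{i_0\}$ (of size $n-1$) and compare with what $\mathcal A$ must output: on input $I_{i_0}$, every test $Q^{(j)}$ with $|Q^{(j)}|\ge 2$ answers $1$, and every singleton test $Q^{(j)}=\{j\}$ with $j\ne i_0$ also answers $1$ (since $j\in I_{i_0}$), so $\mathcal A$ sees the all-ones answer vector. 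But the all-ones answer vector is exactly what $\mathcal A$ also sees on the input $I=[n]$ if that were allowed, or more usefully, on input $I_{i_0}$ the answer vector does not depend on $i_0$ at all — so $\mathcal A$ produces the same output $L$ (a single item) on all inputs $I_1,\dots,I_n$. Pick $i_1$ to be the item $\mathcal A$ outputs; then on input $I_{i_1}=[n]\setminus\{i_1\}$ the output $i_1\notin I_{i_1}$, contradicting correctness.

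I would need to be slightly careful that $d=n-1$ is a legitimate value (it is, since $d$ is unknown and unbounded, and $\ell\le d$ holds as $\ell\ge 1$), and that the algorithm is required to be correct for this $d$. The main obstacle — really the only subtlety — is making the pigeonhole/indistinguishability argument airtight: I must ensure that the answer vector on $I_{i}$ genuinely is independent of $i$ once we know no test is the singleton $\{i\}$, which requires handling the empty test and observing that a test of size $\ge 2$ intersects every $(n-1)$-subset. With that in hand, the conclusion is that $t\ge n$, i.e. $\Omega(n)$ tests, matching the trivial upper bound of testing each item individually. \qed
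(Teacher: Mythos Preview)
Your argument has a genuine gap at the key step. You correctly use pigeonhole to find \emph{one} item $i_0$ with $\{i_0\}\notin\{Q^{(1)},\dots,Q^{(t)}\}$, and you correctly observe that on input $I_{i_0}$ every (nonempty) test answers~$1$. But then you overreach: you claim the algorithm ``produces the same output $L$ on all inputs $I_1,\dots,I_n$.'' This is false. If $\{j\}$ \emph{is} one of the tests, then on input $I_j=[n]\setminus\{j\}$ that singleton test answers $0$, so the answer vector on $I_j$ is not all-ones and the output can differ. Your pigeonhole gave you a single untested item $i_0$; for the other $I_i$ the answer vector need not be all-ones. Your contradiction step then collapses: you let $i_1$ be the output on the all-ones vector and pass to $I_{i_1}$, but nothing forces $\{i_1\}$ to be absent from the test set---indeed correctness on $I_{i_0}$ forces $i_1\ne i_0$, so $i_1$ may well be a singleton-tested item, and then the answer vector on $I_{i_1}$ is not all-ones. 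Concretely, for $n=3$ take tests $\{1\},\{2\}$ and output the least $j$ with $T_I(\{j\})=1$, else output $3$; this is a correct deterministic non-adaptive algorithm for every $d\ge 1$ with $t=2<n$ tests, and your argument (if valid) would exclude it.

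The paper avoids this by iterating rather than using a one-shot pigeonhole: start from $I_0=[n]$ with output $i_1$, then $I_1=[n]\setminus\{i_1\}$ with output $i_2$, then $I_2=[n]\setminus\{i_1,i_2\}$, and so on. Since the output on $I_{k-1}$ is $i_k\notin I_k$, the outputs on $I_{k-1}$ and $I_k$ differ, which forces a test that is $1$ at $i_k$ and $0$ on $[n]\setminus\{i_1,\dots,i_k\}$. These tests are pairwise distinct (the $k$-th contains $i_k$, earlier ones do not), giving $\Omega(n)$ tests. The iterative peeling is precisely what your single pigeonhole step cannot supply.
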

\begin{proof}
Consider any non-adaptive deterministic algorithm ${\cal A}$ that detects one defective item. Let $M$ be a 0-1-matrix of size $t\times n$ whose rows correspond to the tests of ${\cal A}$. 

Suppose for the set of defective items $I_0=[n]$ the algorithm outputs $i_1$, for the set $I_1=[n]\backslash \{i_1\}$ outputs $i_2$, for $I_2=[n]\backslash\{i_1,i_2\}$ outputs $i_3$, etc. Obviously, $\{i_1,\ldots,i_n\}=[n]$. Now, since the output for $I_0$ is distinct from the output for $I_1$, we must have a row in $M$ that is equal to $1$ in entry $i_1$ and zero elsewhere. Since the output for $I_1$ is distinct from the output for $I_2$, we must have a row in $M$ that is equal to $1$ in entry $i_2$ and zero in entries $[n]\backslash \{i_1,i_2\}$. Etc. Therefore, $M$ must have at least $n$ rows.\qed
\end{proof}

\section{Non-Adaptive and Randomized}
In this section, we study the test complexity of non-adaptive randomized algorithms. 

We will use the following for the upper bound.
\begin{lemma} \label{detectOne}
There is a non-adaptive deterministic algorithm that makes $t=\log n+0.5\log\log n+O(1)$ tests and decides whether $d\le 1$ and if $d=1$ detects the defective item. 
\end{lemma}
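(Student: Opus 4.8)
The plan is to exhibit a non-adaptive deterministic test matrix $M$ with $t=\log n+0.5\log\log n+O(1)$ rows such that (a) the all-zero column does not appear, (b) every column is distinct, and (c) for every pair of distinct columns $M^{(a)},M^{(b)}$ there is a row $i$ with $M_{i,a}=1$ and $M_{i,b}=0$ (equivalently, no column is coordinatewise dominated by another). Given such a matrix, the decision procedure is immediate: if the answer vector is $\mathbf 0$ then $d=0$; otherwise, let $v$ be the answer vector, and check whether $v$ equals one of the $n$ columns. If $v=M^{(j)}$ for a (necessarily unique, by (b)) $j$, then by (c) no other column is contained in $M^{(j)}$, so the single defective item must be $j$ and we output $j$; if $v$ is not any column, then $d\ge 2$ and we declare $d>1$. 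Correctness when $d=1$ is clear since the answer vector is exactly the column of the defective item. Correctness of "declare $d>1$'' follows because if $d\le 1$ the answer vector is always $\mathbf 0$ or a column.

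**Construction of the matrix.** I would take the columns of $M$ to be $n$ distinct binary vectors of length $t$, each of Hamming weight exactly $w:=\lceil t/2\rceil$ (a constant-weight code with the trivial minimum distance $2$, i.e. just distinctness). Two distinct weight-$w$ vectors can never have one contained in the other, so property (c) holds automatically, and the all-zero vector is excluded, giving (a) and (b). We need $\binom{t}{w}\ge n$; choosing $w=\lceil t/2\rceil$, Stirling's approximation gives $\binom{t}{\lceil t/2\rceil}=\Theta\!\left(2^t/\sqrt t\right)$, so it suffices that $2^t/\sqrt t\ge c\, n$ for a suitable constant, i.e. $t-\tfrac12\log t\ge \log n+O(1)$, which holds for $t=\log n+0.5\log\log n+O(1)$. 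Since the construction only requires picking $n$ distinct weight-$\lceil t/2\rceil$ vectors of length $t$, it is explicit and polynomial time.

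**Main obstacle.** The only delicate point is the precise constant bookkeeping in the inequality $\binom{t}{\lceil t/2\rceil}\ge n$: one must be careful that the $\sqrt t$ loss in the central binomial coefficient is exactly what produces the additive $0.5\log\log n$ term in $t$, and that the lower-order $O(1)$ absorbs the Stirling constants and the floor/ceiling rounding. I expect no conceptual difficulty — it is a one-line estimate $\log\binom{t}{t/2}=t-\tfrac12\log t-\tfrac12\log(\pi/2)+o(1)$ — but it is the step that has to be stated carefully to justify the claimed bound on $t$ rather than a weaker $\log n+O(\log\log n)$. Everything else (the decoding rule and its correctness) is a couple of sentences.
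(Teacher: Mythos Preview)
Your proposal is correct and essentially identical to the paper's proof: both take the columns of $M$ to be $n$ distinct constant-weight vectors of weight $\lfloor t/2\rfloor$ (you use $\lceil t/2\rceil$, which is immaterial), and both derive $t=\log n+0.5\log\log n+O(1)$ from $\binom{t}{t/2}\ge n$ via Stirling. The only cosmetic difference is in the decoding rule: the paper simply checks the Hamming weight of the answer vector (weight $0$ means $d=0$, weight exactly $\lfloor t/2\rfloor$ means $d=1$, weight $>\lfloor t/2\rfloor$ means $d\ge2$), whereas you phrase it as checking whether the answer vector equals some column --- these are equivalent since the OR of two or more distinct weight-$w$ columns has weight strictly greater than $w$.
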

\begin{proof}
We define a 0-1-matrix $M$, where the rows of the matrix correspond to the tests of the algorithm. The size of the matrix is $t\times n$, where $t$ is the smallest integer such that $n\le {t\choose \lfloor t/2\rfloor}$ and its columns contain distinct Boolean vectors of weight $\lfloor t/2\rfloor$. Therefore $t=\log n+0.5\log\log n+O(1)$.

Now, if there are no defective items, we get $0$ in all the answers of the tests. If there is only one defective item, and it is $i\in [n]$, then the vector of answers to the tests is equal to the column $i$ of $M$. If there is more than one defective item, then the weight of the vector of the answers is greater than $\lfloor t/2\rfloor$. \qed
\end{proof}

For the upper bound, we prove the following. This proves (13) in Figure~\ref{Table2}.
\begin{theorem}\label{NRKU}
Suppose some integer $D$ is known in advance to the algorithm where $d/4\le D\le 4d$. There is a polynomial time non-adaptive randomized algorithm that makes $O(\ell\log(n/d)+\log(1/\delta)\log(n/d))$ tests and, with probability at least $1-\delta$, detects $\ell$ defective items.
\end{theorem}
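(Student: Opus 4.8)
The plan is to emulate the adaptive randomized algorithm of Theorem~\ref{ARKU} in a single non-adaptive round, paying an extra factor of $O(\ell+\log(1/\delta))$ parallel repetitions. First I would set $k=O(\ell+\log(1/\delta))$ and run $k$ independent ``capture-one'' gadgets in parallel. In the $j$-th gadget, form a random subset $X'_j\subseteq X=[n]$ by including each element independently with probability $p=\Theta(1/D)$ (say $p=1/(2D)$, so that the expected number of defectives captured is $d/(2D)\in[1/8,2]$, a constant in $(0,1)$ bounded away from $1$). Conditioned on $X'_j$, apply the non-adaptive deterministic procedure of Lemma~\ref{detectOne} to $X'_j$: this uses $\log|X'_j|+0.5\log\log|X'_j|+O(1)=\log(n/D)+O(1)=\log(n/d)+O(1)$ tests and, if $X'_j$ contains exactly one defective, reports it, and if $X'_j$ contains zero or $\ge 2$ defectives, reports ``failure.'' Since all the randomness (the sets $X'_j$ and hence the tests of each Lemma~\ref{detectOne} instance) is chosen before any answers are seen, the whole scheme is non-adaptive. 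The algorithm outputs the union of the defective items reported across the successful gadgets; if fewer than $\ell$ distinct items are found, it declares failure.

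Next I would bound the test count and the success probability. The test count is $k\cdot(\log(n/d)+O(1))=O((\ell+\log(1/\delta))\log(n/d))=O(\ell\log(n/d)+\log(1/\delta)\log(n/d))$, as required. For correctness: for a single gadget, $\Pr[X'_j\text{ contains exactly one defective}]={d\choose 1}p(1-p)^{d-1}\ge c$ for an absolute constant $c>0$ (using $d/4\le D\le 4d$ and $p=1/(2D)$, the mean $dp$ is a constant bounded away from $0$ and from $1$, so this binomial-type probability is $\Omega(1)$). When a gadget succeeds it returns a uniformly random defective item; I want at least $\ell$ \emph{distinct} ones among the successes. Choosing $k$ a large enough constant times $\ell+\log(1/\delta)$: by a Chernoff bound at least, say, $8\ell$ gadgets succeed except with probability $\delta/2$; and conditioned on $\ge 8\ell$ successes, a coupon-collector / second-moment argument shows that $8\ell$ i.i.d. uniform samples from a set of $d\ge 2\ell$ defectives contain $\ge\ell$ distinct values except with probability $\le\delta/2$ (here one uses $\ell\le D/4\le d$, so the sampled set is not too small relative to $\ell$; if $d$ is close to $\ell$ this is even easier since almost every defective is a valid output). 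A union bound gives total failure probability $\le\delta$.

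The main obstacle I expect is the distinctness step: bounding the probability that $\Theta(\ell)$ successful gadgets, each returning a uniform defective, fail to cover $\ell$ distinct defectives. When $d$ is only a constant factor above $\ell$ this is the delicate regime. The cleanest route is to process the gadgets in order and track the number $D_m$ of distinct defectives found after $m$ successes: while $D_m<\ell$, the next success adds a new defective with probability $\ge (d-\ell)/d\ge 1-\ell/d$, which is $\Omega(1)$ when $d\ge 2\ell$; so the number of successes needed to reach $\ell$ distinct values is stochastically dominated by a sum of $\ell$ geometric random variables with constant success probability, whose tail beyond $O(\ell+\log(1/\delta))$ is $\le\delta/2$ by a standard concentration bound. (The hypothesis $\ell\le d/4$ implicit throughout — and $D\ge d/4$, $\ell\le D/8$-type bounds used elsewhere in the paper — ensures $d\ge 2\ell$.) One technical nuisance is that $|X'_j|$ is random and could conceivably be large, but $\mathbb{E}[|X'_j|]=n/(2D)=\Theta(n/d)$ and by Chernoff $|X'_j|=O(n/d)$ for every $j$ except with probability $\le\delta/4$ (absorb into the union bound), which keeps each Lemma~\ref{detectOne} instance within the claimed $\log(n/d)+O(1)$ tests; alternatively one fixes a deterministic cap of $\lceil 4n/D\rceil$ on $|X'_j|$ (truncating if exceeded, which only helps) so the test count is worst-case, not just w.h.p.
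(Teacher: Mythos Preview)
Your proposal is correct and follows essentially the same approach as the paper: run $O(\ell+\log(1/\delta))$ independent gadgets, each sampling elements with probability $\Theta(1/D)$ and applying Lemma~\ref{detectOne} to the (capped-size) sample, then combine the successful outputs and use that while fewer than $\ell$ distinct defectives have been found each success contributes a new one with constant probability. The paper's write-up differs only in cosmetic details (it explicitly branches to the ``find all defectives'' algorithm when $\ell\ge D/32$, and it folds the distinctness and gadget-success bounds into a single Chernoff step rather than your two-stage analysis), but the algorithm and the analysis are the same.
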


\begin{proof} If $\ell\ge D/32$, then the non-adaptive randomized algorithm that finds all the defective items makes $O(d\log(n/d))=O(\ell\log(n/d))$ tests. So, we may assume that $\ell<D/32\le d/8$.

Let $\ell\le d/8$. The algorithm runs $t=O(\ell+\log(1/\delta))$ iterations. At each iteration, it uniformly at random chooses each element in $X=[n]$ with probability $1/(2D)$ and puts it in $X'$. If $|X'|>4n/D$, then it continues to the next iteration. If $|X'|\le 4n/D$, then it uses the algorithm in Lemma~\ref{detectOne} to detect if $X'$ contains one defective item, and if it does, it detects the item. If $X'$ contains no defective item or more than one item, it continues to the next iteration.

Although the presentation of the above algorithm is adaptive, it is clear that all the iterations can be run non-adaptively. 

Let $A$ be the event that $X'$ contains exactly one defective item. The probability of $A$ is
$$\Pr[A]={d\choose 1}\frac{1}{2D}\left(1-\frac{1}{2D}\right)^{d-1}\ge \frac{1}{10}.$$
Since $\E[|X'|]=n/(2D)$, by Chernoff's bound
$$\Pr[|X'|>4n/D]\le \left(\frac{e^7}{8^8}\right)^{n/(2D)}\le \frac{1}{20}.$$
Therefore, 
$$\Pr[A\mbox{\ and\ }|X'|\le 4n/D]\ge \frac{1}{20}.$$
Now, assuming $A$ occurs, the defective in $X'$ is distributed uniformly at random over the $d$ defective items. Since $\ell<d/8$, at each iteration, as long as the algorithm does not get $\ell$ defective items, the probability of getting a new defective item in the next iteration is at least $7/8$. Let $B_i$ be the event that, in iteration $i$, the algorithm gets a new defective item.
Then 
$$\Pr[B_i]=\frac{7}{8}\Pr[A\mbox{\ and\ }|X'|\le 4n/D]\ge \frac{7}{160}.$$
By Chernoff's bound, after $O(\ell+\log(1/\delta))$ iterations, 
with probability at least $1-\delta$, the algorithm detects $\ell$ defective items.

Therefore, by Lemma~\ref{detectOne}, the test complexity of the algorithm is
$$O((\ell+\log(1/\delta))\log {|X'|})=O\left(\ell\log\frac{n}{d}+\log(1/\delta)\log\frac{n}{d}\right).$$ \qed
\end{proof}

The following lower bound follows from Theorem~\ref{ARKL}. This proves (14) in Figure~\ref{Table2}.
\begin{theorem} \label{NRKL}
Let $\ell\le d\le n/2$ and $d$ be known in advance to the algorithm. Any non-adaptive randomized algorithm that, with probability at least $2/3$, detects $\ell$ defective items must make at least $\ell\log(n/d)-1$ tests.
\end{theorem}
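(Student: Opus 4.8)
The plan is to derive Theorem~\ref{NRKL} directly from the randomized adaptive lower bound of Theorem~\ref{ARKL}, using the trivial observation that every non-adaptive algorithm is, in particular, an adaptive algorithm. The point is that the class of non-adaptive randomized algorithms that detect $\ell$ defective items with success probability at least $2/3$ is a subclass of the adaptive randomized algorithms with the same guarantee: a non-adaptive algorithm simply chooses all its tests $Q^{(1)},\dots,Q^{(m)}$ in advance (as a function of the random seed $s$ only, not of any answers), which is a legal — if restricted — adaptive strategy. Hence any lower bound on the number of tests that holds for all adaptive randomized algorithms automatically holds for all non-adaptive randomized algorithms.

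Concretely, I would argue as follows. Suppose $\mathcal{A}$ is a non-adaptive randomized algorithm that, with probability at least $2/3$, detects $\ell$ defective items, and suppose it makes $m$ tests. View $\mathcal{A}$ as an adaptive randomized algorithm (it never consults the answers before deciding its tests, but that is allowed). Since $\ell\le d\le n/2$ and $d$ is known in advance, Theorem~\ref{ARKL} applies to $\mathcal{A}$ and gives $m\ge \ell\log(n/d)-1$. This is exactly the claimed bound, so the proof is complete.

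There is essentially no obstacle here: the only thing to be careful about is the logical direction — a lower bound for a larger class (adaptive) transfers to a smaller class (non-adaptive), which is the easy direction, and the hypotheses of Theorem~\ref{ARKL} ($\ell\le d\le n/2$, $d$ known in advance, success probability $\ge 2/3$) are literally the hypotheses assumed here, so nothing needs to be re-derived. The ``hard work'' was already done in the proof of Theorem~\ref{ARKL} via Yao's principle and the counting argument bounding the number of possible $\ell$-subset outputs by ${n-\ell\choose d-\ell}$; Theorem~\ref{NRKL} is just a one-line corollary. If desired, one could alternatively re-run the Yao's-principle argument verbatim for non-adaptive algorithms, but that would only duplicate the content of Theorem~\ref{ARKL}, so the reduction is the cleanest route.
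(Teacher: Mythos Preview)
Your proposal is correct and matches the paper's own proof exactly: the paper simply states that Theorem~\ref{NRKL} follows from Theorem~\ref{ARKL}, since non-adaptive randomized algorithms are a special case of adaptive randomized algorithms.
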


The following Theorem proves the upper bound for non-adaptive randomized algorithms when $d$ is unknown to the algorithm. This proves result (15) in Figure~\ref{Table2}.
\begin{theorem} Let $c<1$ be any constant, $\ell\le n^c$, and $d$ be unknown to the algorithm. There is a polynomial time non-adaptive randomized algorithm that makes $O(\ell\log^2n+\log(1/\delta)\log^2n)$ tests, and with probability at least $1-\delta$, detects $\ell$ defective items. 
\end{theorem}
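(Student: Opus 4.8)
The plan is to reduce to the known-$D$ algorithm of Theorem~\ref{NRKU} by hedging over all possible magnitudes of $d$. Since the target algorithm must be non-adaptive, we cannot first estimate $d$ and then invoke the known-$D$ routine; instead everything is run in one shot. First, run the non-adaptive estimator of Lemma~\ref{NonAFd} with failure probability $\delta/2$: it uses $O(\log(1/\delta)\log n)$ tests and, with probability at least $1-\delta/2$, outputs an integer $D$ with $d/2<D<2d$. In parallel -- which, for non-adaptive algorithms, simply means taking the union of their test sets -- for every $i\in\{1,2,\ldots,\lceil\log(2n)\rceil\}$ run an independent copy $\mathcal{A}_i$ of the algorithm of Theorem~\ref{NRKU} with parameter $2^i$ and failure probability $\delta/2$. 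Once all answers are in, let $i^\star$ be the unique index with $2^{i^\star}\le D<2^{i^\star+1}$, and output whatever $\mathcal{A}_{i^\star}$ returns.

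For correctness, observe that on the event that the estimator succeeds we have $d/4<D/2<2^{i^\star}\le D<2d$, so $2^{i^\star}$ is an estimate of $d$ lying in the factor-$4$ window $[d/4,4d]$ required by Theorem~\ref{NRKU}; hence $\mathcal{A}_{i^\star}$, whose randomness is independent of the estimator's, detects $\ell$ defective items with probability at least $1-\delta/2$ conditioned on the estimator's output. A union bound over the two bad events gives overall success probability at least $1-\delta$. The key point is that we never union-bound over all $\Theta(\log n)$ copies: the outputs of the copies $\mathcal{A}_i$ with $i\neq i^\star$ are simply discarded, which is exactly what keeps the failure-probability overhead at $\log(1/\delta)$ rather than $\log(\log n/\delta)$.

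For the test count, inspecting the proof of Theorem~\ref{NRKU} shows that copy $\mathcal{A}_i$ makes $O((\ell+\log(1/\delta))\log(n/2^i))\le O((\ell+\log(1/\delta))\log n)$ tests no matter how $2^i$ compares with the true $d$: in the ``small $\ell$'' branch it runs $O(\ell+\log(1/\delta))$ rounds of the single-defective detector of Lemma~\ref{detectOne} on a set of size $O(n/2^i)$, and in the ``large $\ell$'' branch (where $2^i\le 32\ell$) it runs the non-adaptive find-all algorithm at cost $O(2^i\log(n/2^i))=O(\ell\log n)$. Summing over the $O(\log n)$ copies yields $O(\ell\log^2 n+\log(1/\delta)\log^2 n)$ tests, and adding the $O(\log(1/\delta)\log n)$ tests of the estimator leaves this bound unchanged. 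Polynomial running time is immediate, since the algorithm is a concatenation of $O(\log n)$ polynomial-time subroutines.

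There is no deep obstacle here; the construction is a routine composition, and the only things needing care are (i) checking that the discretized guess $2^{i^\star}$ really lands in the window $[d/4,4d]$ of Theorem~\ref{NRKU} whenever the estimator succeeds (done above), and (ii) confirming that the internal case split of Theorem~\ref{NRKU} -- small versus large $\ell$ relative to its parameter -- does not spoil the per-copy bound $O((\ell+\log(1/\delta))\log n)$. The hypothesis $\ell\le n^c$ mirrors the regime in which the companion lower bound for entry~(16) of Figure~\ref{Table2} applies; the argument above only uses the standing conventions $\ell\le d/4\le n/4$.
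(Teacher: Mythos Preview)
Your proposal is correct and follows essentially the same approach as the paper: run the non-adaptive estimator of Lemma~\ref{NonAFd} in parallel with copies of the known-$D$ algorithm of Theorem~\ref{NRKU} for all dyadic guesses of $D$, then use the estimator's output to select which copy to trust. The paper parametrizes the guesses as $D=2^i\ell$ for $i=1,\ldots,\log(n/\ell)$ rather than your $D=2^i$, and is terser about how the output is selected and about why the per-copy test count depends only on the parameter and not on the true $d$; your write-up is, if anything, more explicit on these points.
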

\begin{proof}
We make all the tests of the non-adaptive algorithm that, with probability at least $1-\delta/2$, $1/4$-estimate $d$, i.e., finds an integer $D$ such that $d/4<D<4d$. By Lemma~\ref{NonAFd}, this can be done with $O(\log(1/\delta)\log n)$ tests.  

We also make all the tests of the non-adaptive algorithms that, with probability at least $1-\delta/2$, detects $\ell$ defective items for all $d=2^i\ell$, $i=1,2,\ldots,\log (n/\ell)$. By Theorem~\ref{NRKU}, this can be done with
$$O\left(\sum_{i=1}^{\log (n/\ell)}\ell\log\frac{n}{2^i\ell}+\log\frac{2}{\delta}\log\frac{n}{2^i\ell}\right)=O((\ell+\log(1/\delta))\log^2n)$$
tests. \qed
\end{proof}

We now prove the lower bound when $d$ is unknown to the algorithm. This proves result~(16) in Figure~\ref{Table2}. 
\begin{theorem} \label{NRUL}
Let $c<1$ be any constant, $\ell\le n^c$, and $d$ be unknown to the algorithm. Any non-adaptive randomized algorithm that, with probability at least $3/4$, detects $\ell$ defective items must make at least $$\Omega\left(\frac{\ell\log^2 n}{\log\ell+\log\log n}\right)$$ tests.
\end{theorem}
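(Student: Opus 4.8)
The plan is to invoke Yao's principle: it suffices to exhibit a distribution $\mu$ over defective sets for which \emph{every} deterministic non-adaptive algorithm making $T:=c_0\ell\log^2 n/\log R$ tests errs with probability exceeding $1/4$, where $R:=\ell\log n$ (so $\log R=\log\ell+\log\log n$) and $c_0$ is a small absolute constant. Indeed, if a randomized algorithm succeeded with probability $\ge 3/4$ on \emph{every} input while making $\le T$ tests, then fixing its coins would yield a deterministic algorithm that errs on $\mu$ with probability $\le 1/4$, a contradiction. I would first dispose of the regime $\log R>\frac{1}{64}\log n$ (i.e.\ $\ell$ a fixed power of $n$), where $\ell\log^2 n/\log R=\Theta(\ell\log n)$: here the claimed bound already follows from Theorem~\ref{ARUL} with $d=4\ell$ (an algorithm ignorant of $d$ must in particular handle inputs of size $4\ell$), which gives $\Omega(\ell\log(n/(4\ell)))=\Omega(\ell\log n)$ since $\ell\le n^c$ with $c<1$. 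So from now on assume $\log R\le\frac{1}{64}\log n$.

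Set $r:=\lfloor\log n/(64\log R)\rfloor\ge 1$, partition the possible test sizes into the $r$ windows $N_j:=(n/R^{8j+8},\,n/R^{8j}]$ for $j=0,\dots,r-1$, and put $d_j:=R^{8j+4}$; one checks $4\ell\le R^4\le d_j\le n^{1/8}\le n/2$ for every $j<r$. The calibration is chosen so that, for a uniformly random $d_j$-subset $I$, any test $Q$ with $|Q|\le n/R^{8j+8}$ has $\E[|Q\cap I|]=|Q|d_j/n\le R^{-4}$, while any $Q$ with $|Q|>n/R^{8j}$ has $\Pr[Q\cap I=\emptyset]\le e^{-|Q|d_j/n}\le e^{-R^4}$. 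The distribution $\mu$ is then: pick $j\in\{0,\dots,r-1\}$ uniformly, and pick $I$ uniformly among the $d_j$-subsets of $[n]$.

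Now I would fix a deterministic non-adaptive algorithm $\mathcal B$ with test set $S$, $|S|\le T$. Since the $N_j$ are disjoint, at most $r/2$ indices $j$ can carry more than $2T/r$ tests of $S$; call the remaining ($\ge r/2$) indices \emph{light}, so a light window carries at most $2T/r\le c_1\ell\log n$ tests with $c_1=O(c_0)$. Condition on the scale $j$ being light. A union bound over the $\le T$ tests of $S$, using $T\cdot R^{-4}=o(1)$ and $T\cdot e^{-R^4}=o(1)$, shows that with probability $1-o(1)$ over $I$ the event $\mathcal T$ holds: every test of size $\le n/R^{8j+8}$ answers $0$ and every test of size $>n/R^{8j}$ answers $1$. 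On $\mathcal T$ these answers are identical for all $d_j$-subsets, so the output of $\mathcal B$ is determined by the answers to the $\le c_1\ell\log n$ tests with size in $N_j$ alone; thus, conditioned on $\mathcal T$, $\mathcal B$ acts as a deterministic non-adaptive algorithm $\mathcal B'$ with at most $c_1\ell\log n$ tests that tries to output $\ell$ elements of a uniform $d_j$-subset. By the counting argument behind Theorem~\ref{ADKL} (and Theorem~\ref{NRKL}) — at most $2^{c_1\ell\log n}$ possible outputs, each a legal answer for at most $\binom{n-\ell}{d_j-\ell}$ of the $\binom{n}{d_j}$ inputs — the success probability of $\mathcal B'$ is at most $2^{c_1\ell\log n}(2d_j/n)^\ell=2^{\ell(c_1\log n-\log(n/(2d_j)))}$. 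Since $d_j\le n^{1/8}$ gives $\log(n/(2d_j))\ge \frac78\log n-1$, choosing $c_0$ (hence $c_1$) small makes this $2^{-\Omega(\ell\log n)}=o(1)$. Hence, conditioned on any light $j$, $\mathcal B$ errs with probability at least $1-\Pr[\lnot\mathcal T]-o(1)\ge 7/8$; averaging over the $\ge r/2$ light values of $j$ gives error probability $\ge \frac12\cdot\frac78>\frac14$ under $\mu$, which closes the Yao argument and forces $T>c_0\ell\log^2 n/\log R$, as desired.

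The hard part will be the parameter bookkeeping that makes the last counting step work \emph{uniformly over all windows} $j<r$: we need $r=\Theta(\log n/\log R)$ so that a light window contains only $O(\ell\log n)$ tests (this is exactly what lets $T$ be as large as $\Theta(\ell\log^2 n/\log R)$), yet simultaneously every $d_j$ with $j<r$ must stay bounded, say $d_j\le n^{1/8}$, so that the counting lower bound $\Omega(\ell\log(n/d_j))=\Omega(\ell\log n)$ still strictly dominates $c_1\ell\log n$; and $T$ must be $o(R^4)$ for the "all far-away tests are forced" union bound, which is precisely why $d_j$ is placed at the geometric-mean-like exponent $8j+4$ with $R^{\pm4}$ of slack on each side. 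Checking that all three constraints can be satisfied at once, and that the degenerate regime $\ell=n^{\Omega(1)}$ is indeed covered by the simpler Theorem~\ref{ARUL} reduction, is where the care goes.
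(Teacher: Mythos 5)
Your proposal is correct, and its combinatorial core is the same as the paper's: partition the possible test sizes into $\Theta(\log n/\log R)$ geometric windows ($R=\ell\log n$), place the hidden defective-set size at the middle scale $R^{8j+4}$ of a window carrying few tests so that, with probability $1-o(1)$, every smaller test is forced to answer $0$ and every larger test to answer $1$, and then argue that the $O(\ell\log n)$ in-window tests cannot pin down $\ell$ defectives from a uniform $d_j$-subset. Where you genuinely differ is in how the argument is closed. The paper keeps the algorithm randomized, picks one window with $\E[t_j]\le \E[t]/r$ (Markov over the seed), and then \emph{builds a new} randomized non-adaptive algorithm for the known value $d=(\ell\log n)^{8j+4}$ — a random permutation of the ground set, hard-wired answers for the out-of-window tests, and only the in-window tests actually asked — deriving a contradiction with Theorem~\ref{NRKL}; it also has to compare sampling with and without replacement (its ${\cal D}$ versus ${\cal D}'$ step). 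You instead run Yao's principle directly against an explicit mixture distribution (uniform window $j$, then a uniform $d_j$-subset), locate the light windows by a deterministic pigeonhole after fixing the seed, and replace the appeal to Theorem~\ref{NRKL} by the underlying output-counting bound: at most $2^{\#\text{in-window tests}}$ possible outputs, each consistent with at most $\binom{n-\ell}{d_j-\ell}$ of the $\binom{n}{d_j}$ inputs. The mixture over $j$ is exactly what makes this legitimate, since the light window depends on the (seed-fixed) algorithm, and your averaging over the $\ge r/2$ light windows handles that correctly; your hypergeometric bounds for the forced answers are also valid without the replacement detour. Your route is more self-contained (no permutation simulation, no black-box invocation of Theorem~\ref{NRKL}) at the cost of redoing the counting inline; the paper's route modularizes the counting into Theorem~\ref{NRKL}. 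Your treatment of the large-$\ell$ regime (threshold $\log R>\frac{1}{64}\log n$, reduction via Theorem~\ref{ARUL} with $d=4\ell$) differs slightly from the paper's ($\ell\ge n^{1/32}$ with $d=n^{(1+c)/2}$) but is equally valid, and the bookkeeping you flagged does check out: $r\ge \log n/(128\log R)$ gives at most $O(c_0)\,\ell\log n$ in-window tests in a light window, while $d_j\le n^{1/8}$ keeps the counting bound at $\Omega(\ell\log n)$, so a sufficiently small $c_0$ closes the gap.
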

\begin{proof} If $n^c\ge \ell\ge n^{1/32}$, then for $d=n^{(1+c)/2}>\ell$, by Theorem~\ref{NRKL}, the lower bound is
$$\Omega(\ell\log(n/d))=\Omega\left(\frac{\ell\log^2 n}{\log\ell+\log\log n}\right).$$
Therefore, we may assume that $\ell<n^{1/32}$.

Suppose, to the contrary, there is a non-adaptive randomized algorithm ${\cal A}(s,I)$ that, with probability at least $3/4$, detects $\ell$ defective items and makes
$$t=\frac{\ell \log^2 n}{3072(\log \ell+\log\log n)}.$$
Here $s$ is the random seeds, and $I$ is the set of defective items. Define the set of integers 
$N_i=\{k|n/(\ell\log n)^{8i+8}\le k<n/(\ell\log n)^{8i}\}$
for $i=1,2,\ldots,r$ where $$r=\frac{\log n}{32(\log \ell+\log\log n)}.$$ Let $t_i$ be a random variable representing the number of tests $Q$ made by ${\cal A}(s,I)$ where $|Q|\in N_i$. Then $t\ge t_1+t_2+\cdots+t_r$ and
$$\frac{\ell \log^2 n}{3072(\log \ell+\log\log n)}=t=\E[t]\ge \sum_{i=1}^r\E[t_i].$$
Therefore, there is $j\in [r]$ such that $$\E[t_j]\le \frac{\E[t]}{r}=\frac{\ell \log n}{96}.$$ 
By Markov's bound, with probability at least $1-4/96=1-1/24$ we have $t_j<(\ell \log n)/4$.

Let $d=(\ell\log n)^{8j+4}$.
Define the following sets random variables: $M_1$ the set of all tests $Q$ that ${\cal A}(s,I)$ makes where $|Q|<n/(\ell\log n)^{8j+8}$, $M_2$ the set of all tests $Q$ that ${\cal A}(s,I)$ makes where $|Q|\ge n/(\ell\log n)^{8j}$ and $M_3$ the set of tests $Q$ that ${\cal A}(s,I)$ makes where $|Q|\in N_j=\{k|n/(\ell\log n)^{8j+8}\le k<n/(\ell\log n)^{8j}\}$. For a set of defective items $I$, let $A_1(I)$ be the event that all the tests in $M_1$ give answers $0$ and $A_2(I)$ the event that all the tests in $M_2$ give answers $1$. 
Let ${\cal D}$ be the distribution over $I\subset [n]$, $|I|=d$, where the items of $I$ are selected uniformly at random without replacement from $[n]$. Let ${\cal D}'$ be the distribution over $I=\{i_1,\ldots,i_d\}\subset [n]$, where the items of $I$ are selected uniformly at random with replacement from $[n]$. Let $B$ be the event that $I$, chosen according to ${\cal D}'$, has $d$ items. Then, since $\ell<n^{1/32}$,
$$\Pr_{{\cal D}'}[\neg B]=1-\prod_{i=1}^{d-1}\left(1-\frac{i}{n}\right)\le \frac{d(d-1)}{2n}\le \frac{(\ell\log n)^{16j+8}}{2n}\le \frac{(\ell\log n)^{16r+8}}{2n}=\frac{(\ell\log n)^8}{2\sqrt{n}}=o(1).$$
We now have
\begin{eqnarray*}
\Pr_{I\sim {\cal D}}[\neg A_1(I)]&\le & \Pr_{I\sim {\cal D}'}[(\exists Q\in M_1)Q\cap I\not=\emptyset]+\Pr_{I\sim {\cal D}'}[\neg B]\\
&\le& t\Pr_{I\sim {\cal D}'}[Q\cap I\not=\emptyset|Q\in M_1]+o(1)\\
&\le& \frac{\ell \log^2 n}{3072(\log \ell+\log\log n)} \left(1-\left(1-\frac{1}{(\ell\log n)^{8j+8}}\right)^d\right)+o(1)\\
&\le& \frac{\ell \log^2 n}{3072(\log \ell+\log\log n)} \frac{d}{(\ell\log n)^{8j+8}}+o(1)\\
\\
&\le& \frac{1}{3072\log\log n} \frac{1}{\ell^3\log^2 n}+o(1)=o(1).
\end{eqnarray*}
and
\begin{eqnarray*}
\Pr_{I\sim {\cal D}}[\neg A_2(I)]&\le & \Pr_{I\sim {\cal D}'}[(\exists Q\in M_2)Q\cap I=\emptyset]+\Pr_{I\sim {\cal D}'}[\neg B]\\
&\le& t\Pr_{I\sim {\cal D}'}[Q\cap I=\emptyset|Q\in M_2]+o(1)\\
&\le& \frac{\ell \log^2 n}{3072(\log \ell+\log\log n)} \left(1-\frac{1}{(\ell\log n)^{8j}}\right)^d+o(1)\\
&\le& \frac{\ell \log^2 n}{3072(\log \ell+\log\log n)} e^{-\frac{d}{(\ell\log n)^{8j}}}+o(1)\\
\\
&\le& \frac{\ell \log^2 n}{3072(\log \ell+\log\log n)} e^{-\ell^4\log^4 n}+o(1)=o(1).\\
\end{eqnarray*}

We now give a non-adaptive randomized algorithm that for $d=(\ell\log n)^{8j+4}$ makes  $(\ell/4) \log n$ tests and with probability at least $2/3$ detects $\ell$ defective items. 
By Theorem~\ref{NRKL}, and since $\ell<n^{1/32}$ and $d=(\ell\log n)^{8j+4}\le (\ell\log n)^{8r+4}=(\ell \log n)^4n^{1/4}\le n^{1/2}/2$, the test complexity is at least 
$$\ell\log\frac{n}{d}-1\ge \frac{1}{2}\ell \log n,$$ and we get a contradiction.

The algorithm is the following. Choose uniformly at random a permutation $\phi:[n]\to [n]$. Consider the tests of algorithm ${\cal A}$. Let $M_i$, $i=1,2,3$, be the sets defined above. Define $$M_i'=\{(a_{\phi(1)},\ldots,a_{\phi(n)})|(a_1,\ldots,a_n)\in M_i\}.$$
Answer $0$ for all the tests in $M'_1$ and $1$ for all the tests in $M'_2$. If $|M'_3|>(\ell/4)\log n$, then return FAIL. Otherwise, make all the tests in  $M'_3$. Give the above answers of the tests to the algorithm ${\cal A}$ and let $L$ be its output. Output $\phi^{-1}(L)=\{\phi^{-1}(i)|i\in L\}$.  

Since $\phi$ is chosen uniformly at random, the new set of defective items $\phi(I)$ is distributed uniformly at random over all the subsets of $[n]$ of size $d$. The probability that the answers for tests in $M'_1$ and $M_2'$ are wrong 
is $o(1)$. The probability that $|M'_3|>(\ell/4)\log n$ is at most $1/24$. By the promises, the failure probability of ${\cal A}$ is at most $1/4$. Therefore, the probability that this algorithm fails is at most $1/4+1/24+o(1)<1/3$. This completes the proof.\qed
\end{proof}

\ignore{

\section{First Section}
\subsection{A Subsection Sample}
Please note that the first paragraph of a section or subsection is
not indented. The first paragraph that follows a table, figure,
equation etc. does not need an indent, either.

Subsequent paragraphs, however, are indented.

\subsubsection{Sample Heading (Third Level)} Only two levels of
headings should be numbered. Lower level headings remain unnumbered;
they are formatted as run-in headings.

\paragraph{Sample Heading (Fourth Level)}
The contribution should contain no more than four levels of
headings. Table~\ref{tab1} gives a summary of all heading levels.

\noindent Displayed equations are centered and set on a separate
line.
\begin{equation}
x + y = z
\end{equation}
Please try to avoid rasterized images for line-art diagrams and
schemas. Whenever possible, use vector graphics instead (see
Fig.~\ref{fig1}).

\begin{figure}
\includegraphics[width=\textwidth]{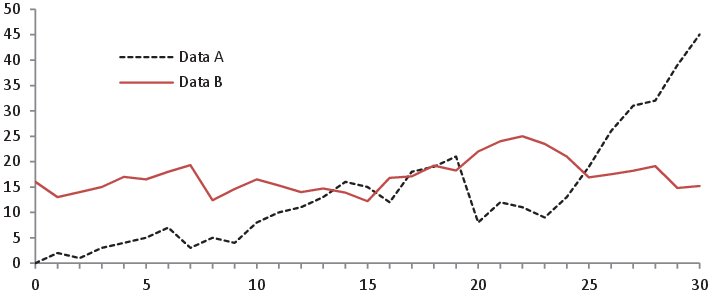}
\caption{A figure caption is always placed below the illustration.
Please note that short captions are centered, while long ones are
justified by the macro package automatically.} \label{fig1}
\end{figure}

\begin{theorem}
This is a sample theorem. The run-in heading is set in bold, while
the following text appears in italics. Definitions, lemmas,
propositions, and corollaries are styled the same way.
\end{theorem}
%
%
\begin{proof}
Proofs, examples, and remarks have the initial word in italics,
while the following text appears in normal font.\qed
\end{proof}
For citations of references, we prefer the use of square brackets
and consecutive numbers. Citations using labels or the author/year
convention are also acceptable. The following bibliography provides
a sample reference list with entries for journal
articles~\cite{ref_article1}, an LNCS chapter~\cite{ref_lncs1}, a
book~\cite{ref_book1}, proceedings without editors~\cite{ref_proc1},
and a homepage~\cite{ref_url1}. Multiple citations are grouped
\cite{ref_article1,ref_lncs1,ref_book1},
\cite{ref_article1,ref_book1,ref_proc1,ref_url1}.}%
%

\bibliographystyle{splncs04}
\bibliography{GroupTesting}

\ignore{

}

\appendix

\section{Estimating d}\label{A}
In this section, we prove Lemma~\ref{Estd}. 

We first prove.
\begin{lemma}\label{Estd2}
Let $\epsilon<1$ be any positive constant. There is a polynomial time adaptive algorithm that makes $O(\log\log (n/d)+\log(1/\delta))$ expected number of tests and with probability at least $1-\delta$ outputs $D$ such that $(1-\epsilon)d\le D\le (1+\epsilon)d$.
\end{lemma}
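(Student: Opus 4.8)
The plan is to run a two-phase doubling search for the right scale, mirroring the standard estimator of Lemma~\ref{Estd0} but measuring "how far we have zoomed in" on the $n/d$ axis rather than on the $d$ axis. First I would fix a sampling primitive: for a parameter $p\in(0,1)$, form a random test $Q$ by including each item of $[n]$ independently with probability $p$, so that $\Pr[T_I(Q)=0]=(1-p)^d\approx e^{-pd}$. By repeating this $O(\log(1/\delta'))$ times for a target failure probability $\delta'$ and counting the fraction of $0$-answers, one gets, with probability $1-\delta'$, a constant-factor distinction between the three regimes $pd\le c_1$, $pd\in[c_1,c_2]$, and $pd\ge c_2$ for suitable constants. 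This is the tester that each stage of the search will call.

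Phase one is a geometric (doubling-exponent) search to localize the scale. Start with $p=1/2$ and repeatedly square $p$ (i.e.\ use $p=2^{-2^k}$ for $k=0,1,2,\dots$); after the $k$-th step we are effectively testing whether $d$ is above or below $2^{2^k}$, wait — more precisely, whether $n/d$ has crossed $2^{2^k}$, since small $p$ corresponds to probing large $n/d$. Because the unknown quantity to be pinned down in this phase is $\log(n/d)$, and we are doubling the exponent each step, this phase terminates after $O(\log\log(n/d))$ rounds, each using $O(\log(1/\delta'))$ tests, once the tester reports that $pd$ has dropped below the constant threshold $c_1$. At termination we have two consecutive values $p_{k-1}=2^{-2^{k-1}}$ and $p_k=2^{-2^{k}}$ with $p_k d<c_1\le p_{k-1}d$, which brackets $d$ within a factor of roughly $2^{2^{k-1}}$, i.e.\ within a factor of about $n/d$ in the worst case.

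Phase two is an ordinary binary search inside that bracket: the exponent $-\log_2 p$ now lies in a window of length $2^{k-1}\le O(\log(n/d))$, and binary search on that exponent, again with the same tester as the decision oracle, pins $d$ to within the desired multiplicative constant $1+\epsilon$ (choosing $c_1,c_2$ and the number of repetitions appropriately) in $O(\log\log(n/d))$ further rounds. Summing over all $O(\log\log(n/d))$ rounds of both phases, the total is $O(\log\log(n/d)\cdot\log(\log\log(n/d)/\delta))$ tests in the naive accounting; the standard trick of giving round $i$ a failure budget $\delta/2^{i}$ (so later, rarer stages are cheap and early stages dominate) collapses this to the claimed $O(\log\log(n/d)+\log(1/\delta))$ expected tests via a union bound over stages. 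The main obstacle is getting the tester's constants to interlock so that the three-way decision in every round is reliable with only $O(\log(1/\delta'))$ tests and so that the binary-search invariant ($d$ stays bracketed) is genuinely maintained across rounds; this is exactly the bookkeeping carried out in \cite{BshoutyBHHKS18,FalahatgarJOPS16} for the $\log\log d$ version, and the only new point here is that replacing "keep doubling $p$ until $pd$ is small" by "keep doubling $p$ until the answers become all-zero-ish" makes the round count scale with $\log\log(n/d)$ instead of $\log\log d$.
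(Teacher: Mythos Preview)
Your search is going in the wrong direction. With the primitive $\Pr[T_I(Q)=0]=(1-p)^d\approx e^{-pd}$, the outcome depends only on $pd$; the quantity $n$ never appears. Starting at $p=1/2$ and repeatedly squaring $p$, the transition from ``$pd$ large'' to ``$pd<c_1$'' occurs at $p\approx c_1/d$, i.e.\ after $\Theta(\log\log d)$ rounds---this is exactly the $\log\log d$ estimator of Lemma~\ref{Estd0}, not the $\log\log(n/d)$ version you are asked to prove. Your parenthetical (``small $p$ corresponds to probing large $n/d$'') is not right: small $p$ corresponds to probing large $d$. To make the round count scale with $\log\log(n/d)$ you must start from the \emph{other} end, with inclusion probability of order $1/n$ (so every test initially returns $0$), and \emph{increase} the scale; the paper takes $p=1-2^{-\lambda/n}$, starts at $\lambda=2$, and squares $\lambda$ each round, so that the first $1$-answer appears around $\lambda\approx n/d$, after $O(\log\log(n/d))$ rounds.

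There is a second gap in the test-count accounting. With $O(\log(1/\delta'))$ tests per round and budgets $\delta_i=\delta/2^i$, round $i$ costs $\Theta(i+\log(1/\delta))$ tests, and summing over $i\le R=\Theta(\log\log(n/d))$ gives $\Theta(R^2+R\log(1/\delta))$; later rounds are \emph{more} expensive under this scheme, not ``cheap'', and the bound is not additive. The paper sidesteps this by spending exactly one test per round in both the squaring phase and the subsequent binary-search phase, accepting a very loose estimate after the first phase (off by a factor polynomial in $n/d$ and $1/\delta$) and a looser-than-constant one after the second (off by $O(1/\delta)$); only the final invocation of the Falahatgar et al.\ estimator pays $O(\log(1/\delta))$ tests to reach $(1\pm\epsilon)$-accuracy. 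The additivity comes from separating ``locate the scale'' (cheap, one test per step) from ``boost the confidence'' (one final $O(\log(1/\delta))$ stage), not from a per-round budget trick.
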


We first give an algorithm that makes $O(\log\log (n/d))$ expected number of tests and outputs $D$ that with probability at least $1-\delta$ satisfies
\begin{eqnarray}
   \frac{\delta d^2}{4n\log^2(2/\delta)}\le D\le d. \label{TheD} 
\end{eqnarray}
The algorithm is
\begin{enumerate}
    \item $\lambda=2$.
    \item\label{Ret1} Let each $x\in [n]$ be chosen to be in the test $Q$ with probability $1-2^{-\lambda/n}$.
    \item If $T_I(Q)=0$ then $\lambda\gets \lambda^2$; Return to step \ref{Ret1}.
    \item $D=\delta n/(4\lambda).$
    \item Output $D$.
\end{enumerate}

We now prove
\begin{lemma} We have $$\Pr\left[\frac{\delta d^2}{4n\log^2(2/\delta)}\le D\le d\right]\ge 1-\delta.$$ 
\end{lemma}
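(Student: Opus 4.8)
The plan is to track the value of $\lambda$ throughout the run. Regardless of the random outcomes, $\lambda$ passes through the fixed doubly-exponential sequence $\lambda_0=2,\lambda_1=4,\lambda_2=16,\dots$, where $\lambda_k=\lambda_{k-1}^2=2^{2^k}$, and the algorithm stops at the first $\lambda_k$ for which the random test $Q$ (each $x\in[n]$ included independently with probability $1-2^{-\lambda_k/n}$) comes back positive. Two elementary facts drive everything: (i) conditioned on the algorithm \emph{reaching} the stage $\lambda_k$ --- equivalently, on all earlier tests being negative --- the test at $\lambda_k$ uses fresh randomness and is negative with probability exactly $q_k:=\bigl(2^{-\lambda_k/n}\bigr)^{d}=2^{-\lambda_k d/n}$, since a test is negative iff none of the $d$ defectives is chosen; and (ii) because $\lambda_j=\lambda_{j-1}^2\ge 2\lambda_{j-1}$ for all $j\ge1$, we have $\sum_{k=0}^{K}\lambda_k\le 2\lambda_K$. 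Since $D=\delta n/(4\lambda)$, the event to be proved, $\frac{\delta d^2}{4n\log^2(2/\delta)}\le D\le d$, is exactly $\frac{\delta n}{4d}\le\lambda\le\bigl(\frac{n}{d}\log(2/\delta)\bigr)^2$, and I will bound each of the two failure events by $\delta/2$.

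\emph{The algorithm does not stop too early, that is $\lambda\ge\delta n/(4d)$.} Let $k_0$ be the least index with $\lambda_{k_0}\ge\delta n/(4d)$, so $\lambda_{k_0-1}<\delta n/(4d)$. The final $\lambda$ is below $\delta n/(4d)$ iff the algorithm stops at some $\lambda_k$ with $k<k_0$; these ``stop at $\lambda_k$'' events are pairwise disjoint and ``stop at $\lambda_k$'' is contained in ``reach $\lambda_k$ and the test at $\lambda_k$ is positive'', so, using $1-q_k=1-2^{-\lambda_k d/n}\le\lambda_k d/n$ and fact (ii),
$$\Pr[\lambda<\tfrac{\delta n}{4d}]\le\sum_{k<k_0}(1-q_k)\le\frac{d}{n}\sum_{k<k_0}\lambda_k\le\frac{d}{n}\cdot 2\lambda_{k_0-1}<\frac{d}{n}\cdot\frac{\delta n}{2d}=\frac{\delta}{2}.$$

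\emph{The algorithm does not run too far, that is $\lambda\le\bigl(\frac{n}{d}\log(2/\delta)\bigr)^2$.} In the relevant range $d\le n/2$ we have $\lambda_0=2\le(n/d)^2\le\bigl(\frac{n}{d}\log(2/\delta)\bigr)^2$, so the least index $m^*$ with $\lambda_{m^*}>\bigl(\frac{n}{d}\log(2/\delta)\bigr)^2$ satisfies $m^*\ge 1$ and $\lambda_{m^*-1}=\sqrt{\lambda_{m^*}}>\frac{n}{d}\log(2/\delta)$. The final $\lambda$ exceeds $\bigl(\frac{n}{d}\log(2/\delta)\bigr)^2$ only if the algorithm reaches the stage $\lambda_{m^*}$, which in particular requires the test at $\lambda_{m^*-1}$ to be negative; conditioned on reaching that stage this has probability $q_{m^*-1}=2^{-\lambda_{m^*-1}d/n}<2^{-\log(2/\delta)}=\delta/2$, where it is crucial that $\log=\log_2$. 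Hence $\Pr[\lambda>(\frac{n}{d}\log(2/\delta))^2]<\delta/2$, and a union bound finishes the proof.

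The estimates are all routine; the only points that need care are the bookkeeping that makes the ``stop at $\lambda_k$'' events disjoint (so the union bound in the first part loses nothing) and the observation that the doubly-exponential growth of $\lambda$ is exactly what makes both tails geometric and lets the base-$2$ logarithm in the statement turn into the factor $\delta/2$. I do not foresee any genuinely hard step --- at most a careful treatment of the degenerate range $d>n/2$, where the claimed lower bound on $D$ can already fail deterministically and so is implicitly excluded.
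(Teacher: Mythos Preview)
Your proof is correct and follows essentially the same approach as the paper's: both split into the two tail events, use $1-2^{-\lambda_k d/n}\le \lambda_k d/n$ together with the doubly-exponential growth of $\lambda_k$ to bound the ``stop too early'' tail by a geometric sum $\le\delta/2$, and use $\lambda_{m-1}=\sqrt{\lambda_m}$ to bound the ``run too far'' tail by a single negative-test probability $2^{-\lambda_{m-1}d/n}<\delta/2$. Your bookkeeping with the fixed index $m^*$ is in fact slightly cleaner than the paper's, which conditions on the random final index $j$, but the content is the same.
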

\begin{proof}
Let $\lambda_i=2^{2^i}$ and $Q_i$ be a set where each $x\in [n]$ is chosen to be in $Q_i\subseteq [n]$ with probability $1-2^{-\lambda_i/n}$, $i=0,1,\cdots$. Let $i'$ be 
such that $\lambda_{i'}< \delta n/(4d)$ and $\lambda_{i'+1}\ge \delta n/(4d)$. Let $D=\delta n/(4\lambda_j)$ be the output of the algorithm. Then, 
since $\lambda_i\le \lambda_{i+1}/2$, we have $\lambda_{i'-t}< \delta n/(2^{t+2}d)$ and
\begin{eqnarray*}
\Pr[D> d]&=&\Pr[\delta n/(4\lambda_j)> d]=\Pr [\lambda_j< \delta n/(4d)]=\Pr[j\in\{0,1,\ldots,i'\}] \\
&=&\Pr[T_I(Q_0)=1 \vee T_I(Q_1)=1 \vee \cdots\vee T_I(Q_{i'})=1]\le \sum_{i=0}^{i'} \Pr[T_I(Q_i)=1]\\
&=& \sum_{i=0}^{i'} (1-2^{-d\lambda_i/n})\le \sum_{i=0}^{i'} \frac{d\lambda_i}{n}\le \cdots +\frac{\delta}{8}+\frac{\delta}{4}\le \frac{\delta}{2}.
\end{eqnarray*}
Also, since $\lambda_j>a$ implies $\lambda_{j-1}>\sqrt{a}$,
\begin{eqnarray*}
\Pr\left[D<\frac{\delta d^2}{4n\log^2(2/\delta)}\right]&=&\Pr\left[\lambda_j\ge \frac{n^2}{d^2}\log^2\frac{2}{\delta}\right]\\
&=&\Pr\left[T_I(Q_{j-1})=0\ \wedge \ \lambda_j\ge \frac{n^2}{d^2}\log^2\frac{2}{\delta}\right] \\&\le& 2^{-d\lambda_{j-1}/n}\le 2^{-\log(2/\delta)}=\frac{\delta}{2}.
\end{eqnarray*}
This completes the proof.
\qed
\end{proof}

\begin{lemma}\label{yy1}
The expected number of tests of the algorithm is $\log\log (n/d)+O(1)$.
\end{lemma}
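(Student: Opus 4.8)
The plan is to track the values $\lambda_i=2^{2^i}$ ($i=0,1,2,\dots$) successively taken by $\lambda$: the number of tests $N$ equals $j+1$, where $j$ is the index of the first test $Q_j$ with $T_I(Q_j)=1$. Because each of the $d$ defective items is independently excluded from the $i$-th test with probability $2^{-\lambda_i/n}$, we have $\Pr[T_I(Q_i)=0]=2^{-d\lambda_i/n}$, and since successive tests use fresh randomness,
\[
\Pr[N>m]=\prod_{i=0}^{m-1}2^{-d\lambda_i/n}\ \le\ 2^{-d\lambda_{m-1}/n}\qquad (m\ge 1).
\]
I would then evaluate $\E[N]=\sum_{m\ge 0}\Pr[N>m]$ by splitting the sum at the index where $\lambda$ first overtakes $n/d$.

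Let $i^\star$ be the least index with $\lambda_{i^\star}\ge n/d$; since $\lambda_i=2^{2^i}$ this is $i^\star=\max(0,\lceil\log\log(n/d)\rceil)$. For the head $0\le m\le i^\star+1$ I bound $\Pr[N>m]\le 1$, contributing at most $i^\star+2=\log\log(n/d)+O(1)$. For the tail $m=i^\star+1+k$ with $k\ge 1$, I use $\lambda_{m-1}=\lambda_{i^\star}^{2^k}$ to get
\[
\frac{d\lambda_{m-1}}{n}=\Big(\frac{d\lambda_{i^\star}}{n}\Big)\,\lambda_{i^\star}^{\,2^k-1}\ \ge\ \lambda_{i^\star}^{\,2^k-1}\ \ge\ 2^{\,2^k-1}\ \ge\ k ,
\]
using $d\lambda_{i^\star}/n\ge 1$ and $\lambda_{i^\star}\ge 2$; hence $\Pr[N>m]\le 2^{-k}$ and the tail sums to $\sum_{k\ge 1}2^{-k}=O(1)$. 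Together this gives $\E[N]\le\log\log(n/d)+O(1)$.

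For the matching lower bound I would use $\E[N]\ge\sum_{m=0}^{i^\star-2}\Pr[N>m]$ and check that for such $m$ the exponent $(d/n)\sum_{i<m}\lambda_i$ stays $\le 2$: since the $\lambda_i$ more than double, $\sum_{i<m}\lambda_i\le 2\lambda_{m-1}\le 2\lambda_{i^\star-3}=2\,\lambda_{i^\star-1}^{1/4}<2(n/d)^{1/4}$ (using $\lambda_{i^\star-1}<n/d$ by minimality of $i^\star$ and $d\le n$), so each such term is $\ge 2^{-2}=1/4$ and $\E[N]\ge (i^\star-1)/4=\Omega(\log\log(n/d))$. Combining the two bounds yields $\E[N]=\log\log(n/d)+O(1)$.

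The only delicate points are the tail estimate and the small-ratio regime. In the tail one must genuinely exploit the \emph{doubly} exponential growth of $\lambda_i$: it is precisely this that forces the negative-test probabilities beyond the threshold $n/d$ to collapse faster than geometrically, which is what keeps the tail contribution $O(1)$ rather than another $\Theta(\log\log(n/d))$. When $n/d$ is bounded we have $\log\log(n/d)=O(1)$ and the claim degenerates to $\E[N]=O(1)$, which the same splitting already delivers since then $i^\star=O(1)$ (and, e.g., if $d>n/2$ the very first test returns $1$ with probability $>1/2$).
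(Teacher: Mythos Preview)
Your upper-bound argument is correct and is exactly the paper's approach: both pick the threshold index where $\lambda$ first reaches $n/d$ (the paper's $k$, your $i^\star$), bound the first $k+O(1)$ summands trivially, and exploit $\lambda_{k+t}=\lambda_k^{2^t}$ to make the tail probabilities $\le 2^{-(n/d)^{2^t-1}}$, hence summable to $O(1)$. The paper's write-up is terser (it bounds $\Pr[N\ge k+t+1]$ directly rather than going through $\E[N]=\sum_m\Pr[N>m]$), but the content is identical.

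One small remark: the paper only proves, and only uses, the upper bound. Your lower-bound paragraph is extra, and note that it actually yields $\E[N]\ge (i^\star-1)/4=\Omega(\log\log(n/d))$ rather than $\log\log(n/d)-O(1)$; so if one reads the lemma as a two-sided $+\,O(1)$ estimate, neither your argument nor the paper's establishes the lower half. In context this does not matter.
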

\begin{proof}
For $(n/d)^{2}> \lambda_k\ge (n/d)$, the probability that the algorithm makes $k+t+1$ tests is less than $$2^{-d\lambda_{k+t}/n}=2^{-d\lambda_k^{2^t}/n}\le 2^{-(n/d)^{2^t-1}}.$$
Therefore the expected number of tests of the algorithm is at most $k+O(1)$. Since $\lambda_k=2^{2^k}< (n/d)^2$, we have $k=\log\log(n/d)+O(1)$. \qed
\end{proof}

We now give another adaptive algorithm that, given that (\ref{TheD}) holds, it makes $\log\log (n/d)+O(\log\log(1/\delta))$ tests and with probability at least $1-\delta$ outputs $D'$ that satisfies $d\delta/8\le D'\le 8d/\delta$. 

By (\ref{TheD}), we have
$$1\le \frac{d}{D}\le H:=\sqrt{\frac{4\log^2(2/\delta)}{\delta}\frac{n}{D}}$$

Let $\tau=\lceil \log (1+\log H)\rceil$. Then $1\le d/D\le 2^{2^\tau-1}$ and $0\le \log(d/D)\le 2^\tau-1$. 

Consider an algorithm that, given a hidden number $0\le i\le 2^\tau-1$, binary searches for $i$ with queries of the form ``Is $i>m$''. Consider the tree $T(\tau)$ that represents all the possible runs of this algorithm, with nodes labeled with $m$. See, for example, the tree $T(4)$ in Figure~\ref{Tree}.  
\begin{figure}[h]
\centering
{\includegraphics[trim={0 3cm 0 0}, width=0.75\textwidth]{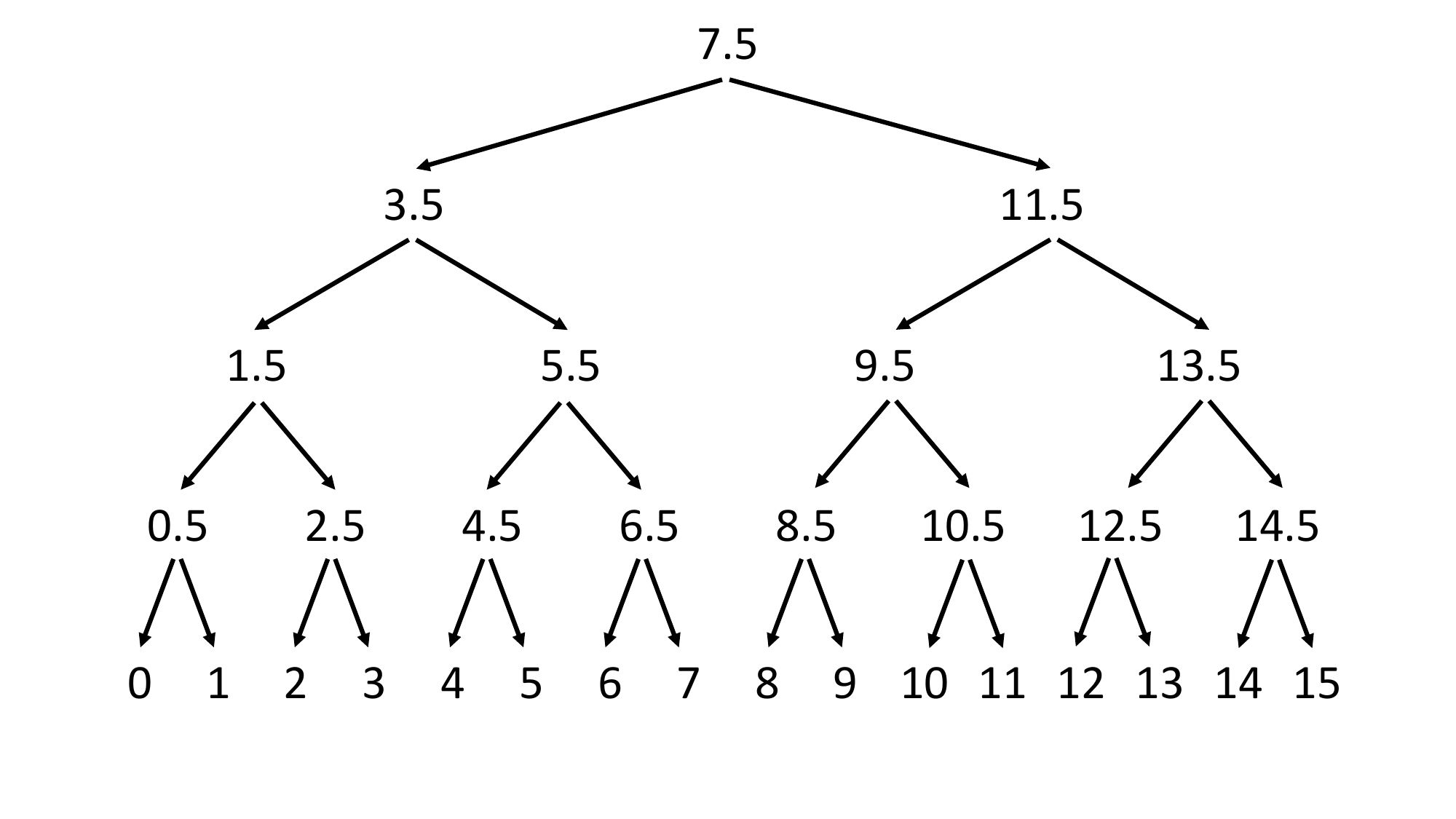}}
\caption{The tree $T(4)$, which is all the runs of the binary search algorithm for $0\le i\le 15$. Suppose we search for the hidden number $i=9$. We start from the tree's root, and the first query is ``Is $i>7.5$''. The answer is yes, and we move to the right son of the root. The following query is ``Is $i>11.5$'' the answer is no, and we move to the left son. Etc.}
\label{Tree}
\end{figure}

We will do a binary search for an integer close to $\log(d/D)$ in the tree $T(\tau)$. 

The algorithm is the following
\begin{enumerate}
    \item Let $\ell=0; r=2^\tau-1;$
    \item While $\ell\not=r$ do
    \item \hspace{.3in} Let $m=(\ell+r)/2$
    \item \hspace{.3in}  Let each $x\in [n]$ be chosen to be in the test $Q$ with probability $1-2^{-1/(2^mD)}$.
    \item \hspace{.3in} If $T_I(Q)=1$ then $\ell= \lceil m\rceil$ else $r=\lfloor m \rfloor$.
    \item Output $D':=D2^\ell$.
\end{enumerate}

We first prove
\begin{lemma}\label{path}
Consider $T(\tau)$ for some integer $\tau$. Consider an integer $0\le i\le 2^\tau-1$ and the path $P_i$ in $T(\tau)$ from the root to the leave $i$. Then 
\begin{enumerate}
    \item $P_i$ passes through a node labeled with $i-1/2$, and the next node in $P_i$ is its right son.
    \item $P_i$ passes through a node labeled with $i+1/2$, and the next node in $P_i$ is its left son.
\end{enumerate}
\end{lemma}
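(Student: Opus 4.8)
The claim is a purely combinatorial fact about the binary-search tree $T(\tau)$, so the natural approach is induction on $\tau$. The plan is to set up the recursive structure of $T(\tau)$ explicitly: the root is labeled $m_0 = (2^\tau-1)/2$, its left subtree is (a relabeled copy of) $T(\tau-1)$ governing the range $\{0,\ldots,2^{\tau-1}-1\}$, and its right subtree is a copy of $T(\tau-1)$ shifted by $2^{\tau-1}$, governing $\{2^{\tau-1},\ldots,2^\tau-1\}$. I would first dispose of the base case $\tau = 1$ (two leaves $0,1$, one internal node labeled $1/2$): for $i=0$ the path is root then left son, and the root is labeled $0+1/2$; for $i=1$ the path is root then right son, and the root is labeled $1-1/2$; in each case the node with the ``missing'' label $i\mp 1/2$ simply does not occur on the path, which is fine since the statement only asserts existence of the one relevant node.

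For the inductive step, fix $0\le i\le 2^\tau-1$ and compare $i$ with the root label $m_0=(2^\tau-1)/2$. Consider statement~(1), that $P_i$ passes through a node labeled $i-1/2$ whose next node on $P_i$ is its right son. If $i \le 2^{\tau-1}-1$, the query ``Is $i>m_0$?'' is answered no (since $i \le 2^{\tau-1}-1 < m_0$), so $P_i$ descends to the left subtree, which is $T(\tau-1)$ on $\{0,\ldots,2^{\tau-1}-1\}$ with the same labels; the induction hypothesis applied there gives the node labeled $i-1/2$ with $P_i$ continuing to its right son, unless $i=0$, in which case there is no such node to worry about. If $i \ge 2^{\tau-1}$, then the query is answered yes, so $P_i$ descends to the right subtree, which is $T(\tau-1)$ on $\{0,\ldots,2^{\tau-1}-1\}$ shifted by $2^{\tau-1}$; applying the induction hypothesis to $i' = i - 2^{\tau-1} \in \{0,\ldots,2^{\tau-1}-1\}$ and shifting labels back by $2^{\tau-1}$ yields the node labeled $i-1/2$ and the continuation to its right son --- \emph{except} when $i' = 0$, i.e. $i = 2^{\tau-1}$, where the induction hypothesis produces nothing. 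But in that borderline case the root itself is labeled $m_0 = (2^\tau-1)/2 = 2^{\tau-1} - 1/2 = i - 1/2$, and $P_i$ leaves the root by its right son, so statement~(1) holds at the root. Statement~(2) is handled symmetrically: compare $i$ with $m_0$, recurse into the appropriate subtree, and note that the one troublesome case $i = 2^{\tau-1}-1$ (where the right-subtree induction would fail at its top) is covered directly because the root is labeled $i+1/2 = 2^{\tau-1}-1/2 = m_0$ and $P_i$ continues to the root's left son.

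The main thing to get right --- and the only place the argument is more than bookkeeping --- is the boundary between the two halves: the value $i = 2^{\tau-1}$ for part~(1) and $i = 2^{\tau-1}-1$ for part~(2), where the relevant neighbor label $i \mp 1/2$ coincides with the root label rather than appearing deeper in a subtree. I would make sure the induction hypothesis is stated in the ``exists a node'' form (not ``the path contains exactly these half-integer-labeled nodes''), so that the cases $i=0$ and $i=2^\tau-1$, where one of the two neighbors lies outside $[0,2^\tau-1]$ and never appears, cause no trouble. Everything else is a routine unfolding of the recursive definition of $T(\tau)$, and no real estimates are involved.
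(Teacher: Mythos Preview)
Your induction on $\tau$ is correct and would yield a complete proof once written out; the one small slip is in your description of statement~(2), where you write ``the right-subtree induction would fail at its top'' for $i=2^{\tau-1}-1$, but that value of $i$ actually lands in the \emph{left} subtree. The fix you give (the root itself carries the label $i+1/2=m_0$ and $P_i$ proceeds to its left son) is correct regardless, so this is purely a wording issue.

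The paper's proof takes a genuinely different and much shorter route: a distinguishability argument. Since the binary search terminates at a single leaf, the path $P_i$ must at some point separate $i$ from $i-1$; the only query ``Is the hidden number $>m$?'' that distinguishes them is the one with $m=i-1/2$, so that node must lie on $P_i$, and there the answer for $i$ is ``yes'', forcing the right son. That is the entire argument, with part~(2) symmetric. Your inductive approach buys explicit structural control (you know the relevant node is either inherited from a subtree or is the root itself) and handles the boundary values $i\in\{0,2^\tau-1\}$ more carefully than the paper, which leaves them implicit; the cost is that it is several times longer than the paper's one-liner.
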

\begin{proof} 
If the path does not go through the node labeled with $i-1/2$ (resp. $i+1/2$), then, in the search, we cannot distinguish between $i$ and $i-1$ (resp. $i+1$). Obviously, if we search for $i$ and reach the node labeled with $i-1/2$, the next node in the binary search is the right son. 
\qed
\end{proof}

Now, by Lemma~\ref{path}, if the algorithm outputs $\ell'$, then there is a node labeled with $m=\ell'-1/2$ that the algorithm went through, and the answer to the test was $1$. That is, the algorithm continues to the right node. 
\begin{eqnarray*}
\Pr\left[D'>\frac{8d}{\delta}\right]&=& 
\Pr\left[D2^\ell>\frac{8d}{\delta}\right]
=\Pr\left[\ell>\log\frac{d}{D}+\log\frac{8}{\delta}\right]\\
&=& \sum_{\ell'=\lceil\log(d/D)+\log(8/\delta)\rceil}^{2^\tau-1} \Pr[\ell=\ell']\\
&=& \sum_{\ell'=\lceil\log(d/D)+\log(8/\delta)\rceil}^{2^\tau-1} \Pr[\mbox{Answer in node labeled with\ }m=\ell'-1/2 \mbox{\ is $1$}]\\
&=&\sum_{\ell'=\lceil\log(d/D)+\log(8/\delta)\rceil}^{2^\tau-1} 1-2^{-d/(2^{\ell'-1/2}D)}\\
\\
&\le&\sum_{\ell'=\lceil\log(d/D)+\log(8/\delta)\rceil}^{2^\tau-1} \frac{d}{D2^{\ell'-1/2}}\le \frac{\delta}{4}+\frac{\delta}{8}\cdots\le \frac{\delta}{2}.\\
\end{eqnarray*}

By Lemma~\ref{path}, if the algorithm outputs $\ell'$, then there is a node labeled with $m=\ell'+1/2$ that the algorithm went through, and the answer to the test was $0$. 
\begin{eqnarray*}
\Pr\left[D'<\frac{\delta d}{8}\right]&=& 
\Pr\left[D2^\ell<\frac{\delta d}{8}\right]
=\Pr\left[\ell<\log\frac{d}{D}-\log\frac{8}{\delta}\right]\\
&=& \sum_{\ell'=0}^{\lfloor\log(d/D)-\log(8/\delta)\rfloor} \Pr[\ell=\ell']\\
&=& \sum_{\ell'=0}^{\lfloor\log(d/D)-\log(8/\delta)\rfloor}\Pr[\mbox{Answer in node labeled with \ }m=\ell'+1/2 \mbox{\ is $0$}]\\
&=&\sum_{\ell'=0}^{\lfloor\log(d/D)-\log(8/\delta)\rfloor} 2^{-d/(D2^{\ell'+1/2})}\\
\\
&\le&2^{-4/\delta}+2^{-8/\delta}+2^{-16/\delta}+\cdots\le \frac{\delta}{4}+\frac{\delta}{8}+\cdots=\frac{\delta}{2}.\\
\end{eqnarray*}
Therefore, with probability at least $1-\delta$, $D'$ satisfies $d\delta/8\le D'\le 8d/\delta$. 

\begin{lemma}\label{yy2}
The number of tests of the algorithm is $\log\log (n/d)+O(\log\log(1/\delta))$.
\end{lemma}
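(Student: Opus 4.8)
\textbf{Proof proposal for Lemma~\ref{yy2}.}
The plan is to simply count the number of tests made by the second adaptive algorithm and bound the parameter $\tau$ using the guarantee~(\ref{TheD}) on $D$. First I would observe that the while loop of the algorithm is exactly a binary search for an integer in $\{0,1,\ldots,2^\tau-1\}$: it maintains an interval $[\ell,r]$ which starts with $r-\ell=2^\tau-1$ and is (essentially) halved in each iteration, so the loop terminates after $\tau$ iterations, and each iteration makes exactly one test. Hence the algorithm makes exactly $\tau=\lceil\log(1+\log H)\rceil$ tests, where $H=\sqrt{\frac{4\log^2(2/\delta)}{\delta}\cdot\frac{n}{D}}$.

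Next I would plug in the lower bound on $D$. By~(\ref{TheD}) we have $D\ge \frac{\delta d^2}{4n\log^2(2/\delta)}$, hence $\frac{n}{D}\le \frac{4n^2\log^2(2/\delta)}{\delta d^2}$ and therefore
$$H\le \sqrt{\frac{4\log^2(2/\delta)}{\delta}\cdot\frac{4n^2\log^2(2/\delta)}{\delta d^2}}=\frac{4n\log^2(2/\delta)}{\delta d}.$$
Taking logarithms, $\log H\le \log(n/d)+2\log\log(2/\delta)+\log(4/\delta)=\log(n/d)+O(\log(1/\delta))$.

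Finally I would apply the double-logarithm bound: for $a,b\ge 1$, $\log(a+b)\le 1+\log a+\log b$, so with $a=\log(n/d)$ and $b=O(\log(1/\delta))$ (both at least a constant, after padding),
$$\tau=\lceil\log(1+\log H)\rceil\le \log\bigl(\log(n/d)+O(\log(1/\delta))\bigr)+O(1)\le \log\log(n/d)+O(\log\log(1/\delta)),$$
which is the claimed bound. I do not expect a real obstacle here; the only point requiring a little care is the arithmetic with nested logarithms (splitting $\log(a+b)$ and tracking the constants buried in the $O(\cdot)$ terms), and making sure the edge cases where $\log(n/d)$ or $\log(1/\delta)$ is small are absorbed into the additive constant.
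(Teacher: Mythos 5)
Your proposal is correct and follows essentially the same route as the paper: the test count is bounded by $\tau+O(1)$, the lower bound on $D$ from~(\ref{TheD}) is substituted into $H$, and the resulting double logarithm is split as $\log\log(n/d)+O(\log\log(1/\delta))$. The only difference is cosmetic bookkeeping (you track the constants and the $\log(a+b)\le 1+\log a+\log b$ step explicitly, which the paper leaves implicit).
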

\begin{proof}
Since, by (\ref{TheD}),  ${\delta d^2}/({4n\log^2(2/\delta)})\le D$, the number of tests is 
\begin{eqnarray*}
   \tau+1&\le& \log\log H+3\\
   &\le& 3+\log \log \sqrt{\frac{4\log^2(2/\delta)}{\delta}\frac{n}{D}} \\
   &\le& 3+\log\log \left(\frac{4\log^2\frac{2}{\delta}}{\delta}\cdot\frac{n}{d}\right)=\log\log\frac{n}{d}+O\left(\log\log\frac{1}{\delta}\right).
\end{eqnarray*}
\qed
\end{proof}

Finally, given $D'$ that satisfies  $d\delta/8\le D'\le 8d/\delta$, Falahatgar et al.~\cite{FalahatgarJOPS16} presented an algorithm that, for any constant $\epsilon>0$, makes $O(\log(1/\delta))$ queries and, with probability at least $1-\delta$, returns an integer $D''$ that satisfies $(1-\epsilon)d\le D''\le (1+\epsilon)d$.

By Lemma~\ref{yy1} and~\ref{yy2}, Lemma~\ref{Estd2} follows.

One way to prove Lemma~\ref{Estd} is by running the algorithms in Lemma~\ref{Estd0} and Lemma~\ref{Estd2} in parallel, one step in each algorithm, and halt when one of them halts. Another way is by using the following result.

\begin{lemma}\label{EEE}
Let $d$ and $m$ be integers, and $\epsilon\le 1$ be any real number. There is a non-adaptive randomized algorithm that makes $O((1/\epsilon^2)\log(1/\delta))$ tests and
\begin{itemize}
\item If $d<m$ then, with probability at least $1-\delta$, the algorithm returns $0$.
\item If $d>(1+\epsilon)m$, then, with probability at least $1-\delta$, the algorithm returns $1$.
\item If $m\le d\le (1+\epsilon)m$ then, the algorithm returns $0$ or $1$.
\end{itemize}
\end{lemma}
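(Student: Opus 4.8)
The plan is to design a simple non-adaptive test that counts positive answers among random pools of a carefully chosen size, and then threshold on that count. First I would fix a pool sampling rate $p$ so that the expected fraction of positive tests is a strictly increasing function of $d$ that separates the two regimes $d<m$ and $d>(1+\epsilon)m$. Concretely, include each item of $[n]$ in a pool $Q$ independently with probability $p$; then $\Pr[T_I(Q)=1]=1-(1-p)^{d}$, which is monotone in $d$. Choosing $p=\Theta(1/m)$ makes $1-(1-p)^{m}$ a constant bounded away from $0$ and $1$, and the gap between $1-(1-p)^{m}$ and $1-(1-p)^{(1+\epsilon)m}$ is $\Theta(\epsilon)$ for small $\epsilon$. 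So the decision problem reduces to distinguishing two Bernoulli parameters that differ by $\Theta(\epsilon)$.

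Next I would take $N=O((1/\epsilon^2)\log(1/\delta))$ independent such pools $Q_1,\ldots,Q_N$, let $Z=\sum_{i=1}^N T_I(Q_i)$, and output $1$ if $Z$ exceeds a threshold $\theta N$ placed midway between the two expected values $N(1-(1-p)^{m})$ and $N(1-(1-p)^{(1+\epsilon)m})$, and $0$ otherwise. By Chernoff's bound (Lemma~\ref{Chernoff}), if $d<m$ the expectation of $Z/N$ is at most $1-(1-p)^{m}$, which is below $\theta$ by an additive $\Omega(\epsilon)$ margin, so $\Pr[Z>\theta N]\le e^{-\Omega(\epsilon^2 N)}\le\delta$; symmetrically, if $d>(1+\epsilon)m$ the expectation is at least $1-(1-p)^{(1+\epsilon)m}$, above $\theta$ by $\Omega(\epsilon)$, so the algorithm returns $1$ with probability at least $1-\delta$. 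In the intermediate range $m\le d\le(1+\epsilon)m$ no guarantee is claimed, matching the statement. The number of tests is exactly $N=O((1/\epsilon^2)\log(1/\delta))$, and the algorithm is non-adaptive since all pools are specified in advance and only the final count is used.

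The main obstacle is handling the full range of $d$, not just values near $m$: when $d\gg m$ the positive probability $1-(1-p)^d$ is essentially $1$, which is fine (it only helps the test return $1$), but when $d$ is, say, only slightly below $m$, the margin from the threshold is still $\Omega(\epsilon)$ and not smaller, because $(1-p)^d-(1-p)^{(1+\epsilon)m}$ for $d\le m$ is minimized at $d=m$. Verifying that $1-(1-p)^{x}$ as a function of $x$ has derivative $\Theta(p)=\Theta(1/m)$ throughout the relevant window, so that the additive gap over an interval of length $\epsilon m$ really is $\Theta(\epsilon)$ (and does not degrade for either very small or moderately large $d$), is the one estimate that needs care; everything else is a routine Chernoff concentration argument with the threshold chosen as the midpoint of the two target expectations.

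\begin{proof}
Let $p=1/m$ and run $N=c(1/\epsilon^2)\log(2/\delta)$ independent non-adaptive tests $Q_1,\ldots,Q_N$, where each item of $[n]$ is placed in $Q_i$ independently with probability $p$; here $c$ is a suitable absolute constant. For a set of defectives $I$ with $|I|=d$ we have $\Pr[T_I(Q_i)=1]=1-(1-1/m)^{d}=:q(d)$, and the $T_I(Q_i)$ are independent. Let $Z=\sum_{i=1}^N T_I(Q_i)$ and set the threshold $\theta=\tfrac12\bigl(q(m)+q((1+\epsilon)m)\bigr)$. The algorithm outputs $1$ if $Z\ge \theta N$ and $0$ otherwise.

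Since $q$ is increasing and $q(m)-q((1+\epsilon)m)=(1-1/m)^{m}\bigl((1-1/m)^{-\epsilon m}-1\bigr)$, and $(1-1/m)^{m}\ge 1/4$ while $(1-1/m)^{-\epsilon m}-1\ge \epsilon$ for $0<\epsilon\le 1$ and $m\ge 2$, we get $q((1+\epsilon)m)-q(m)\ge \epsilon/4$; hence $\theta$ lies at additive distance at least $\epsilon/8$ from each of $q(m)$ and $q((1+\epsilon)m)$.

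If $d<m$ then $\E[Z]=Nq(d)\le Nq(m)\le N(\theta-\epsilon/8)$, so by Chernoff's bound (Lemma~\ref{Chernoff}) $\Pr[Z\ge\theta N]\le e^{-\Omega(\epsilon^2 N)}\le\delta$ for $c$ large enough, and the algorithm returns $0$ with probability at least $1-\delta$. If $d>(1+\epsilon)m$ then $\E[Z]=Nq(d)\ge Nq((1+\epsilon)m)\ge N(\theta+\epsilon/8)$, so again by Chernoff's bound $\Pr[Z<\theta N]\le e^{-\Omega(\epsilon^2 N)}\le\delta$, and the algorithm returns $1$ with probability at least $1-\delta$. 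No claim is made when $m\le d\le(1+\epsilon)m$. The algorithm is non-adaptive and uses $N=O((1/\epsilon^2)\log(1/\delta))$ tests.\qed
\end{proof}
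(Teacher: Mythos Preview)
Your approach is essentially the same as the paper's: both pick random pools with per-item inclusion probability $\Theta(1/m)$, observe that the probability of a positive (equivalently, negative) answer is monotone in $d$ with a $\Theta(\epsilon)$ gap between the two regimes, and then threshold an empirical average via Chernoff with $O((1/\epsilon^2)\log(1/\delta))$ repetitions. The paper just uses the slightly different rate $1-(1+\epsilon)^{-1/(m\epsilon)}$ and phrases the last step as ``estimate $\Pr[T_I(Q)=0]$ to additive accuracy $\epsilon/(8e)$'', which is exactly your midpoint threshold in disguise.

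One small slip to fix: your displayed factorization is not correct. You wrote $q(m)-q((1+\epsilon)m)=(1-1/m)^{m}\bigl((1-1/m)^{-\epsilon m}-1\bigr)$, but in fact
\[
q((1+\epsilon)m)-q(m)=(1-1/m)^{m}\bigl(1-(1-1/m)^{\epsilon m}\bigr),
\]
and since $(1-1/m)^{m}\ge 1/4$ and $(1-1/m)^{\epsilon m}\le e^{-\epsilon}$ one gets $q((1+\epsilon)m)-q(m)\ge \tfrac14(1-e^{-\epsilon})\ge \epsilon/8$ for $0<\epsilon\le 1$. This is $\epsilon/8$ rather than your claimed $\epsilon/4$, which only changes the absolute constant $c$ in $N$; the rest of the argument goes through unchanged.
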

\begin{proof}
Consider a random test $Q\subseteq X$ where each $x\in X$ is chosen to be in $Q$ with probability $1-(1+\epsilon)^{-1/(m\epsilon)}$. The probability that $T_I(Q)=0$ is $(1+\epsilon)^{-d/(m\epsilon)}$. Since 
\begin{eqnarray*}
\Pr[T_I(Q)=0|d<m]-\Pr[T_I(Q)=0|d>(1+\epsilon)m]&\ge& (1+\epsilon)^{-1/\epsilon}-(1+\epsilon)^{-(1+\epsilon)/\epsilon}\\
&=&(1+\epsilon)^{-1/\epsilon}\frac{\epsilon}{1+\epsilon}\\
&\ge& \frac{\epsilon}{2e}.
\end{eqnarray*}
By Chernoff's bound, we can, with probability at least $1-\delta$, estimate $\Pr[T_I(Q)=0]$ up to an additive error of $\epsilon/(8e)$ using $O((1/\epsilon^2)\log(1/\delta))$ tests. If the estimation is less than $(1+\epsilon)^{-(1+\epsilon)/\epsilon}+\epsilon/(4e)$ we output $0$. Otherwise, we output $1$. This implies the result.\qed
\end{proof}
Now, to prove Lemma~\ref{Estd}, we first run the algorithm in Lemma~\ref{EEE} with $m=\sqrt{n}$ and $\epsilon=1$. If the output is $0$ ($d<2\sqrt n$), then we run the algorithm in Lemma~\ref{Estd0}. Otherwise, we run the algorithm in Lemma~\ref{Estd2}. 

\end{document}